\newtheorem{theorem}{Theorem}
\crefname{theorem}{theorem}{theorems}
\Crefname{theorem}{Theorem}{Theorems}
\newtheorem{lemma}{Lemma}
\crefname{lemma}{lemma}{lemmas}
\Crefname{lemma}{Lemma}{Lemmas}
\newtheorem{corollary}{Corollary}
\crefname{corollary}{corollary}{corollaries}
\Crefname{corollary}{Corollary}{Corollaries}
\newtheorem{definition}{Definition}
\crefname{definition}{definition}{definitions}
\Crefname{definition}{Definition}{Definitions}
\newtheorem{problem}{Problem}
\crefname{problem}{problem}{problems}
\Crefname{problem}{Problem}{Problems}
\newtheorem{observation}{Observation}
\crefname{observation}{observation}{observations}
\Crefname{observation}{Observation}{Observations}
\crefname{proposition}{proposition}{propositions}
\Crefname{proposition}{Proposition}{Propositions}
\newtheorem{assumption}{Assumption}
\crefname{assumption}{assumption}{assumptions}
\Crefname{assumption}{Assumption}{Assumptions}
\newcommand{\ltdots}{..}
\newcommand{\gap}{\mathtt{-}}
\newcommand\rank{\mathrm{rank}}
\newcommand\select{\mathrm{select}}
\newcommand\msa{\textsf{MSA}\xspace}
\newcommand\msas{\textsf{MSA}s\xspace} 
\newcommand\msamn{\textsf{MSA}$[1\ltdots m,1\ltdots n]$\xspace}
\newcommand\msaij[2]{\mathsf{MSA}[#1,#2]\xspace}
\newcommand\efg{\textsf{EFG}\xspace}
\newcommand\efgs{\textsf{EFG}s\xspace}
\newcommand\spell{\mathsf{spell}\xspace}
\newcommand\sa{\mathsf{SA}\xspace}
\newcommand\mbwt{\mathsf{BWT}\xspace}
\newcommand\bwt{\textsf{BWT}\xspace}
\newcommand\gst{$\mathsf{GST}_\mathsf{MSA}$\xspace}
\newcommand\mgst{\mathsf{GST}_\mathsf{MSA}\xspace}
\newcommand\gpt{$\mathsf{GPT}_\mathsf{MSA}$\xspace}
\newcommand\mgpt{\mathsf{GPT}_\mathsf{MSA}\xspace}
\newcommand\atrue{\mathrm{\mathbf{true}}\xspace}
\newcommand\afalse{\mathrm{\mathbf{false}}\xspace}
\newcommand{\rroot}{\mathrm{root}}
\DeclareMathOperator{\pos}{pos}
\DeclareMathOperator{\cchar}{char}
\DeclareMathOperator{\sstring}{string}
\DeclareMathOperator{\parent}{parent}
\DeclareMathOperator{\stringdepth}{stringdepth}
\DeclareMathOperator{\rrank}{rank}
\newcommand{\pah}{\overline{H}}
\newcommand{\leftmostleaf}{\mathrm{leftmostleaf}}
\newcommand{\rightmostleaf}{\mathrm{rightmostleaf}}
\newcommand{\nextleaf}{\mathrm{nextleaf}}
\newcommand{\prevleaf}{\mathrm{prevleaf}}
\newcommand{\suffixlink}{\mathrm{suffixlink}}
\newcommand{\0}{\mathbf{0}}
\DeclarePairedDelimiter\ceil{\lceil}{\rceil}
\DeclareMathOperator*{\argmin}{\arg\!\min}
\newcommand{\prefix}{\preceq}
\newcommand{\properprefix}{\prec}
\title{Elastic Founder Graphs Improved and Enhanced\footnote{This is an extension of IWOCA 2022 \cite{RM22a} and CPM 2022 \cite{RM22b} conference papers, and of some results from the PhD dissertation projects of Massimo Equi \cite{Equi22} and Tuukka Norri \cite{Norri2022FounderSegmentations}.}}
\date{
Department of Computer Science, University of Helsinki, Finland
}
\author{
Nicola Rizzo\,\orcidlink{0000-0002-2035-6309}\\\texttt{nicola.rizzo@helsinki.fi} \and Massimo Equi\\\texttt{massimo.equi@helsinki.fi} \and Tuukka Norri\,\orcidlink{0000-0002-8276-0585}\\\texttt{tuukka.norri@helsinki.fi} \and Veli M\"akinen\,\orcidlink{0000-0003-4454-1493}\\\texttt{veli.makinen@helsinki.fi}
}
\begin{document}
\maketitle

\begin{abstract}
    Indexing labeled graphs for pattern matching is a central challenge of pangenomics.
    Equi et al.\ (Algorithmica, 2022) developed the \emph{Elastic Founder Graph} (\efg) representing an \emph{alignment} of $m$ sequences of length $n$, drawn from alphabet $\Sigma$ plus the special gap character: the paths spell the original sequences or their \emph{recombination}.
    By enforcing the \emph{semi-repeat-free} property, the \efg admits a polynomial-space index for linear-time pattern matching, breaking through the conditional lower bounds on indexing labeled graphs (Equi et al., SOFSEM 2021).
    In this work we improve the space of the \efg index answering
    pattern matching queries in linear time, from linear in the length of all strings spelled by three consecutive node labels, to linear in the size of the edge labels. Then, we develop linear-time construction algorithms optimizing for different metrics:
    we improve the existing linearithmic construction algorithms to $O(mn)$, by solving the novel \emph{exclusive ancestor set} problem on trees; we propose, for the simplified gapless setting, an $O(mn)$-time solution minimizing the maximum block height, that we generalize by substituting block height with \emph{prefix-aware height}.
    Finally, to show the versatility of the framework, we develop a BWT-based \efg index and study how to encode and perform document listing queries on a set of paths of the graphs, reporting which paths present a given pattern as a substring.
    We propose the \efg framework as an improved and enhanced version of the framework for the gapless setting, along with construction methods that are valid in any setting concerned with the segmentation of aligned sequences.
\\[1ex]\textbf{Keywords} multiple sequence alignment, pattern matching, segmentation algorithm, suffix tree, document listing
\\[1ex]\textbf{Funding} This work is partially funded by the European Union’s Horizon 2020 research and innovation programme under the Marie Skłodowska-Curie grant agreement No 956229 (ALPACA), under the European Research Council (ERC) grant agreement No.~851093 (SAFEBIO), by the Helsinki Institute for Information Technology (HIIT), and by the Academy of Finland (grant 351149).
\end{abstract}

\section{Introduction}
Searching strings in a graph has become a central problem along with the development of high-throughput sequencing techniques. Namely, thousands of human genomes are now available, forming a so-called \emph{pangenome} of a species \cite{DBLP:journals/bib/Consortium18}. Such pangenome can be used to enhance various analysis tasks that have previously been conducted with a single reference genome \cite{MNSV09jcb,Sch09,SVM14,Garetal18,hisat2,graphtyper2,Norrietal21}.
The most popular representation for a pangenome is a graph, whose paths spell the input genomes \cite{DBLP:journals/bib/Consortium18}. The basic primitive required on such pangenome graphs is to be able to search occurrences of query strings (short reads) as subpaths of the graph. Unfortunately, even finding exact matches of a query string of length $q$ in a graph with $e$ edges cannot be done significantly faster than $O(qe)$ time, and no index built in polynomial time allows for subquadratic-time string matching unless the Orthogonal Vectors Hypothesis (OVH) is false \cite{EMTSOFSEM2021,EGMT19}.
Therefore, practical tools deploy various heuristics or use other pangenome representations as a basis.

Due to the difficulty of string search in general graphs, Equi et al.~\cite{Equietal22} studied graphs obtained from \emph{multiple sequence alignments} (\msas), where an \msamn is a matrix composed of $m$ aligned rows that are strings of length $n$, drawn from an alphabet $\Sigma$ plus a special gap symbol $\gap$.
As we describe in \Cref{sect:definitions}, any segmentation of an \msa naturally induces a graph consisting of labeled nodes, partitioned into blocks.
The edges connect consecutive blocks and the original sequences are spelled as paths of this graph, but the graph enables the spelling of new sequences that are a \emph{recombination} of them.
Such \emph{elastic founder graph} (\efg) is illustrated in Figure~\ref{fig:segmentation}.
The key observation is that if the resulting node labels do not appear as a prefix of any other path than those starting at the same block, then the so-called \emph{semi-repeat-free} property holds and there is an index structure for the graph---based on the strings spelled by paths of a certain length---computable in polynomial time and supporting fast pattern matching.
Equi et al.~\cite{Equietal22} also showed that such indexability property is required, as in general the OVH-based lower bound holds for \efgs derived from \msas.

Moreover, an optimal segmentation resulting in a semi-repeat-free \efg is also computable in polynomial time.
For the simplified gapless setting, where the semi-repeat-free property is equivalent to the stronger \emph{repeat-free} property, we have two linear-time construction algorithms:
M\"akinen et al.~\cite{MCENT20}
gave an $O(mn)$ time algorithm to construct an indexable \efg minimizing its maximum block length, given a gapless \msamn; Equi et al.~\cite{Equietal22} reached the same complexity for the construction of an optimal \efg maximizing the number of blocks (which is a non-trivial task under the semi-repeat-free requirement).
In the example of \Cref{fig:segmentation}, these measures are equal to $5$ and $3$, respectively, since there are $3$ blocks defined by segments of length at most $5$.
The study of a third optimization criterion, consisting in minimizing the maximum block height, was left as an open case. This measure is defined as the maximum number of distinct strings segmented into a single block, and in the example of \Cref{fig:segmentation} this measure is equal to $3$, as the blocks contain at most $3$ nodes.
Equi et al.~\cite{Equietal22} also extended the two algorithms to general \msas aligned with gaps, obtaining an $O(mn \log m)$-time preprocessing algorithm which allows the construction of semi-repeat-free \efgs maximizing the number of blocks and, alternatively, minimizing the maximum length of a block, in $O(n)$ and in $O(n \log\log n)$ time, respectively.
We recall these results in \Cref{sect:efgconstruction}, and we refer the reader to the aforementioned papers for the connections between \efgs, Elastic Degenerate Strings, and Wheeler Graphs.

\begin{figure}
    \centering
    \includegraphics{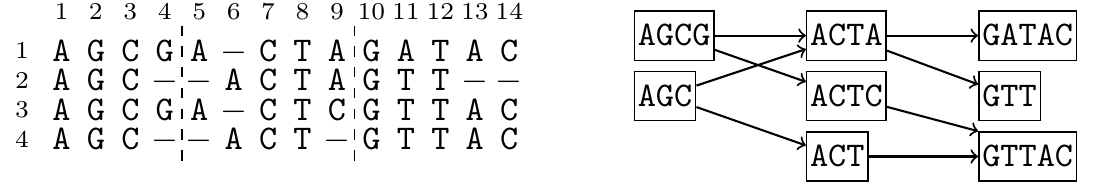}
    \caption{An elastic founder graph (right) induced from the segmentation of a multiple sequence alignment (left).
    The segmentation is visualized as a collection of vertical cuts (dashed lines).
    The \efg is \emph{semi-repeat-free}, that is, each node label occurs in the graph only as prefix of paths starting from the node's block. Thus, it is indexable for linear-time pattern matching.
    \label{fig:segmentation}}
\end{figure}

In this paper, we extend the theory on indexable \efgs as follows:
\begin{itemize}
    \item In \Cref{sect:index} we provide an index for \efgs capable of answering if a given pattern $Q$ appears as a subpath in the graph in time $O(\lvert Q \rvert)$, where the size of the index is linear in the space occupied by the concatenation of the edge labels. This keeps the same time complexity for queries while improving the space of the index by Equi et al.~\cite{Equietal22}, which is linear in the size of the strings spelled by paths involving three nodes.
    \item In \Cref{sect:efgconstruction} we study the construction of \efgs and we improve the $O(n \log \log n)$-time solution for minimizing the maximum segment length to $O(n)$ time, after the common preprocessing of the \msa for finding the valid segments.
    Moreover, we optimize for the \emph{height} of the resulting \efg.
    For the gapless case, we show how the left-to-right segmentation algorithm by Norri et al.\ exploiting \emph{left extensions}~\cite{NCKM19} can be combined with the computation of the \emph{minimal left extensions} by Equi et al.\ \cite{MCENT20} to obtain an $O(mn)$ algorithm in the case where $\lvert \Sigma \rvert \in O(m)$.
    Since left extensions cannot be used in the general and semi-repeat-free case, we develop an equivalent left-to-right solution exploiting \emph{meaningful right extensions} that is also correct in the case with gaps.
    However, the number of these extensions is $O(mn^2)$, and computing them takes $O(m n \alpha \log \lvert \Sigma \rvert)$ time, where $\alpha$ is the length of the longest run in the \msa where a row spells a prefix of the string spelled by another row, with $\alpha \in \Theta(n)$ in the worst case.
    Hence, we continue with a different generalization of the gapless height that we call \emph{prefix-aware height}, equal to the maximum number of distinct strings in a block but omitting strings that are prefixes of others strings of the block.
    In the example of \Cref{fig:segmentation}, this measure is equal to 2.
    \item In \Cref{sect:preprocessing} we study the preprocessing of the \msa for the segmentation algorithms.
    We improve the preprocessing algorithm finding the valid segments from $O(mn \log n)$ to $O(mn)$ time by performing an in-depth analysis of the existing solution based on the generalized suffix tree \gst of the gaps-removed \msa rows.
    Although removing gaps constitutes a loss of essential information, this information can be fed back into the structure by considering the right subsets of its nodes or leaves.
    Then, the main step in preprocessing the \msa is solving a novel ancestor problem on the tree structure of \gst that we call the \emph{exclusive ancestor set problem}, and as one of our contributions we identify this problem and provide a linear-time solution.
    The number of meaningful right extensions is $O(mn)$ for the prefix-aware height, and we obtain a $O(mn)$-time solution for computing them.
    The linear time is achieved thanks to the computation of the generalized suffix tree built from the \msa rows, its symmetrical prefix tree counterpart, and the constant-time navigation between the two offered by the suffix tree data structure of Belazzougui et al.\ answering weighted ancestor queries \cite{DBLP:conf/cpm/BelazzouguiKPR21}.
    \item Finally, in \Cref{sect:indexing-efg-with-paths} we develop a BWT-based variant of the \efg index enhanced with a set of paths $\mathcal{P}$ to solve a problem analogous to document listing queries \cite{Mut02}, namely, to report which paths present a given pattern as a substring. With this formulation one can, for example, restrict pattern matching along the rows of the original multiple sequence alignment, making the index functionally equivalent to those solving document listing on repetitive genome collections \cite{CN19,CMR20}.
\end{itemize}
In \Cref{sect:discussion}, we summarize and discuss our results.

\section{Definitions}\label{sect:definitions}
We follow the notation of Equi et al.~\cite{Equietal22}.

\paragraph{Strings}
We denote integer intervals by $[x..y]$. Let $\Sigma = [1..\sigma]$ be an alphabet of size $\lvert \Sigma \rvert = \sigma$.
A \emph{string} $T[1..n]$ is a sequence of symbols from $\Sigma$, in symbols $T\in \Sigma^n$, where 
$\Sigma^n$ denotes the set of strings of length $n$ over $\Sigma$.
In this paper, we assume that $\sigma$ is always smaller or equal to the length of the strings we are working with.
The \emph{reverse} of $T$, denoted with $T^{-1}$, is the string $T$ read from right to left.
We denote by $T[x..y]$ the \emph{substring} of $T$ made of the concatenation of its characters from the $x$-th to the $y$-th.
A \emph{suffix} (\emph{prefix}) of string $T[1..n]$ is $T[x..n]$ ($T[1..y]$) for $1\leq x \leq n$ ($1 \leq y \leq n$) and we say it is \emph{proper} if $x > 1$ ($y < n$).
If string $S$ is a prefix of string $T$, we write $S\prefix T$, and we write $S \properprefix T$ if $S$ is a proper prefix of $T$.
The \emph{length} of a string $T$ is denoted $|T|$ and the \emph{empty string} $\varepsilon$ is the string of length $0$.
In particular, substring $T[i..j]$ where $j<i$ is the empty string.
We denote with $\Sigma^*$ and $\Sigma^+$ the set of finite strings and finite non-empty strings over $\Sigma$, respectively.
String $Q$ \emph{occurs} in $T$ if $Q = T[x..y]$ for some interval $[x..y]$; in this case, we say that the starting position $x$ is an \emph{occurrence} of $Q$ in $T$, and $y = x + \lvert Q \rvert - 1$ is the \emph{ending position} of such occurrence.
The \emph{lexicographic order} of two strings $A$ and $B$ is naturally defined by the order of the alphabet: $A<B$ if and only if $A[1..y]=B[1..y]$ and $A[y+1]<B[y+1]$ for some $y \geq 0$.
If $y + 1 > \min(|A|,|B|)$, then the shorter one is regarded as smaller. However, we usually avoid this implicit comparison by adding an \emph{end marker} $\$$ to the strings, where $\$$ does not occur in any of the strings, and we consider $\$$ to be the smallest character lexicographically.
The concatenation of strings $A$ and $B$ is denoted as $A \cdot B$, or just $AB$.

\paragraph{Model of Computation} In the following, we assume the word RAM model of computation~\cite{DBLP:conf/stoc/FredmanW90}.

\paragraph{Elastic founder graphs}
\msas can be compactly represented by elastic founder graphs, the vertex-labeled graphs that we formalize in this paragraph.

A \emph{multiple sequence alignment} \msamn is a matrix whose rows are $m$ strings of length $n$ drawn from $\Sigma \cup \{\gap\}$.
Here, $\gap \notin \Sigma$ is the \emph{gap} symbol and we indicate $[1..m]$ and $[1..n]$ the set of row and column indices, respectively.
For a string $T \in \left(\Sigma \cup \{\gap\}\right)^*$, we denote with $\spell(T)$ the string resulting from removing the gap symbols from $T$.
If an \msa does not contain gaps then we say it is \emph{gapless}, otherwise, we say that it is a \emph{general} \msa.
Given $I \subseteq [1..m]$, we denote with $\msaij{I}{1..n}$ the \msa obtained by considering only rows $\msaij{i}{1..n}$ with $i \in I$.

Let $\mathcal{P}$ be a \emph{partitioning} of $[1..n]$, that is, a sequence of subintervals $\mathcal{P}=[x_1..y_1],[x_2..y_2],\dots,[x_b..y_b]$ where $x_1=1$, $y_b=n$, and for all $j>2$, $x_j=y_{j-1}+1$.
A \emph{segmentation} $S$ of \msamn based on partitioning $\mathcal{P}$ is the sequence of $b$ sets $S^k= \{\spell(\msaij{i}{x_k..y_k}) \mid 1\leq i\leq m\}$ for $1\leq k\leq b$;
in addition, we do not allow segments corresponding to a full run of gaps in the \msa.

\begin{assumption}\label{ass:empty}
To obtain a proper segmentation $S_1, \dots, S_b$ of \msamn we require that $\varepsilon \notin S_k$ for any $k \in [1..b]$.
\end{assumption}

We call set $S^k$ a \emph{block} and we refer to $\msaij{1..m}{x_k..y_k}$ or just $[x_k..y_k]$ as a \emph{segment} of $S$, since in the rest of paper $S$ will always be implicitly associated to some partitioning $\mathcal{P}$ and \msamn.
Then, the \emph{length} of a segment $[x..y]$ is simply $y - x + 1$.
The \emph{height} of a block $S^k$ is $H(S^k) \coloneqq \lvert S^k \rvert$, and
given a segment $[x..y]$ we denote the height of the corresponding block---its \emph{segment height}---with $H(\msaij{1..m}{x..y})$ or just $H([x..y])$.  

The segmentation of gapless \msas naturally leads to the definition of a founder graph through the block graph concept.
\begin{definition}[Block Graph]
A \emph{block graph} is a graph $G=(V,E,\ell)$ where $V$ is the set of nodes, $E \subseteq V \times V$ is the set of edges, and $\ell: V \rightarrow \Sigma^+$ is a function that assigns a non-empty string label to every node. In addition, the following properties must hold:
\begin{enumerate}
	\item set $V$ can be partitioned into a sequence of $b$ \emph{blocks} $V^1, V^2, \ldots, V^b$, that is, $V = V^1 \cup V^2 \cup \cdots \cup V^b$ and $V^i \cap V^j = \emptyset$ for all $i\neq j$;
	\item all edges connect consecutive blocks, that is, if $(v,w) \in E$ then $v \in V^i$ and $w \in V^{i+1}$ for some $1 \leq i \leq b-1$; and
	\item the labels of a block are of equal length, that is, if $v,w \in V^i$ then $|\ell(v)| = |\ell(w)|$ and if $v\neq w$ then $\ell(v) \neq \ell(w)$, for each $1 \leq i \leq b$.
\end{enumerate}
If there is only one block, meaning that $b = 1$ and thus $E = \emptyset$, we say that $G$ is \emph{trivial}.
\end{definition}
Block $S^k$ equals segment $\msaij{1..m}{x_k..y_k}$ and the \emph{founder graph} is a block graph induced by segmentation $S$~\cite{MCENT20}.
The idea is to have a graph in which the nodes represent the strings in $S$ while the edges retain the information of how such strings can be recombined to spell any sequence in the original \msa.
Alternatively, in \Cref{sect:indexing-efg-with-paths} we will describe an index data structure for querying the graph considering only a predefined set of paths.

With \emph{general} \msas, we consider the following generalization.
\begin{definition}[Elastic Block Graphs and Founder Graphs]\label{def:efg}
We call a block graph \emph{elastic} if its third condition is relaxed in the sense that each $V^i$ can contain non-empty variable-length strings. An \emph{elastic founder graph} (\efg) is an elastic block graph $G(S) = (V,E,\ell)$ \emph{induced} by a segmentation $S$ as follows: for each $1 \leq k \leq b$ we have $S^k = \{\spell(\msaij{i}{x_k..y_k}) \mid 1\leq i\leq m\} = \{\ell(v) : v \in V^k\}$.
It holds that $(v,w) \in E$ if and only if there exist block $k \in [1 \ltdots b-1]$ and row $i \in [1 \ltdots m]$ such that $v \in V^k$, $w \in V^{k+1}$, and $\spell(\msaij{i}{x_k..y_{k+1}})= \ell(v)\ell(w)$.
\end{definition}

For example, in the general $\msaij{1..4}{1..13}$ of \Cref{fig:segmentation}, the segmentation based on partitioning $[1..4],[5..9],[10..14]$ induces an \efg $G(S) = (V^1 \cup V^2 \cup V^3, E, \ell)$ where the nodes in $V^1$, $V^2$, and $V^3$ have labels of variable length.
As noted by Equi et al.\ \cite{Equietal22}, Block Graphs are connected to Generalized Degenerate Strings \cite{Alzetal20} and Elastic Founder Graphs are connected to Elastic Degenerate Strings \cite{bernardini_et_al2019elastic}.

By definition, (elastic) founder and block graphs are acyclic. For convention, we interpret the direction of the edges as going from left to right. 
Consider a \emph{path} $P = v_1 \cdots v_k$ in $G$ between any two nodes, where we define a path of length $k$ as a sequence of $k$ vertices connected by edges, that is, $(v_1,v_2),(v_2,v_3),\ldots,(v_{k-1},v_k) \in E$. 
The label $\ell(P) \coloneqq \ell(v_1) \cdots \ell(v_k)$ of $P$ is the concatenation of the labels of the nodes in the path.
Let $Q$ be a query string. We say that $Q$ \emph{occurs} in $G(S)$ if $Q$ is a substring of $\ell(P)$ for any path $P$ of $G(S)$.
In this case, for simplicity we say that $Q$ occurs in $G(S)$ as a substring of $P$, to mean a substring of $\ell(P)$.

Finally, we can introduce the key property making $\efgs$ indexable for pattern matching.
\begin{definition}[\cite{MCENT20}]\label{def:repeat-free}
\efg $G(S)$ is \emph{repeat-free} if each $\ell(v)$ for $v\in V$ occurs in $G(S)$ only as a prefix of paths starting with $v$.
\end{definition}
\begin{definition}[\cite{MCENT20}]\label{def:semi-repeat-free}
\efg $G(S)$ is \emph{semi-repeat-free} if each $\ell(v)$ for $v\in V$ occurs in $G(S)$ only as a prefix of paths starting with $w\in V$, where $w$ is from the same block as $v$. 
\end{definition}
For example, the \efg of \Cref{fig:segmentation} is not repeat-free, since $\mathtt{AGC}$ occurs as a prefix of two distinct labels of nodes in the same block, but it is semi-repeat-free since all node labels $\ell(v)$ with $v \in V^k$ occur in $G(S)$ only starting from block $V^k$, or they do not occur at all elsewhere in the graph.
Note that in the gapless setting, or more generally if the \efg is non-elastic, the repeat-free and semi-repeat-free notions are equivalent.

\paragraph{Basic tools}

A \emph{trie} or \emph{keyword tree}~\cite{Bri59} of a set of strings is a rooted directed tree where the outgoing edges of each node are labeled by distinct symbols and there is a unique root-to-leaf path spelling each string in the set; the shared part of two root-to-leaf paths spells the longest common prefix of the corresponding strings.
In a \emph{compact trie}, the maximal non-branching paths of a trie become edges labeled with the concatenation of labels on the path.
The \emph{suffix tree} of $T \in \Sigma^*$ is the compact trie of all suffixes of string $T' = T \mathbf{0}$, with $\mathbf{0} \notin \Sigma$ a terminator character.
In this case, the edge labels are substrings of $T$ and can be represented in constant space as an interval. Such a tree takes linear space and can be constructed in linear time, assuming that $\sigma \le \lvert T \rvert$, so that when reading the leaves from left to right the suffixes are listed in their lexicographic order~\cite{DBLP:journals/algorithmica/Ukkonen95,Farach97}.
We say that two or more leaves of the suffix tree are \emph{adjacent} if they succeed one another in lexicographic order. The leaves form the \emph{suffix array} $\sa_{T'}[1..|T'|]$, where $\sa_{T'}[i]=j$ iff $T'[j..|T'|]$ is the $i$-th smallest suffix in lexicographic order~\cite{MM93}.
A \emph{generalized suffix tree} or \emph{array} is one built on a set of $m$ strings~\cite{DBLP:books/cu/Gusfield1997}.
In this case, string $T$ above is the concatenation of the strings after appending a unique end marker $\$_i$ to each string\footnote{For our purposes, the suffix tree of the concatenated strings is functionally equivalent to the ``trimmed'' generalized suffix tree seen in \Cref{fig:gstmsa}. Also, we can use just one unique terminator $\mathbf{0}$.
}, with $1 \le i \le m$. When applied to string $T$, the \emph{Burrows--Wheeler transform} \cite{BW94} yields another string $\mbwt_T$ such that $\mbwt_T[i] = T'[\sa_{T'}[i] - 1]$ where $T'$ wraps, that is, $T'[-1] = T'[|T| + 1] = \$$.

Let $Q[1..m]$ be a query string. If $Q$ occurs in $T$, then the \emph{locus} or \emph{implicit node} of $Q$ in the suffix tree of $T$ is $(v,k)$ such that $Q = XY$, where $X$ is the path spelled from the root to the parent of $v$ and $Y$ is the prefix of length $k$ of the edge from the parent of $v$ to $v$. The leaves in the subtree rooted at $v$, or \emph{the leaves covered by $v$}, are then all the suffixes sharing the common prefix $Q$.
Let $aX$ and $X$ be the paths spelled from the root of a suffix tree to nodes $v$ and $w$, respectively. Then one can store a \emph{suffix link} from $v$ to $w$.
For suffix trees, a \emph{weighted ancestor query} asks for the computation of the implicit or explicit node corresponding to substring $T[x..y]$ of the text, given $x$ and $y$.

String $B[1..n]$ from a binary alphabet is called a \emph{bitvector}.
Let $\rrank_1(B,i)$ or just $\rrank(B,i)$ be the number of 1s in $B[1..i]$, and analogously $\rrank_0(B,i)$ returns the number of $0$s in $B[1..i]$.
Operation $\mathrm{select}(B,j)$ returns the index $i$ containing the $j$-th 1 in $B$.
Both queries can be answered in constant time using an index constructible in linear time and requiring $o(n)$ bits of space in addition to the bitvector itself~\cite{Jac89}.

\section{Indexing semi-repeat-free \efgs}\label{sect:index}
Our goal is to preprocess an \efg $G$ so that we can check in $O(|Q|)$ time if a query string $Q$ occurs in $G$, in the same fashion as in the following result by Equi et al.
\begin{theorem}[{\cite[Theorem 8]{Equietal22}}]
    A semi-repeat-free \efg $G = (V,E,\ell)$ can be indexed in polynomial time into a data structure occupying $O(\lvert D \rvert \log \lvert D\rvert)$ bits of space, where $\lvert D \rvert = (N H^2)$, $N$ is the total length of the node labels, and $H$ is the height of $G$.
    Later, one can find out in $O(\lvert Q \rvert)$ time if a given query string $Q$ occurs in $G$.
\end{theorem}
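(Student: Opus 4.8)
The plan is to reduce each query to a substring search in a single generalized suffix tree, built over a small set $D$ of short path labels, with the semi-repeat-free property controlling the few ways an occurrence of $Q$ can be threaded through $G$. First I would record the structural skeleton of an occurrence: if $Q$ occurs in $G$ then it is a substring of $\ell(P)$ for some path $P = v_1 v_2 \cdots v_k$, and choosing $P$ so that the occurrence begins inside $\ell(v_1)$ and ends inside $\ell(v_k)$ we may write $Q = \alpha \cdot \ell(v_2)\cdots\ell(v_{k-1}) \cdot \beta$, where $\alpha$ is a nonempty suffix of $\ell(v_1)$, $\beta$ a nonempty prefix of $\ell(v_k)$, and every interior label $\ell(v_2),\dots,\ell(v_{k-1})$ occurs in $Q$ in full (with the natural degenerate readings when $k\le 3$). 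The key consequence is that whenever $k\le 3$ the whole of $Q$ is already a substring of $\ell(v_1)\ell(v_2)\ell(v_3)$ (or of a shorter such product), so those occurrences are detected simply by searching $Q$ in the generalized suffix tree of $D := \{\ell(P) : P \text{ a path of } G \text{ with at most three nodes}\}$.

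Next I would bound $|D|$, measured as total length. A node $v$ occurs as the middle node of at most $\mathrm{indeg}(v)\,\mathrm{outdeg}(v)\le H^2$ three-node paths, and as an endpoint of $O(H^2)$ paths with at most three nodes, since every in- or out-neighbour lies in an adjacent block of size at most $H$; hence $\ell(v)$ contributes $O(H^2\,|\ell(v)|)$ characters to $D$, and summing over $v\in V$ gives $|D| = O(NH^2)$, as required. I would then build the generalized suffix tree of $D$ with one shared terminator: it occupies $O(|D|)$ machine words, i.e.\ $O(|D|\log|D|)$ bits, and is constructed in $O(|D|)$ time, polynomial in the input, since the effective alphabet is bounded by the string lengths. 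I would also store, for each explicit node of the tree, whether the string it spells ends a node label of $G$ and, if so, which node, together with suffix links and a few precomputed links between loci; this is $O(|D|)$ extra space and makes node-boundary events recognizable in constant time.

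The query algorithm reads $Q$ left to right against this suffix tree. While the matched prefix of $Q$ stays within a window of at most three consecutive node labels an ordinary traversal suffices, and if it consumes all of $Q$ we answer \textbf{yes}; this already covers the $k\le 3$ case and any $Q$ internal to a single node. Otherwise the traversal reaches a point where the matched prefix $Q[1..t]$ cannot be extended inside the current window, and by the choice of $D$ this happens only when $Q[1..t]$ ends exactly at a node-label boundary, $Q[1..t]=\gamma\cdot\ell(w)$ with $w$ the last fully matched node; the semi-repeat-free property then forces $w$ (hence its block and the admissible successor labels) to be determined. Matching therefore restarts from the successors of $w$ along the unique relevant continuation, realized by following a precomputed link from the locus of $Q[1..t]$, and the remaining characters of $Q$ are consumed from there, possibly restarting again at later boundaries. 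Since each restart is preceded by at least one freshly matched full node label, there are $O(|Q|)$ restarts, and the whole scan costs $O(|Q|)$ by the standard amortized analysis of suffix-tree traversals with links; if all of $Q$ is consumed we answer \textbf{yes}, otherwise \textbf{no}.

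The main obstacle is the correctness of this cross-boundary step: proving that semi-repeat-freeness makes the continuation past a full node label essentially \emph{deterministic}, so that no branching or backtracking is needed and the running time does not pick up a factor of $H$ (or worse). Concretely, one must argue that an occurrence of a node label $\ell(w)$ inside a path label can only sit at the start of some node's label and as a prefix of the path from that node — so that $w$ is pinned down (this also yields the clean fact that node labels are globally unique in a semi-repeat-free \efg) — and, most delicately, that the genuinely semi-repeat-free phenomenon of one node label of a block being a prefix of another, allowed here but not under repeat-freeness, does not spawn essentially distinct viable threadings that would have to be explored separately. Correctly recovering the last matched node $w$ from the traversal state (tracking the boundary positions passed during the descent), and the treatment of the initial segment $\alpha$ and of windows with very short node labels, are the remaining fiddly points; the size accounting, the suffix-tree construction, and the linear-time scan are otherwise routine.
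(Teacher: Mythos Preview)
The paper does not re-prove this theorem; it is quoted from \cite{Equietal22} and only the improved \Cref{theo:index} is argued in full here. From the one-line description the paper gives of the original construction, and from the detailed proof of the improved version in \Cref{sect:index}, the approach used is \emph{not} a pure left-to-right scan: one builds the index on the \emph{reversed} three-node path labels, begins with a \emph{backward} search of $Q^{-1}$ to anchor the rightmost block boundary (the position $j$ of \Cref{lemma:blocksync}), then explicitly enumerates and marks the $O(|Q|)$ candidate nodes in that block whose labels are prefixes of $Q[j+1..|Q|]$, and only after this synchronization relies on the determinism of \Cref{lemma:keyproperty} to march through the interior edges.

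Your proposal takes a genuinely different route (forward only, restarting at boundaries) and the gap is precisely the point you flag as the ``main obstacle'' without resolving it. The sentence ``the semi-repeat-free property then forces $w$ \dots\ to be determined'' is not correct as stated: semi-repeat-freeness only guarantees that a node label occurs as a \emph{prefix} of paths starting in its own block, so matching a single full label $\ell(w)$ inside $Q$ does not pin down $w$ --- any $w'$ in the same block with $\ell(w)\prefix\ell(w')$ is also consistent, and more importantly you have not yet fixed \emph{where in $Q$} the label $\ell(w)$ starts. What does give determinism is \Cref{lemma:keyproperty}: the concatenation $\ell(v)\ell(w)$ of an \emph{edge} uniquely determines its left endpoint $v$. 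Hence the interior of a long match is unambiguous, but the first transition --- from the unknown-length proper suffix $\alpha$ of $\ell(v_1)$ into the first full label --- is not, and this is exactly the branching the paper handles by combining a backward pass (to locate one canonical boundary via \Cref{lemma:blocksync}) with an explicit $O(|Q|)$-size candidate set that is intersected against a forward pass. Your scheme has no analogue of this step, so as written it either branches (losing $O(|Q|)$ time) or is incorrect on semi-repeat-free instances where several labels in one block are nested prefixes; your plan would go through for the stricter \emph{repeat-free} case, but not for the theorem as stated.
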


This result is based on string $D = \sum_{(u,v), (v,w) \in E} ((\ell(u)\ell(v)\ell(w))^{-1} \mathbf{0}$, whereas we now show that it is sufficient to build an index based on the edges of $G$.
We first concatenate edge labels $\ell(v)\ell(w)$, for each $(v,w) \in E$, in strings
\[
    D_F = \prod_{(v,w)\in E} (\ell(v)\ell(w))\mathbf{0} \qquad\text{and}\qquad
    D_R = \prod_{(v,w)\in E} (\ell(v)\ell(w))^{-1}\mathbf{0}\$^{|\ell(v)|}\mathbf{0}
\]
where $\mathbf{0}, \$ \notin \Sigma$ and $\$^{|\ell(v)|}$ is the dollar sign repeated as many times as there are characters in string $\ell(v)$. The reason for this unary encoding of the length of the string will be clear later. We now construct the suffix tree of both $D_F$ and $D_R$.
The matching algorithm considers three possible cases: a match spans at most two nodes, exactly three nodes, or at least four nodes. The next case is considered when the previous one can be ruled out.

\subsection{A match spans at most two nodes} 
We run a standard pattern matching query for $Q^{-1}$ in the suffix tree of $D_R$ which, if successful, locates a match for query string $Q$ spanning at most two nodes. If the query fails, that is, $Q^{-1}$ is not a substring of $D_R$, we know that if a match exists then it has to span at least three nodes.

To locate these other potential matches, first, we want to know the number $k$ of characters we matched in the query before we reached a block boundary:
\begin{itemize}
    \item if the query fails because of a mismatch and there is a branch with a $\mathbf{0}$, then $k$ is the length of the prefix of $Q^{-1}$ matched so far;
    \item if the query fails because of a mismatch and there is no branch with a $\mathbf{0}$, then we backtrack on the path that we matched so far to the last internal node from which there was a downward path starting with a $\mathbf{0}$; in this case, $k$ is the length of the prefix of $Q^{-1}$ matched until this internal node.
\end{itemize}
In other words, $k$ is the length of the longest suffix of $Q$ matching a subpath of the \efg that spans at most two nodes and starts from the start of a node label (i.e.\ the block boundary).

In the event that there is no such node, meaning that no node label starts with the last character of $Q$, then we know that there is no match.
Moreover, we have no match also if we have matched no full node label so far.
To understand if we are in this situation, at construction time we can mark all the internal nodes reached by a path that contains at least a full node label.
Alternatively, starting from the node to which we backtracked in the suffix tree of $D_R$, we can also check if we can read any string $\mathbf{0}\mathbf{\$}^s\mathbf{0}$ for $1 \le s \le k$.
If this test is negative, by construction of $D_R$ we did not encounter a full node label and $Q$ does not occur in $G$.
In \Cref{sect:indexing-efg-with-paths} we will propose another method, based on suffix arrays, to identify if we have read a full node label, by focusing on node labels $\ell(v)$ that do not contain other node labels as their prefix.

Let $j=|Q|-k$.
So far we have found a match for $Q^{-1}[1..k]$ in $D_R$ spanning at least one full node label.
The semi-repeat-free property constrains the occurrences of $Q[j+1..\lvert Q \rvert]$ to start from some block $V^{r-1}$.
This string can span multiple edges between blocks $V^{r-1}$ and $V^{r}$, and can even appear as a prefix of the node labels in $V^{r-1}$, as Figure~\ref{fig:MatchEnd} and the next lemma illustrate.
\begin{figure}
    \centering
    \includegraphics{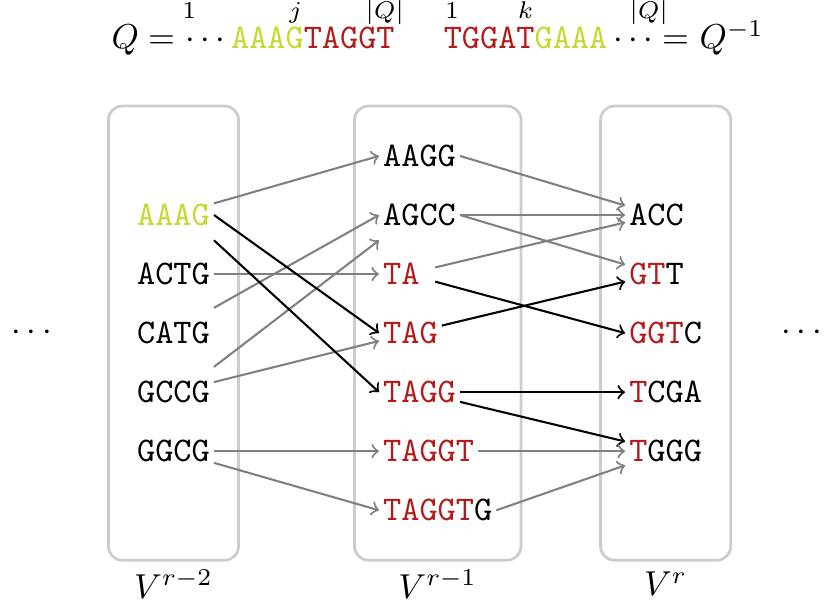}
    \caption{The last three blocks in which a suffix of string $Q$ matches. We highlight the last $k = \lvert Q \rvert - j$ characters and their matches in red, while the lime characters mark the unique match in the second to last block.}
    \label{fig:MatchEnd}
\end{figure}
\begin{lemma}\label{lemma:blocksync}
Let $Q$ be a string such that there exists at least one suffix of $Q$ that matches in semi-repeat-free EFG $G=(V,E,\ell)$ spanning at least three blocks.
Let $V^{r}$ be the rightmost block containing matched characters in some occurrence of $Q$ in $G$, and consider suffix $S=Q[|Q|-j+1..|Q|]$ that matches a prefix of $\ell(v)\ell(w)$ for some $v \in V^{r-1}$, $w \in V^r$, for $1\leq k \leq |Q|$ and $2\leq j\leq |Q|$.
Suffix $S$ can appear only as prefix of $\ell(v')\ell(w')$, for some $v' \in V^{r-1}$, $w' \in V^r$.
\end{lemma}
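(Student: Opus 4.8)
The plan is to leverage the semi-repeat-free property applied to the node label $\ell(v)$ where $v \in V^{r-1}$, together with the fact that $S$ is a prefix of $\ell(v)\ell(w)$ that extends across the block boundary between $V^{r-1}$ and $V^r$. The key structural fact to exploit is that $|S| = j$ with $j \geq 2$, while the rightmost matched block $V^r$ contains at least one matched character (by maximality of $r$); combined with the definition of $G(S)$, this forces $S$ to "reach into" block $V^r$, i.e.\ $|S| > |\ell(v)|$, or at least $|S| \geq |\ell(v)|$ with a genuine continuation.

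First I would argue that $|S| \geq |\ell(v)|$: if instead $S$ were a proper prefix of $\ell(v)$ alone, then all matched characters of this occurrence of $Q$ lie within block $V^{r-1}$ and earlier blocks, contradicting that $V^r$ is the rightmost block with matched characters (and that the suffix spans at least three blocks). So $\ell(v) \preceq S$, i.e.\ $\ell(v)$ is a prefix of $S$, and hence $\ell(v)$ occurs in $G$ as a prefix of the occurrence of $S$ inside $\ell(v')\ell(w')$ for any alternative occurrence. Now I invoke semi-repeat-freeness (Definition~\ref{def:semi-repeat-free}) on $\ell(v)$: wherever $\ell(v)$ occurs in $G$ as a prefix of a path, that path must start at a node of the \emph{same block} as $v$, namely $V^{r-1}$. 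Therefore, for any occurrence of $S$ in $G$ (say as a prefix of $\ell(v')\ell(w')$ along some path $v' w' \cdots$), the node $v'$ must lie in $V^{r-1}$.

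Next I would pin down that the continuation node $w'$ lies in $V^r$. Since $|S| > |\ell(v')|$ — which follows because $|S| = j = |\ell(v)|$ plus a positive remainder matched in $V^r$ in the original occurrence, and all labels in block $V^{r-1}$ that are prefix-comparable with $S$ must agree with $\ell(v)$ on their common length, but more carefully: $S$ strictly exceeds the label length of \emph{some} node in $V^{r-1}$, and by the block-graph structure the label $\ell(v')$ is whatever prefix of $S$ of the appropriate block-consistent length, so $S$ extends past $\ell(v')$ — the suffix $S$ must use at least one edge out of $v'$. Any edge out of $v' \in V^{r-1}$ goes to a node of $V^r$ by the block-graph edge condition (Definition~\ref{def:efg}, property~2 of block graphs), so $w' \in V^r$. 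This gives exactly the claimed form: $S$ appears only as a prefix of $\ell(v')\ell(w')$ with $v' \in V^{r-1}$ and $w' \in V^r$.

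The main obstacle I anticipate is the careful bookkeeping around the case where $S$ is \emph{exactly} as long as $\ell(v)$ — or where the first character of $S$ already falls inside $V^{r-1}$ but $\ell(v)$ is not a prefix of $S$ because $S$ starts mid-label. In fact, by the setup of the lemma, $S$ matches a \emph{prefix} of $\ell(v)\ell(w)$, so $S$ does start at the block boundary of $V^{r-1}$, which removes the "mid-label" subtlety; I would make this explicit. The remaining delicate point is ensuring $|S| > |\ell(v)|$ strictly — i.e.\ that the occurrence genuinely enters $V^r$ — rather than $S$ ending precisely at the block boundary between $V^{r-1}$ and $V^r$; this is guaranteed by the hypothesis that the matched suffix spans at least three blocks and that $V^r$ contains matched characters, so I would spell out that $V^r$ being nonempty of matched characters forces at least one character of $S$ to lie in block $V^r$, hence $|S| > |\ell(v)|$ and at least one outgoing edge of $v'$ is used in every occurrence. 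Once these length inequalities are nailed down, the rest is a direct application of the semi-repeat-free definition and the edge structure of block graphs.
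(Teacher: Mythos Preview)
Your approach is correct and rests on the same key observation as the paper's proof: since $\ell(v)$ is a prefix of $S$, the semi-repeat-free property forces any occurrence of $S$ to start at a node of block $V^{r-1}$. The paper's own proof is terser and argues only by contradiction that $S$ cannot be a non-prefix substring of some $\ell(v')\ell(w')$ (implicitly taking for granted that $\ell(v)\prefix S$ and that $v'$ already lies in $V^{r-1}$), whereas you spell out both why $\lvert S\rvert>\lvert\ell(v)\rvert$ and why the block of $v'$ is determined; your extra care about whether $\lvert S\rvert>\lvert\ell(v')\rvert$ is a genuine wrinkle in the elastic setting that the paper glosses over, but it does not affect the essential conclusion that the starting block $V^{r-1}$ is uniquely identified.
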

\begin{proof}
Consider string $R=\ell(v')\ell(w')$ for some $v' \in V^{r-1}$, $w' \in V^r$, and assume by contradiction that $S=R[k_1...k_2]$, for some $1<k_1<k_2\leq|R|$; that is $S$ matches not as a prefix of $\ell(v')\ell(w')$. Then, label $\ell(v)$ appears as a prefix of $R[k_1...k_2]$ and not as a prefix of $R$, contradicting the assumption of $G$ being semi-repeat-free.
\end{proof}

\subsection{A match spans exactly three nodes}
We have matched $Q^{-1}[1..k]$ in $D_R$ and we have located the corresponding left boundary of $V^{r-1}$ which, thanks to Lemma~\ref{lemma:blocksync}, we know to be a unique location in the EFG.
For now, let us consider only the occurrences of $Q^{-1}[1..k]$ that match fully in $V^{r-1}$ and partially in $V^r$, and let us mark every node in $V^{r-1}$ hit by this match, that is all $v_{\mathrm{R}} \in V^{r-1}$ such that $\ell(v)$ is a proper prefix of $Q[j+1..|Q|]$ and $v_{\mathrm{R}}$ has an outneighbor whose label partially reads the remaining suffix of $Q$.
We need to consider each and every one of these nodes, because each of them may or may not be connected to a node in $V^{r-2}$ spelling the rest of the pattern, as seen in \Cref{fig:MatchEnd}.

On one hand, notice that these nodes are all one prefix of another, and thus they are at most $|Q^{-1}[1..k]|-1=O(|Q|)$.
On the other hand, \efgs have few restrictions on their topology and we do not want to explore the in-neighborhoods of all of these nodes, as we would spend more than $O(\lvert Q \rvert)$ time.
Instead, we can exploit the number that we have encoded in unary after each edge in $D_R$.
Indeed, consider the path reading $\$$ characters from the node corresponding to $Q^{-1}[1..k] \mathbf{0}$ in $D_R$.
It is easy to see that there is an occurrence of $Q[j+1..\lvert Q \rvert]$ in the \efg decomposed as $\ell(v_{\mathrm{R}}) \cdot \ell(w)[1..k-\lvert\ell(v_\mathrm{R})\rvert]$, with $v_\mathrm{R} \in V^{r-1}$ and $w \in V^{r}$, if and only if $Q^{-1}[1..k] \mathbf{0} \$^{\lvert \ell(v_\mathrm{R}) \rvert} \mathbf{0}$ is a substring of $D_R$.
Thus, all the nodes can be marked by checking if $\$^p\mathbf{0}$ can be read after $Q^{-1}[1..k]\mathbf{0}$ in $D_R$, for $1 \le p \le \lvert Q \rvert$, and this can be done by simply descending down the aforementioned path.

Then, if we want to find a match of $Q$ spanning exactly three nodes, we search for $Q$ in the suffix tree of $D_F$, and we locate nodes $v_\mathrm{F} \in V^{r-1}$ such that there exists $u \in V^r$ with $(u,v_\mathrm{F}) \in E$, $\ell(u) = Q[1..j]$, and $\ell(v_\mathrm{F}) = Q[j+1..j + \lvert \ell(v_\mathrm{F}) \rvert]$.
Exploiting the information that we already know---that $Q[j+1..\lvert Q \rvert]$ starts from block $V^{r-1}$---this forward search is easier than the previous backward search: due to the semi-repeat-free property, there is a match of $Q$ in $G$ if and only if $Q[1..j] \ell(v_\mathrm{F}) \mathbf{0}$ occurs in $D_F$, with $v_\mathrm{F}$ equal to some $v_\mathrm{R}$ previously marked in the backward search.
We can easily preprocess the suffix tree of $D_F$ to locate nodes $v_\mathrm{F}$, since the semi-repeat-free property guarantees that any leaf in the subtree corresponding to string $T \ell(v) \mathbf{0}$ for $T \in \Sigma^*$ uniquely identifies any $v \in V$.
Thus, we query $Q[1..j] \cdot Q[j+1..j+1+s] \cdot \mathbf{0}$ for $0 \le s < k$, in a simple descent of the suffix tree of $D_F$, and we check whether the corresponding nodes $v_\mathrm{F}$ are marked.

Notice that the procedure described above works correctly even if $Q^{-1}[1..k]$ matches two full node labels.
Indeed, if a suffix of $Q$ is matching an entire node label in $V^{r-1}$, and there is an edge to that node from another fully matched node label in $V^{r-2}$, then we find such node label in $V^{r-2}$ with our first search of $Q^{-1}$ in $D_R$, and we can perform the procedure just described making $V^{r-1}$ play the role of $V^{r}$ and $V^{r-2}$ play the role of $V^{r-1}$.

\subsection{A match spans at least four nodes}
If we have not found a full match until now it means that, if there is any, it must span at least four nodes.
Consider the first search in $D_R$ during which we discovered position $j$. At the end of that search, we identified a specific internal node in the suffix tree of $D_R$ corresponding to $Q[j+1..|Q|]$. From that internal node, there is a downward path starting with $\mathbf{0}$ and continuing with a series of $\$$ characters.
From the previous case, we know to have marked nodes $v_\mathrm{R} \in V^{r-1}$ that are matching a prefix of $Q[j+1..|Q|]$, which are at most $O(|Q|)$.
Let $\hat{v}$ be the marked node with the shortest label length $\lvert \ell(\hat{v}) \rvert$.
We now search $Q^{-1}[k - \lvert \ell(\hat{v}) \rvert + 1..\lvert Q \rvert$] in $D_R$.
This way, we locate the only node $u$ such that $Q[j'..j]=\ell(u)$ in full, for some $1 \le j' \le j$, and $(u,v) \in E$ with $\ell(\hat{v})$ a prefix of $\ell(v)$.
Notice that we already ruled out the possibility of a match spanning three blocks, thus $Q[j'..j]$ must match $\ell(u)$ in full, not just one of its suffixes.
Moreover, notice that if $Q[j+1..|Q|]$ fully matched a prefix of a node in $V^{r-1}$ from which there was an edge to some $v_\mathrm{R}$, then we would have found this match with the very first search that we run.
Since this is not the case, we need to do the following.
We scan each marked node $v_\mathrm{R} \in V^{r-1}$, and we check whether edge $(u,v)$ exists: if this is the case then suffix $Q[j'+1..|Q|]$ has a match spanning blocks $V^{r-2}$, $V^{r-1}$, and $V^r$.
To perform this scan, we search $\ell(u) \cdot Q[j+1..j + 1 + s] \cdot \mathbf{0}$ for $0 \le s < k$ in $D_F$ in the same way as in nodes $v_\mathrm{F}$ of last paragraph, and every time we find a match we check if the corresponding node in $V^{r-1}$ is marked.\footnote{Note that we unambiguously identified node $u$ as part of any remaining match of $Q$ in $G$.
The original solution by Equi et al.\ computes the tries corresponding to the labels of the outneighborhood of each node $v$, and also the tries of the (reversed) strings of each block, which we could employ here for navigation.
Nonetheless, $D_R$ and $D_F$ can also be used to navigate $G$ in a similar way.}

At this point, from this block to the left, we can continue a match in the previous block using the suffix tree of $D_R$ as before, always using the label of the current node concatenated with the prefix of $Q$ that is yet to be matched.
This works because each block to the left, except the last one, has only one node label matching in $|Q|$, as the following lemma shows.
\begin{lemma}[{\cite[Lemma 9]{Equietal22}}]\label{lemma:keyproperty}
\label{lemma:uniquev}
Consider a semi-repeat free \efg $G=(V,E)$. 
String $\ell(v)\ell(w)$, where $(v,w)\in E$, can only appear in $G$ as a prefix of paths starting with $v$.
\end{lemma}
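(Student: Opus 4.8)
The plan is to fix an arbitrary occurrence of $\ell(v)\ell(w)$ inside the label $\ell(P)$ of some path $P=p_1\cdots p_t$ of $G$, say $\ell(v)\ell(w)=\ell(P)[i..i+|\ell(v)\ell(w)|-1]$, and to show that $i$ is exactly the starting position (within $\ell(P)$) of a node $p_s$ of $P$ with $p_s=v$. Once this is established, $\ell(v)\ell(w)$ is a prefix of $\ell(p_s\cdots p_t)=\ell(p_s)\ell(p_{s+1})\cdots\ell(p_t)$, which is the label of a path starting with $v$, which is precisely the claim.

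First I would invoke the semi-repeat-free property (\Cref{def:semi-repeat-free}) for the label $\ell(v)$: since $\ell(v)$ is a prefix of the string $\ell(v)\ell(w)$ occurring at position $i$, it also occurs at position $i$ of $\ell(P)$, hence this occurrence must be a prefix of a path starting with some node of $v$'s block. Concretely this forces $i$ to be the starting position of some node $p_s$ of $P$, with $p_s$ in the same block $V^k$ as $v$; in particular $w\in V^{k+1}$ by the consecutive-blocks property of (elastic) block graphs. Symmetrically, I would apply the property to $\ell(w)$, a non-empty string (node labels lie in $\Sigma^+$), which occurs at position $i+|\ell(v)|$ of $\ell(P)$ as the second half of our occurrence of $\ell(v)\ell(w)$: this position must likewise be the starting position of some node $p_{s'}$ of $P$ lying in $w$'s block $V^{k+1}$, and $s'>s$ because $i+|\ell(v)|>i$ and node-start positions strictly increase along a path.

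The heart of the argument is then a short bookkeeping on block indices: since $P$ is a path, consecutive nodes lie in consecutive blocks, so $p_{s+r}\in V^{k+r}$ whenever this is defined. From $p_{s'}\in V^{k+1}$ and $s'>s$ we get $s'=s+1$, i.e., $p_{s'}=p_{s+1}$. Hence $\ell(p_s)$ occupies exactly the positions $i..i+|\ell(v)|-1$ of $\ell(P)$, so $\ell(p_s)=(\ell(v)\ell(w))[1..|\ell(v)|]=\ell(v)$; since $p_s$ and $v$ belong to the same block, and a block of an \efg cannot contain two distinct nodes with equal labels (see \Cref{def:efg}), this yields $p_s=v$, completing the proof. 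I expect the only genuinely delicate point to be the reading of ``occurs only as a prefix of paths starting with $w$'' as forcing every occurrence of a node label to sit at a node boundary of the path in question and inside the correct block---the very same reading already used in the proof of \Cref{lemma:blocksync}---after which the two boundary facts for $\ell(v)$ and $\ell(w)$, combined with the block-index counting, pin everything down.
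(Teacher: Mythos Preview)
Your argument is correct and follows essentially the same approach as the paper: both proofs apply the semi-repeat-free property first to $\ell(v)$ and then to $\ell(w)$ to force the starting position onto a node boundary in the correct block. Your direct presentation (using block-index counting and the distinctness of labels within a block to conclude $p_s=v$) is slightly more explicit than the paper's proof by contradiction, which tacitly assumes $\ell(v)\properprefix\ell(v')$ without separately dispatching the symmetric case, but the underlying idea is the same.
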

\begin{proof}
Assume for contradiction that $\ell(v)\ell(w)$ is a prefix of a path starting inside the label of some $v'\in V$, $v'\neq v$. Then $\ell(v)$ is a prefix of such path, and this is only possible if $v'$ is in the same block as $v$ and $\ell(v)$ is a proper prefix of $\ell(v')$: otherwise, $G$ would not be semi-repeat free. Then $|\ell(v)|<|\ell(v')|$ and $\ell(w)$ has an occurrence in a path starting inside the label $\ell(v')$. This is a contradiction of the fact that $G$ is semi-repeat-free. 
\end{proof}
Thus, once we identified position $j$, by \Cref{lemma:uniquev} all the paths having $Q[j'..|Q|]$ as a prefix must start with $v$, because $Q[j'..j]=\ell(v)$, and the same arguments hold for all edges that are fully spelled in $Q$.
This backward search in $D_R$ is able to complete the match also in the last block, possibly matching all the characters in $Q$ before reaching a leaf of $D_R$.

Thus, after the linear-time construction of the suffix tree of $D_R$, each position of $\lvert Q \rvert$ is considered at most three times by our searches, so the matching of $Q$ in $G$ can be done in time $O(\lvert Q \rvert)$.
Also, note that the nodes marked in block $V^{r-1}$ are uniquely identified by a range of $Q$, both in the backward $D_R$ search and in the forward $D_F$ one, so we can use the suffix trees as black-box indices for matching a pattern in $D_F$, $D_R$, and instead mark positions of $Q$ in a bit-vector of size $\lvert Q \rvert$ for that part of the algorithm.
\begin{theorem}\label{theo:index}
    A semi-repeat-free \efg $G = (V,E,\ell)$ can be indexed in polynomial time into a data structure occupying $O(\lvert D \rvert \log \lvert D\rvert)$ bits of space, where $\lvert D \rvert = O(N H)$, $N$ is the total length of the node labels, and $H$ is the height of $G$.
    Later, one can find out in $O(\lvert Q \rvert)$ time if a given query string $Q$ occurs in $G$.

\end{theorem}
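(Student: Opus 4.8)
The plan is to establish the two claims of the theorem — the $O(|Q|)$ query time and the $O(|D|\log|D|)$ space bound with $|D| = O(NH)$ — by combining the correctness of the three-case matching procedure described above with a size analysis of the strings $D_F$ and $D_R$. The query-time part is essentially already argued in the running text: I would collect the case analysis (match spanning at most two nodes, exactly three nodes, at least four nodes) into a single algorithm, observe that each of the constituent searches in the suffix trees of $D_F$ and $D_R$ is a plain descent consuming characters of $Q$ or $Q^{-1}$, and verify that across all cases every position of $Q$ is touched a bounded number of times (the text claims at most three). The correctness of stitching the cases together relies on \Cref{lemma:blocksync} to pin down the block $V^{r-1}$ uniquely, on \Cref{lemma:uniquev} to guarantee that once an edge $\ell(v)\ell(w)$ is fully spelled the continuation to the left is forced through a unique node, and on the semi-repeat-free property to ensure that a leaf of the $D_F$-suffix-tree below a locus of the form $T\ell(v)\mathbf{0}$ identifies $v$ uniquely. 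I would state these dependencies explicitly and then assert that the algorithm answers the occurrence query correctly in $O(|Q|)$ time, after a linear-time construction of the two suffix trees.

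The space part is the genuinely new quantitative content and is where I would spend most of the writing. I need to bound $|D_F|$ and $|D_R|$. Writing $N = \sum_{v\in V}|\ell(v)|$ for the total node-label length and $H$ for the height of $G$, the key observation is that $D_F = \prod_{(v,w)\in E}(\ell(v)\ell(w))\mathbf{0}$ has length $\sum_{(v,w)\in E}(|\ell(v)|+|\ell(w)|+1)$. Each node $v$ contributes its label $|\ell(v)|$ once per incident edge; since $G$ is a block graph, every node in block $V^k$ (for $1<k<b$) has in- and out-degree bounded by the heights of the neighbouring blocks, hence at most $H$ on each side, so $v$ appears in at most $2H$ edge labels. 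Therefore $|D_F| = O(NH + |E|)$, and since $|E| = O(NH)$ as well (there are at most $H$ nodes per block and at most $H^2$ edges between consecutive blocks, but more directly $|E|\le \sum_v (\text{out-degree of }v)\le \sum_v H \le \ldots$ — I would phrase it so the $H$ factor is clean), we get $|D_F| = O(NH)$. For $D_R$ the extra material per edge is $\mathbf{0}\$^{|\ell(v)|}\mathbf{0}$, contributing another $|\ell(v)|+2$ per edge, which is absorbed into the same $O(NH)$ bound. Thus $|D| \coloneqq |D_F| + |D_R| = O(NH)$. Finally, a suffix tree on a string of length $L$ over an effective alphabet of size $O(L)$ occupies $O(L)$ words, i.e.\ $O(L\log L)$ bits; applying this with $L = O(NH)$ gives the claimed $O(|D|\log|D|)$-bit bound, and the construction is linear-time hence polynomial.

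The main obstacle I anticipate is not in the counting itself but in cleanly justifying the $O(H)$ degree bound and in making sure the absorbed lower-order terms ($+|E|$ from the $\mathbf{0}$ separators, $+\sum_v 2$ from the $\mathbf{0}\$\cdots\mathbf{0}$ delimiters in $D_R$) are genuinely dominated — for the degenerate case of many blocks of height $1$ one has $N\ge$ number of nodes $\ge |E|$, so these terms are fine, but I would spell this out rather than wave at it. A secondary subtlety worth a sentence is \Cref{ass:empty}: since no node label is empty, $|\ell(v)|\ge 1$, which is what lets the $\mathbf{0}$-separator count be charged against $N$. I would also remark, for completeness, that the improvement over the earlier $|D| = \Theta(NH^2)$ bound of Equi et al.\ comes precisely from indexing pairs of consecutive node labels (edges) rather than triples, and that the unary length encoding $\$^{|\ell(v)|}$ — which is what makes the three-node and four-node cases work without exploring in-neighbourhoods — costs only another $O(N)$ overall and hence does not affect the asymptotics.
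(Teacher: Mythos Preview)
Your proposal is correct and follows the paper's approach: the paper's proof of \Cref{theo:index} is precisely the running text of Section~\ref{sect:index}, which defines $D_F$ and $D_R$, develops the three-case matching procedure backed by \Cref{lemma:blocksync} and \Cref{lemma:uniquev}, and concludes with the observation that each position of $Q$ is touched at most three times; your write-up collects these same ingredients and, in fact, spells out the $|D|=O(NH)$ degree-counting argument more explicitly than the paper does. One minor slip: in your closing remark you say the unary padding $\$^{|\ell(v)|}$ ``costs only another $O(N)$ overall'', but since it appears once per \emph{out-edge} of $v$ the total is $\sum_{(v,w)\in E}|\ell(v)| = O(NH)$, not $O(N)$ --- your main argument already handles this correctly (``absorbed into the same $O(NH)$ bound''), so only the parenthetical needs fixing.
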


\section{Construction of optimal \efgs}\label{sect:efgconstruction}
In this section, we review and expand the theory of EFG construction algorithms.
Recall that in the absence of gaps the semi-repeat-free and repeat-free notions (\Cref{def:repeat-free,def:semi-repeat-free}) are equivalent since the strings of any block cannot have variable length.
Indeed, after showing that semi-repeat-free \efgs are easy to index for fast pattern matching, Equi et al.\ \cite{Equietal22} extended the previous results for the gapless setting showing that semi-repeat-free \efgs are equivalent to specific segmentations of the \msa: the semi-repeat-free property has to be checked only against the \msa, and not the final \efg.
We recall these arguments in \Cref{sub:segmentation}, along with the resulting recurrence to compute an optimal segmentation under three score functions:
$i$.\ maximizing the number of blocks;
$ii$.\ minimizing the maximum length of a segment;
and $iii$.\ minimizing the maximum height of a block.

In the gapless and repeat-free setting, scores $i$.\ and $ii$.\ admit the construction of indexable founder graphs in $O(mn)$ time, thanks to previous research on founder graphs and \msa segmentations~\cite{MCENT20,NCKM19,CKMN19}.
In \Cref{sub:gaplesssolution} we combine these results to obtain an $O(mn)$ time solution for score $iii$.\ as well: the optimal segmentation is found by first computing the \emph{meaningful left extensions}, that is, the positions $x_1 > \dots > x_k$ where the height of repeat-free segment $[x_i..y]$ increases, with $y \in [1..n]$.

In the general and semi-repeat-free setting, extending a segment to the left can violate the semi-repeat-free property and the height can decrease.
Thus, Equi et al.\ in \cite{Equietal22} gave $O(n)$- and $O(n \log \log n)$-time algorithms for scores $i$.\ and $ii$., respectively, exploiting the semi-repeat-free \emph{right extensions} after a common $O(mn \log m)$-time preprocessing of the \msa, that we review in \Cref{sub:efgconstruction}.
In \Cref{sect:minmaxlength} we improve the construction algorithm for score $ii.$ to $O(n)$ time and in \Cref{sect:preprocessing} we improve the preprocessing to $O(mn)$, reaching global linear time.

In \Cref{sub:mainalgorithm}, we develop a similar algorithm for the construction of a semi-repeat-free segmentation that is optimal for score $iii.$, processing the \emph{meaningful right extensions}.
Although the number of these extensions is $O(n^2)$ in total, we manage to provide a parameterized linear-time solution providing an upper bound based on the length of the longest run where any two rows spell strings that are one prefix of the other.
Instead, an alternative notion of height, the \emph{prefix-aware height}, generates $O(mn)$ \emph{meaningful prefix-aware right extensions}: they can be processed in the same fashion as the original height to obtain an optimal segmentation, and we will show how to compute them efficiently in \Cref{sect:preprocessing}.

\subsection{Segmentation characterization for indexable \efgs}\label{sub:segmentation}
We will discuss the repeat-free and semi-repeat-free properties (\Cref{def:repeat-free,def:semi-repeat-free}) together as the \mbox{(semi-)}repeat-free property, when applicable.
However, this parenthesis notion is not always applicable, as we focus on \msas with gaps and most of our results hold only with the semi-repeat-free property.

Consider a segmentation $S = S^1, S^2, \ldots, S^b$ that induces a (semi-)repeat-free \efg $G(S)=(V,E,\ell)$, as per \Cref{def:efg}. The strings occurring in graph $G(S)$ are a superset of the strings occurring in the original \msa rows because each node label can represent \emph{multiple} rows and each edge $(v,w) \in E$ means the existence of \emph{some} row spelling $\ell(v)\ell(w)$ in the corresponding consecutive segments. For example, string $\mathtt{GACTAGT}$ occurs in the \efg of \Cref{fig:segmentation} but it does not occur in any row of the original \msa.

The (semi-)repeat-free property involves graph $G(S)$, but luckily it does not depend on these new strings added in the founder graph and can be checked only against the \msa and segmentation $S$. Intuitively, this is because the added strings involve three or more vertices of $G(S)$. This simplifies choosing a segmentation resulting in an indexable founder graph and it was initially proven by Mäkinen et al.\ in the gapless and repeat-free setting.

\begin{lemma}[Characterization, gapless setting~\cite{MCENT20}]
We say that a segment $[x..y]$ of a \emph{gapless} $\msaij{1..m}{1..n}$ is repeat-free if string $\msaij{i}{x..y}$ occurs in the \msa only at position $x$ of some row, for all $1 \le i \le m$. Then $G(S)$ is repeat-free if and only if all segments of $S$ are repeat-free.
\end{lemma}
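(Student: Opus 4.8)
The plan is to prove the two implications separately, in each direction moving between occurrences of a string inside $G(S)$ and occurrences of a string among the \msa rows. Two consequences of the gapless hypothesis do the work. First, because $\spell$ is the identity on gapless rows, every row $\msaij{i}{1..n}$ is the label of an honest path of $G(S)$: take, for each block $t$, the node of $V^t$ labeled $\msaij{i}{x_t..y_t}$, and observe that row $i$ itself witnesses every edge between consecutive such nodes. Second, conversely, a factor of a path label that is contained in two consecutive node labels $\ell(u)\ell(w)$ with $(u,w)\in E$ is a factor of the \msa row witnessing that edge, hence occurs in the \msa within the columns of the two corresponding segments. I will also use freely that in the gapless case $G(S)$ is a non-elastic block graph---so node labels inside a block are equally long and pairwise distinct---and that the segments $[x_1..y_1],\dots,[x_b..y_b]$ partition $[1..n]$, so two of them are either equal or disjoint and neither starts strictly inside the other.

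For the direction ``all segments repeat-free $\Rightarrow$ $G(S)$ repeat-free'', I would fix $v\in V^k$, take an occurrence of $\ell(v)$ inside the label of a path $P=u_1\cdots u_p$ chosen minimal (so the occurrence touches $\ell(u_1)$ and $\ell(u_p)$), and show $P$ collapses to a single node. If $p\ge 3$, the label of $u_2\in V^{j+1}$ lies entirely inside the occurrence and hence inside $\ell(v)=\msaij{i_0}{x_k..y_k}$; repeat-freeness of segment $[x_{j+1}..y_{j+1}]$ pins this copy of $\ell(u_2)$ to its own columns, forcing $[x_{j+1}..y_{j+1}]\subseteq[x_k..y_k]$ and thus $k=j+1$, which is absurd since then $\ell(v)$ and $\ell(u_2)$ are equally long while the occurrence strictly contains $\ell(u_2)$. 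If $p=2$, the occurrence is a factor of one \msa row straddling the boundary between segments $[x_j..y_j]$ and $[x_{j+1}..y_{j+1}]$, and repeat-freeness of $[x_k..y_k]$ would move it onto columns $[x_k..y_k]$, which cannot straddle a segment boundary. Hence $p=1$, the occurrence is a factor of one \msa row, repeat-freeness of $[x_k..y_k]$ aligns it with columns $[x_k..y_k]\subseteq[x_j..y_j]$ so $k=j$, the occurrence is exactly $\ell(u_1)$, and therefore $\ell(v)=\ell(u_1)$ with $v,u_1$ in the same block, i.e.\ $v=u_1$; the occurrence is then the prefix $\ell(v)$ of $\ell(P)$ and $P$ starts at $v$, which is what repeat-freeness of $G(S)$ demands.

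For the converse ``$G(S)$ repeat-free $\Rightarrow$ all segments repeat-free'', I would fix a segment $[x_k..y_k]$ and a row $i$, let $v\in V^k$ be the node labeled $\msaij{i}{x_k..y_k}$, and assume $\ell(v)$ also occurs at columns $[c..c+|\ell(v)|-1]$ of some row $i'$. Spelling $i'$ as a path as above, $\ell(v)$ occurs in that path's label at position $c$; since $G(S)$ is repeat-free this occurrence is a prefix of a path starting with $v$, so it begins at a node boundary carrying $v$, and because $v$ lies in exactly one block that boundary is column $x_k$. Hence $c=x_k$, so $\msaij{i}{x_k..y_k}$ occurs in the \msa only at column $x_k$, and $[x_k..y_k]$ is repeat-free.

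The crux---and the only place where the ``check against the \msa alone'' principle could conceivably break---is the $p\ge 3$ case of the first implication: an occurrence of $\ell(v)$ spanning three or more nodes of a path need not be a factor of any single \msa row, so the per-segment hypothesis does not touch it directly. Getting around this by descending to the fully enclosed inner node label, whose segment is repeat-free by hypothesis, and then exploiting that parts of one partition nest only when equal, is the heart of the argument; everything else is routine manipulation of equal-length block labels together with the nonemptiness of blocks (\cref{ass:empty}, automatic in the gapless setting).
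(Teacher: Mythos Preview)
Your argument is correct, and it follows essentially the same route as the paper. Note that the paper does not give a separate proof of the gapless statement (it is cited from~\cite{MCENT20}); the relevant proof in the paper is that of the general characterization (\Cref{lem:characterization}), of which the gapless lemma is the special case $\spell=\mathrm{id}$. That proof proceeds by contraposition while you argue directly, but the substance matches: for occurrences spanning at most two nodes, both you and the paper reduce to an occurrence in a single \msa row via the edge-witness property and invoke the per-segment hypothesis on the segment of $v$; for occurrences spanning three or more nodes, both arguments drop to a fully enclosed inner node label $\ell(u_2)$ (the paper's $\ell(w)$) and apply the per-segment hypothesis to \emph{that} segment, then use the partition structure to obtain a contradiction. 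Your write-up is slightly more explicit about how the partition of $[1..n]$ into segments forces $[x_{j+1}..y_{j+1}]\subseteq[x_k..y_k]$ to collapse to equality, which is exactly the point that makes the $p\ge 3$ case go through.
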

Equi et al.\ in \cite{Equietal22} refined this property for \msas with gaps, but did not provide an explicit proof. Since it is essential for the correctness of the construction algorithms, we provide such a proof here.

\begin{lemma}[Characterization~\cite{Equietal22}]\label{lem:characterization}
We say that segment $[x..y]$ of a \emph{general} $\msaij{1..m}{1..n}$ is semi-repeat-free if for any $i,i' \!\in\! [1..m]$ string $\spell(\msaij{i}{x..y})$ occurs in gaps-removed row $\spell(\msaij{i'}{1..n})$ only at position $g(i',x)$, where $g(i',x)$ is equal to $x$ minus the number of gaps in $\msaij{i'}{1..x}$. Similarly, $[x..y]$ is repeat-free if the possible occurrence of $\spell(\msaij{i}{x..y})$ at position $g(i',x)$ in row $i'$ also ends at position $g(i',y)$. Then $G(S)$ is (semi-)repeat-free if and only if all segments of $S$ are (semi-)repeat-free.
\end{lemma}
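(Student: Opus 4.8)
The plan is to prove the two directions of the equivalence separately, in each case reducing the semi-repeat-free property of the graph $G(S)$ (a statement about occurrences of node labels $\ell(v)$ as prefixes of paths) to a statement purely about occurrences of the strings $\spell(\msaij{i}{x..y})$ inside gaps-removed rows $\spell(\msaij{i'}{1..n})$. The key structural fact I would establish first is a correspondence between paths of $G(S)$ and the rows of the \msa restricted to consecutive segments: by \Cref{def:efg}, for every edge $(v,w)\in E$ there is a row $i$ with $\spell(\msaij{i}{x_k..y_{k+1}}) = \ell(v)\ell(w)$, and more generally every path $P = v_{k}v_{k+1}\cdots v_{k+t}$ of $G(S)$ has a label $\ell(P)$ that, while not necessarily equal to any single row, \emph{starts} with $\ell(v_k) = \spell(\msaij{i}{x_k..y_k})$ for some row $i$. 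I would isolate this as a small observation: a node label $\ell(v)$ for $v \in V^k$ is always exactly $\spell(\msaij{i}{x_k..y_k})$ for at least one $i$, and occurrences of short strings (those contained in at most two consecutive node labels) as prefixes of paths are governed entirely by occurrences of those two-segment spelled strings, which in turn live inside individual rows.

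For the ``only if'' direction (if $G(S)$ is semi-repeat-free then every segment is semi-repeat-free), I would argue contrapositively: suppose some segment $[x_k..y_k]$ is not semi-repeat-free, so $\spell(\msaij{i}{x_k..y_k})$ occurs in $\spell(\msaij{i'}{1..n})$ at some position $p \ne g(i',x_k)$. That occurrence lies inside the concatenation of the consecutive node labels $\ell(v_1)\ell(v_2)\cdots$ corresponding to row $i'$, hence it is a substring of $\ell(P)$ for the path $P$ of $G(S)$ spelling row $i'$; since $p \ne g(i',x_k)$, this occurrence of $\ell(v)$ (where $v \in V^k$ is the node with $\ell(v) = \spell(\msaij{i}{x_k..y_k})$) is \emph{not} as a prefix of a path starting from block $V^k$ — it either starts strictly inside some node label, or starts at a block boundary other than $V^k$'s — contradicting semi-repeat-freeness of $G(S)$. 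The one place requiring care is translating ``position $p$ in the gaps-removed row'' into ``position inside $\ell(P)$'': because segments may be runs of gaps for other rows (excluded by \Cref{ass:empty} only for the spelled block, not per row), the map from \msa columns to positions of $\ell(P)$ for a fixed row is exactly $g(i',\cdot)$, so I would verify that $p = g(i',x_k)$ is equivalent to the occurrence being aligned with the block boundary. The repeat-free refinement just adds the symmetric condition at the right end $g(i',y_k)$, handled identically.

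For the ``if'' direction, I would again go contrapositively: assume $G(S)$ is not semi-repeat-free, so some $\ell(v)$ with $v \in V^k$ occurs in $\ell(P)$ for a path $P$, as a prefix of a subpath not starting from $V^k$. Here the subtlety — and the main obstacle — is that $\ell(P)$ can contain strings not present in any original row, so the occurrence of $\ell(v)$ inside $\ell(P)$ need not a priori correspond to an occurrence inside a row. The resolution, which is exactly the ``intuitively, the added strings involve three or more vertices'' remark in the text, is that $\ell(v)$ is itself a single node label and its \emph{occurrence} as a path prefix starts at a node $v'$ (possibly $v' = v$ in another block, or $v'$ in $V^k$ with $\ell(v)$ a proper prefix of $\ell(v')$, or $v'$ at an interior position of a node label); in the last two cases the occurrence is confined within $\ell(v')$ plus possibly one successor label, hence within $\ell(v')\ell(w')$ for an edge $(v',w')$, which by \Cref{def:efg} equals $\spell(\msaij{i'}{\cdot})$ for a genuine row $i'$ — so the bad occurrence does live inside a real row, giving a violation of the segment condition. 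I would enumerate these cases explicitly (occurrence starting at a block boundary $\ne V^k$; occurrence starting strictly inside a node label; occurrence starting in $V^k$ but misaligned, i.e., $\ell(v)$ a proper prefix of $\ell(v')$), show each forces a misaligned occurrence of $\spell(\msaij{i}{x_k..y_k})$ in some $\spell(\msaij{i'}{1..n})$ at position $\ne g(i',x_k)$, and conclude. Packaging these cases cleanly, and making the ``confined to at most two consecutive labels'' claim precise, is where most of the work lies; the rest is bookkeeping with the $g(i',\cdot)$ column-to-position translation.
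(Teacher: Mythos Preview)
Your ``only if'' direction is fine and matches the paper. The gap is in the ``if'' direction, precisely at the step you flag as ``where most of the work lies'': the claim that the bad occurrence of $\ell(v)$ is \emph{confined within $\ell(v')\ell(w')$ for an edge $(v',w')$} is simply false in general. Nothing prevents $\ell(v)$ from being long while the node labels along the offending path $P$ are short, so the occurrence can span three or more nodes; you cannot make that claim precise because it does not hold.

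The paper sidesteps this by \emph{not} trying to pin down the occurrence of $\ell(v)$ itself. Instead, if the bad occurrence of $\ell(v)$ along $P = w_1\cdots w_k$ involves $k\ge 3$ nodes, then $\ell(v) = A\,\ell(w)\,B$ for some interior $w\in\{w_2,\dots,w_{k-1}\}$ and nonempty $A,B$. Now $\ell(w)$ is a node label occurring at a non-initial position inside $\ell(v)$, and $\ell(v)$ is itself $\spell(\msaij{i'}{x_k..y_k})$ for a genuine row~$i'$: this produces a misaligned occurrence of $\spell(\msaij{i}{x..y})$ (for the segment $[x..y]$ of $w$, not of $v$) in row $i'$, contradicting the segment condition. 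In short, when the occurrence spans three or more nodes you must switch attention from $\ell(v)$ to the interior label it swallows; your case enumeration never reaches this idea.
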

\begin{proof}
For convenience, we say that a segment or a founder graph is \emph{valid} if it is (semi-)repeat-free, otherwise it is \emph{invalid}. Moreover, we define the following notion of a standard string occurring in $G(S) = (V,E,\ell)$.
We say that $S \in \Sigma^+$ is a \emph{standard substring of path $P = w_1 \cdots w_k$ in $G(S)$} if $P$ spells $S$ using all of its vertices, meaning
\[
    S = \ell(w_1) \big[ j..\lvert \ell(w_1) \rvert \big] \cdot \ell(w_2) \cdots \ell(w_{k-1}) \cdot \ell(w_k) \big[ 1..j' \big]
\]
with $1 \le j \le \lvert \ell(w_1) \rvert$ and $1 \le j' \le \lvert \ell(w_k) \rvert$.
We also say that the occurrence of $S$ in $P$ involves $k$ vertices of $G(S)$.

We carry out the proof of the two sides by proving their contrapositions and using the following facts:
\begin{enumerate}
    \item a segment $[x..y]$ is invalid if and only if there exist $i,i' \in \lbrace 1, \dots, m \rbrace$ such that string $\spell(\msaij{i}{x..y})$ occurs in row $\spell(\msaij{i'}{1..n})$ at some position other than $g(i',x)$, or string $\spell(\msaij{i}{x..y})$ is a proper prefix of string $\spell(\msaij{i'}{x..y})$ (for the semi-repeat-free case, ignore this last condition);
    \item founder graph $G(S)$ is invalid if and only if there exists node $v \in V$ such that $\ell(v)$ is a standard substring of some path $P = w_1 \cdots w_k$ in $G(S)$ and one of the following holds: $w_1$ is in a different block than $v$, $\ell(v)$ occurs in $\ell(P)$ at some position other than 1, or $k = 1$ and $\ell(v)$ is a proper prefix of $\ell(P) = \ell(w_1)$ (for the semi-repeat-free case, ignore this last condition).
\end{enumerate}
It is immediate to see that, by construction of $G(S)$, the additional invalidity conditions exclusive to the repeat-free case (the last conditions of facts 1.\ and 2.) are equivalent, so we concentrate on the conditions in common with the semi-repeat-free case.

($\Rightarrow$) Let $[x..y]$ be an invalid segment of $S$, with string $\spell(\msaij{i}{x..y})$ occurring in row $i'$ at some position $j$ other than $g(i',x)$, for some $i,i' \in \lbrace 1, \dots, m \rbrace$, and let $v \in V$ be the node in the block corresponding to segment $[x..y]$ such that $\ell(v) = \spell(\msaij{i}{x..y})$. If $g(i',x) < j \le g(i',y)$ then $\ell(v)$ occurs in $\ell(P)$ at position $j - g(i',x) \neq 1$, with $P$ a path starting from the same block of $v$, otherwise $j < g(i',x)$ or $j > g(i',y)$ and $\ell(v)$ occurs in some path of $G(S)$ starting from a node in a different block than that of $v$. In both cases $G(S)$ is invalid.

($\Leftarrow$) If $G(S)$ is invalid, let $\ell(v)$ be a standard substring of some path $P = w_1, \dots, w_k$ of $G(S)$ making the founder graph invalid, for some $v \in V$. Following the same arguments as in~\cite[Section 5.1]{MCENT20}, if $k \le 2$ then $\ell(P)$ is a substring of some row of the input \msa that makes $S$ invalid since by construction of $G(S)$ for every edge $(u,u') \in E$ it holds that $\ell(u)\ell(u')$ occurs in the \msa.
Otherwise $k > 2$, meaning that the occurrence of $\ell(v)$ through $P$ involves at least three vertices, and $\ell(v) = A \ell(w) B$ for some $w \in \lbrace w_2, \dots, w_{k-1} \rbrace \subseteq V$,  $A,B \in \Sigma^+$. But then $\ell(w) \in \Sigma^+$ occurs in $\ell(v)$ at some position other than 1 and so there are row indices $i,i' \in \lbrace 1, \dots, m \rbrace$ such that $\ell(w) = \spell(\msaij{i}{x..y})$ occurs in $\spell(\msaij{i'}{1..n})$ at some position other than $g(i',x)$, where $[x..y]$ is the segment of $S$ corresponding to the block of $w$ and $\spell(\msaij{i'}{1..n})$ contains $\ell(v)$, making segment $[x..y]$ invalid.
\end{proof}

\subsection{\efg construction algorithms}\label{sub:efgconstruction}
Just as in the gapless and repeat-free setting, \Cref{lem:characterization} implies that the optimal score $s(j)$ of a (semi-)repeat-free segmentation of the general \msa prefix $\msaij{1..m}{1..j}$ can be computed recursively for a variety of scoring schemes:
\begin{equation}\label{eq:score}
    s(j) =
    \bigoplus_{\substack{j' \,:\, 0 \le j' < j \;\text{s.t.} \\ \msaij{1..m}{j'+1..j} \,\text{is} \\ \text{(semi-)repeat-free}}}
    E \big(
    s(j'), j', j
    \big)
\end{equation}
where operator $\bigoplus$ and function $E$ depend on the desired scoring scheme.
Indeed:
\begin{itemize}
    \item[$i$.] for $s(j)$ to be equal to the optimal score of a segmentation maximizing the number of blocks, set $\bigoplus = \max$ and $E(s(j'),j',j) = s(j') + 1$; for a correct initialization set $s(0) = 0$ and if there is no (semi-)repeat-free segmentation set $s(j) = -\infty$;
    \item[$ii$.] for a segmentation minimizing the maximum segment length\footnote{In the gapless setting, the length of a segment and of the strings of the resulting \efg block coincide. This is not the case in the general setting, and the segment length is an upper bound to the length of the maximum node label in the resulting block.}, set $\bigoplus = \min$ and $E(s(j'),j',j) = \max ( s(j'), L([j'+1,j]) ) = \max ( s(j'), j - j' )$; set $s(0) = 0$ and if there is no (semi-)repeat-free  segmentation set $s(j) = +\infty$.
    \item[$iii$.] for a segmentation minimizing the maximum block height, set $\bigoplus = \min$ and $g(s(j'),j',j) = \max ( s(j'), W([j'+1,j]) )$; set $s(0) = 0$ and if there is no (semi-)repeat-free segmentation set $s(j) = +\infty$.
\end{itemize}

Equi et al.~\cite{Equietal22} studied the computation of semi-repeat-free segmentations optimizing for scores $i.$ and $ii$. The algorithms they developed---and that we will improve in \Cref{sect:preprocessing,sect:minmaxlength}---are based on a common preprocessing of the valid semi-repeat-free segmentation ranges, based on the following observation.
\begin{observation}[Semi-repeat-free right extensions~\cite{Equietal22}]\label{obs:rightextensions}
Given $\msaij{1..m}{1..n}$ over alphabet $\Sigma \cup \lbrace \gap \rbrace$, for any $x < y$ we say that segment $[x+1..y]$ is an extension of prefix $\msaij{1..m}{1..x}$. If extension $[x+1..y]$ is semi-repeat-free, then extension $[x+1..y']$ is semi-repeat-free for all $y < y' \le n$.
\end{observation}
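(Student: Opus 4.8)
The plan is to reduce the claim to the \msa-level characterization of \Cref{lem:characterization}: saying that an extension (i.e.\ a candidate segment) $[a..b]$ is semi-repeat-free means exactly that for all $i,i' \in [1..m]$ the string $\spell(\msaij{i}{a..b})$ occurs in the gaps-removed row $\spell(\msaij{i'}{1..n})$ only at position $g(i',a)$, where $g(i',a)$ depends only on the left boundary $a$ and on $i'$. The key structural remark is that this ``anchor'' position $g(i',x+1)$ is the same for $[x+1..y]$ and for every longer extension $[x+1..y']$, since they share the left boundary $x+1$; so extending to the right never moves the unique position at which an occurrence is permitted.

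With that in hand I would argue as follows. For any $y < y' \le n$ we have the elementary factorization $\spell(\msaij{i}{x+1..y'}) = \spell(\msaij{i}{x+1..y}) \cdot \spell(\msaij{i}{y+1..y'})$, hence $\spell(\msaij{i}{x+1..y}) \prefix \spell(\msaij{i}{x+1..y'})$. Consequently, every occurrence of the longer string $\spell(\msaij{i}{x+1..y'})$ in $\spell(\msaij{i'}{1..n})$, say starting at position $p$, induces an occurrence of the prefix $\spell(\msaij{i}{x+1..y})$ at the same position $p$; that is, the set of occurrences of $\spell(\msaij{i}{x+1..y'})$ is contained in the set of occurrences of $\spell(\msaij{i}{x+1..y})$. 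Since $[x+1..y]$ is semi-repeat-free, the latter set is contained in $\{g(i',x+1)\}$ by \Cref{lem:characterization}, so the former is contained in $\{g(i',x+1)\}$ as well; and $g(i',x+1)$ is indeed the position of $\spell(\msaij{i}{x+1..y'})$ inside row $i'$ when $i = i'$, so it does occur there. Ranging over all $i,i'$ shows that $[x+1..y']$ satisfies the characterization, i.e.\ it is semi-repeat-free.

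There is essentially no obstacle here — this is genuinely an ``observation'' — but two points deserve to be stated rather than skipped: the invariance of the anchor $g(i',x+1)$ under right extension (spelled out above), and the fact that the prefix-inclusion of occurrence sets is used only under the hypothesis that $[x+1..y]$ is already semi-repeat-free, which rules out the degenerate case $\spell(\msaij{i}{x+1..y}) = \varepsilon$ (an empty string would occur at positions other than $g(i',x+1)$ in any nonempty row, contradicting semi-repeat-freeness of $[x+1..y]$; alternatively one may invoke \Cref{ass:empty}). Everything else is the one-line prefix argument, so I would keep the write-up short.
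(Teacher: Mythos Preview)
Your argument is correct: reducing to the \msa-level characterization of \Cref{lem:characterization}, observing that the anchor $g(i',x+1)$ depends only on the left boundary, and using the prefix inclusion of occurrence sets is exactly the right one-line reasoning, and your handling of the potential $\varepsilon$ edge case via \Cref{ass:empty} is appropriate. The paper itself does not supply a proof for this statement---it is recorded as an observation taken from~\cite{Equietal22}---so there is nothing to compare against beyond noting that your justification is sound and matches the intended reading of ``semi-repeat-free'' given in \Cref{lem:characterization}. One cosmetic remark: the sentence ``and $g(i',x+1)$ is indeed the position \ldots\ so it does occur there'' is not needed for the characterization (which only forbids occurrences at \emph{other} positions), so you could drop it to keep the write-up tighter.
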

\noindent Note that in the presence of gaps \Cref{obs:rightextensions} does not hold if we swap the semi-repeat-free notion with the repeat-free one, or if we swap the right extensions with the symmetrically defined left extensions.

To compute $s(j)$, \Cref{eq:score} considers all semi-repeat-free right extensions $[j'+1..j]$ ending at column $j$. Equi et al.\ discovered that the computation of values $s(j)$ can be done efficiently by considering that each semi-repeat-free right extension $[j'+1..j]$ has as prefix a minimal (semi-repeat-free) right extension $[j'+1..f(j')]$, with function $f$ defined as follows.
\begin{definition}[Minimal right extensions~\cite{Equietal22}]\label{def:rightextensions}
Given $\msaij{1..m}{1..n}$, for each $0 \le x \le n-1$ we define value $f(x)$ as the smallest integer greater than $x$ such that segment $[x+1..f(x)]$ is semi-repeat-free, or, in other words, $[x+1..f(x)]$ is the minimal (semi-repeat-free) right extension of prefix $\msaij{1..m}{1..x}$. If there is no semi-repeat-free extension, we define $f(x) = \infty$.
\end{definition}
Indeed, Equi et al.\ in \cite{Equietal22} developed an algorithm computing values $f(x)$ in time $O(mn\log m)$. Using only these values, described by a list of pairs $(x,f(x))$ sorted in increasing order by the second component, they developed two algorithms computing the score of an optimal semi-repeat-free segmentation: in time $O(n)$ for the maximum number of blocks score and in time $O(n \log\log n)$ for the maximum block length score. We will explain in detail how the latter works in \Cref{sect:minmaxlength}, as we will improve its run time to $O(n)$.

\subsection{Minimizing the maximum segment length}\label{sect:minmaxlength}
The improvement on the computation of the minimal right extensions in the case of general \msas from $O(nm \log m)$ to $O(nm)$, that we will obtain in \Cref{sect:preprocessing}, gives us the motivation to improve the $O(n \log \log n)$-time algorithm of Equi et al.~\cite[Algorithm 2]{Equietal22} for an optimal semi-repeat-free segmentation minimizing the maximum block length.
As mentioned in \Cref{sub:efgconstruction}, we can compute $s(j)$ by processing the recursive solutions corresponding to all right extensions $(x,f(x))$ with $f(x) \le j$. For the maximum block length, there are two types of recursion for an optimal solution of $\msaij{1..m}{1..j'}$ using semi-repeat-free $[x+1..j']$ as its last segment:
\begin{description}
    \item[non-leader recursion:] if $j' \le x + s(x)$ then the score of $s(j')$ is equal to $s(x)$, because the length of segment $[x+1..j']$ is less than or equal to $s(x)$; in this case, we say that $[x+1..j']$ is a \emph{non-leader segment};
    \item[leader recursion:] otherwise, if $j' > x + s(x)$, we say that $[x+1..j']$ is a \emph{leader segment}, since it gives score $j' - x$ to an optimal solution constrained to use it as its last segment.
\end{description}
\begin{figure}[htp]
\centering
\includegraphics{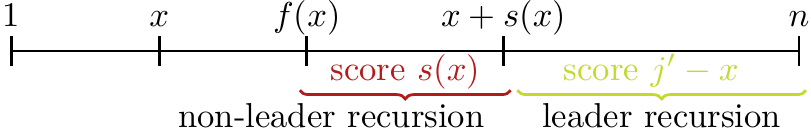}
\caption{Scheme for the score of an optimal semi-repeat-free segmentation of $\msaij{1..m}{1..j'}$ constrained to use $[x+1..j']$ as its last segment.}
\end{figure}
Note that if $x + s(x) < f(x)$ then the non-leader recursion does not occur for $(x,f(x))$. Then, it is easy to see that
\begin{equation}
    s(j) =
    \min \Bigg(
        \min_{\substack{(x,f(x)) : \\ f(x) \le j \le x + s(x)}} s(x),
        \min_{\substack{(x,f(x)) : \\ j > f(x) \;\wedge\; j > x + s(x)}} j - x
    \Bigg)
\end{equation}
so Equi et al.\ correctly solve the problem by keeping track of the two types of recursions with two one-dimensional search trees: the first keeps track of ranges $[f(x)..x+s(x)]$ with score $s(x)$, the second tracks ranges $[x+s(x)+1..n]$ where the leader recursion must be used, saving only the $-x$ part of score $j-x$.
With two semi-infinite range minimum queries, for ranges $[j+1..+\infty]$ and $[-\infty..j]$ respectively, we can compute $s(j)$ via dynamic programming and solve the problem in time $O(n \log \log n)$.

Instead, we can reach a linear time complexity using simpler data structures, thanks to the following observations:
\begin{itemize}
    \item the data structure for the leader recursion can be replaced by a single variable $S$ holding value $\min \lbrace j - x : j > f(x) \wedge j > x + s(x) \rbrace$, so that $S$ is the best score of a segmentation ending with a leader segment $[x+1..j]$;
    \item for the non-leader recursion, we can swap the structure of Equi et al.\ with an equivalent array $\mathtt{C}[1..n]$ such that $\mathtt{C}[k]$ counts the number of available solutions with score $k$ using the non-leader recursion so that a variable $K = \min \lbrace k : \mathtt{C}[k] > 0 \rbrace$ is equal to the best score of a segmentation ending with a non-leader segment $[x+1..j]$.
\end{itemize}
The final and crucial observation is that the two types of recursion are closely related: when $[x+1..j]$ goes from being a non-leader segment to a leader segment, that is, $j = x + s(x) + 1$, we decrease $\mathtt{C}[s(x)]$ by one and update $S$ with value $s(x) + 1 = j - x$ if needed. Therefore, when the best score of $\mathtt{C}[1..n]$ is removed in this way, we do not need to update $K$ to $\min \lbrace  k : \mathtt{C}[i] > 0 \rbrace$, but it is sufficient to increment $K$ by $1$ to ensure that $s(j) = \min (K,S)$, unless other updates of $\mathtt{C}$ and $S$ result in a better score.
The resulting solution is implemented in \Cref{alg:minmaxlength}.

\begin{algorithm}
\KwIn{Minimal right extensions $(x_1, f(x_1)), \dots, (x_n, f(x_n))$ sorted from smallest to largest order by the second component}
\KwOut{Score of an optimal semi-repeat-free segmentation minimizing the maximum block length}
Initialize array $\mathtt{C}[1..n]$ with values in $\lbrace 0, \dots, n \rbrace$ and set all values to $0$\;
Initialize array $\mathtt{L}[1..n]$ as empty linked-lists with values pointers to $(x, f(x))$\;
$\mathtt{minmaxlength}[0] \gets 0$\;
$y \gets 1; \; K \gets 1; \; S \gets \infty$\;
\For{$j \gets 1 \;\KwTo\; n$}{
    \While(\tcp*[f]{Process minimal right extensions}){$j = f(x_y)$}{
        \uIf(\tcp*[f]{Non-leader recursion}){$j \le x_y + \mathtt{minmaxlength}[x_y]$}{
            $\mathtt{C}[\mathtt{minmaxlength}[x_y]] \gets \mathtt{C}[\mathtt{minmaxlength}[x_y]] + 1$\;
            $K.\mathrm{Update} ( \mathtt{minmaxlength}[x_y] )$\;
            $\mathtt{L}[x_y + \mathtt{minmaxlength}[x_y] + 1].\mathrm{add}\big( (x_y, f(x_y)) \big)$\;
        }
        \Else(\tcp*[f]{Leader recursion}){
            $S.\mathrm{Update}( j - x_y )$\;
        }
        $y \gets y + 1$\;
    }
    \For(\tcp*[f]{Update non-leader rec.\ into leader rec.}){$(x,f(x)) \in \mathtt{L}[j]$}{
        $\mathtt{C}[\mathtt{minmaxlength}[x]] \gets \mathtt{C}[\mathtt{minmaxlength}[x]] - 1$\;
        $S.\mathrm{Update}(j - x)$ \tcp*{$j - x = s(x) + 1$}
    }

    \uIf{$\mathtt{C}[K] > 0$}{
        $\mathtt{minmaxlength}[j] \gets \min(K, S)$\;
    }
    \Else{
        $\mathtt{minmaxlength}[j] \gets S$\;
    }

    $S \gets S + 1$ \tcp*{Update the data structures for next iteration}
    \If{$\mathtt{C}[K] = 0$}{
        $K \gets K + 1$\;
    }
}
\Return{$\mathtt{minmaxlength}[n]$}\;
\caption{Main algorithm to find the optimal score of a semi-repeat-free segmentation minimizing the maximum block length. Operations $K.\mathrm{Update}$ and $S.\mathrm{Update}$ replace the current value of the variable with the input, if the input is smaller.}\label{alg:minmaxlength}
\end{algorithm}

\begin{theorem}
Given the minimal right extensions $(x,f(x))$ of $\msaij{1..m}{1..n}$, we can compute in time $O(n)$ the score of an optimal semi-repeat-free segmentation minimizing the maximum block length.
\end{theorem}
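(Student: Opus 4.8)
The plan is to establish correctness and running time of \Cref{alg:minmaxlength} by an induction on $j$ showing that at the start of iteration $j$ the data structures $\mathtt{C}$, $K$, $S$ correctly summarize all recursive solutions relevant to computing $s(j)$, and that the total work is linear. First I would set up the invariant precisely: after the inner \textbf{while}-loop and the \textbf{for}-loop over $\mathtt{L}[j]$ have been processed, (a) $\mathtt{C}[k]$ equals the number of pairs $(x,f(x))$ with $f(x)\le j$ for which $[x+1..j]$ is currently a non-leader segment (i.e.\ $j\le x+s(x)$) and $s(x)=k$; (b) $S=\min\{\,j'-x : (x,f(x)),\ f(x)\le j,\ j'=j \text{ or earlier},\ j'>f(x),\ j'>x+s(x)\,\}$ is the best leader score available at column $j$; and (c) $K\le\min\{k:\mathtt{C}[k]>0\}$, with the stronger claim that $\min(K,S)$ equals the true optimum $s(j)$. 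Granting this invariant, the assignment of $\mathtt{minmaxlength}[j]$ in the \textbf{if}/\textbf{else} yields $s(j)$ by \Cref{eq:score} with $\bigoplus=\min$, since every admissible last segment $[x+1..j]$ is semi-repeat-free iff $f(x)\le j$ (by \Cref{obs:rightextensions} and \Cref{def:rightextensions}) and contributes either $s(x)$ (non-leader) or $j-x$ (leader), exactly the two minima combined into $\min(K,S)$.

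The core of the argument is maintaining the invariant across the transition from iteration $j-1$ to $j$. I would argue three things in order. (1) \emph{Newly valid extensions:} the \textbf{while}-loop processes precisely those $(x_y,f(x_y))$ with $f(x_y)=j$; since the input list is sorted by second component, each is touched exactly once, and it is classified as non-leader or leader by testing $j\le x_y+\mathtt{minmaxlength}[x_y]$, updating $\mathtt{C}$ (and $K$ via $K.\mathrm{Update}$) or $S$ accordingly, and—if non-leader—scheduling its future promotion by appending it to $\mathtt{L}[x_y+s(x_y)+1]$. (2) \emph{Promotions:} the \textbf{for}-loop over $\mathtt{L}[j]$ handles exactly the pairs whose non-leader status expires at column $j$, i.e.\ those with $x+s(x)+1=j$; for each we decrement $\mathtt{C}[s(x)]$ and feed $j-x=s(x)+1$ into $S$. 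Each pair is inserted into exactly one list $\mathtt{L}[\cdot]$ and removed once, so the total cost of both loops over all $j$ is $O(n)$. (3) \emph{Aging $S$ and advancing $K$:} at the end of iteration $j$ we set $S\gets S+1$ because every leader segment $[x+1..j]$ with score $j-x$ becomes $[x+1..(j+1)]$ with score $j+1-x$ for the next column; and when $\mathtt{C}[K]$ has just dropped to $0$ we increment $K$ by one. The delicate point—which I would flag as the main obstacle—is justifying why a single increment of $K$ suffices rather than rescanning $\mathtt{C}$ for the new minimum. The key observation, stated just before the algorithm, is that $\mathtt{C}[K]$ can only reach $0$ via a promotion, and a promotion of a pair with $s(x)=K$ simultaneously inserts $j-x=K+1$ into $S$; hence immediately afterwards $S\le K+1$, so $\min(\mathtt{\text{new }}K,S)$ is correct with $\mathtt{\text{new }}K=K+1$ regardless of whether $\mathtt{C}[K+1]>0$. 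I would also need to check the complementary monotonicity fact that $\mathtt{C}[k']$ for $k'<K$ stays $0$ forever once $K$ has passed $k'$: this follows because $s$ is nondecreasing in its argument along valid extensions and because a segment never regains non-leader status after being promoted, so no future pair can deposit a count below the current $K$.

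Finally I would assemble the running time: the outer loop runs $n$ times; the amortized cost of the two inner loops is $O(n)$ by the one-insertion/one-removal accounting above; every other line per iteration ($K.\mathrm{Update}$, $S.\mathrm{Update}$, the $O(1)$ arithmetic, the $\min$) is constant time; array $\mathtt{C}$ and the linked lists $\mathtt{L}$ are initialized in $O(n)$. Hence the whole algorithm runs in $O(n)$ time given the sorted list of minimal right extensions, which proves the theorem. A closing remark would note that the actual optimal segmentation (not just its score) can be recovered in the same time by storing, together with each $\mathtt{minmaxlength}[j]$, a back-pointer to the $x$ achieving the minimum, exactly as in the standard dynamic-programming traceback.
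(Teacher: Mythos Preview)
Your proposal follows the same approach as the paper's proof, which is itself very terse (it simply asserts that when $\mathtt{C}[K]=0$ one has $\mathtt{C}[j']=0$ for all $j'\le K$ and $S\le K+1$, and that the amortized work is linear). Your write-up supplies the missing detail and the running-time accounting is correct.

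There is, however, one justification that does not hold as stated. You argue that $\mathtt{C}[k']=0$ for $k'<K$ is preserved because ``$s$ is nondecreasing in its argument along valid extensions'' and hence ``no future pair can deposit a count below the current $K$''. Neither claim is true: $s(x)$ is not monotone in $x$ (a longer prefix can admit a strictly better segmentation), and a newly arriving non-leader pair $(x,f(x))$ with $f(x)=j$ may well have $s(x)$ strictly smaller than the current $K$. The invariant is nevertheless maintained, but for a different reason: the algorithm calls $K.\mathrm{Update}(s(x))$ immediately after incrementing $\mathtt{C}[s(x)]$, which pulls $K$ down to $s(x)$ whenever $s(x)<K$. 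So the invariant is enforced \emph{by the update}, not by any monotonicity of $s$. Relatedly, your statement that ``$\mathtt{C}[K]$ can only reach $0$ via a promotion'' is only correct for the transition from positive to zero; $\mathtt{C}[K]$ can also be $0$ at the check because $K$ was incremented at the end of the previous iteration into an entry that was already zero. In that case the bound $S\le K+1$ follows inductively: at the previous check one had $S\le K$ (old $K$), then $S\gets S+1$ and $K\gets K+1$, and the updates in the current iteration can only decrease $S$. With these two fixes your argument goes through and matches the paper's.
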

\begin{proof}
The correctness of \Cref{alg:minmaxlength}
follows from that of \cite[Algorithm 2]{Equietal22} and from the fact that when $\mathtt{C}[K] = 0$ we have that $\mathtt{C}[j'] = 0$ for $1 \le j' \le K$ and $S \le K + 1$. Similarly, the processing of minimal right extensions $(x,f(x))$ and the dynamic management of intervals $[f(x)..s(x) + j']$ takes time $O(n)$ in total, thus the algorithm takes linear time.
\end{proof}

\Cref{alg:minmaxlength} can be easily modified to explicitly compute an optimal segmentation. Indeed, in the main loop of \Cref{alg:mainalg}, we can keep for each $\mathtt{C}[i]$ value $\mathtt{backtrack}_\mathtt{C}[i]$ equal to the largest $x$ such that $[x+1..j]$ results in a non-leader recursive solution with score $i$.
Analogously, we can maintain value $\mathtt{backtrack}_S$ equal to the largest $x$ such that $[x+1..j]$ results in an optimal solution with score $S$ of type leader.
Then, for each $s(j)$ we can compute value $\mathtt{backtrack}[j]$ equal to the largest $x$ such that $[x+1..j]$ is used in a solution of optimal score $s(j)$.
Moreover, combined with \Cref{teo:minimalrightextensions} of \Cref{sub:minimalrightextensions}, we get the following result.
\begin{corollary}\label{corol:minmaxlength}
Given $\msaij{1..m}{1..n}$ from $\Sigma \cup \lbrace \gap \rbrace$, with $\Sigma = [1..\sigma]$ and $\sigma \le mn$, the construction of an optimal semi-repeat-free segmentation minimizing the maximum block length can be done in time $O(mn)$.
\end{corollary}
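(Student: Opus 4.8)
The plan is to assemble three ingredients that are already in place, each contributing an $O(mn)$ or $O(n)$ term: the linear-time computation of the minimal right extensions, the linear-time dynamic program of \Cref{alg:minmaxlength} with backtracking, and a linear-time materialization of the resulting segmentation.

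First I would invoke \Cref{teo:minimalrightextensions} to compute, in $O(mn)$ time, the minimal right extensions $f(x)$ for every $x \in [0..n-1]$, represented as the list of pairs $(x,f(x))$. Here the hypothesis $\sigma \le mn$ is exactly what the preprocessing needs to run in linear time, since it builds a generalized suffix tree of the gaps-removed rows, whose total length is at most $mn$. Pairs with $f(x) = \infty$ never participate in any recursion of \Cref{eq:score}, so they can be discarded; if $f(0) = \infty$ then no semi-repeat-free segmentation exists and we report failure. The surviving pairs must be sorted in nondecreasing order of their second component, which, since there are at most $n$ of them with keys in $[1..n]$, costs $O(n)$ time by counting sort and is subsumed by the $O(mn)$ preprocessing (and is free if \Cref{teo:minimalrightextensions} already returns them sorted).

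Next I would feed this sorted list to \Cref{alg:minmaxlength}, augmented with the backtracking bookkeeping sketched above: alongside the count array $\mathtt{C}$ keep, for each score value $i$, the largest $x$ producing a non-leader solution of score $i$; alongside $S$ keep the largest $x$ producing the current leader-type optimum; and at each $j$ record $\mathtt{backtrack}[j]$, the largest $x$ with $[x+1..j]$ used in an optimal solution of $\msaij{1..m}{1..j}$. Each minimal right extension is inserted once and later moved once from the non-leader to the leader structure, so this bookkeeping adds only $O(n)$ work and, by the preceding theorem, the whole dynamic program still runs in $O(n)$ time. Following the chain $n \mapsto \mathtt{backtrack}[n] \mapsto \cdots \mapsto 0$ then yields an optimal partitioning $\mathcal{P} = [x_1..y_1],\dots,[x_b..y_b]$ in $O(b) = O(n)$ extra time. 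Finally, to obtain the segmentation $S = S^1,\dots,S^b$ itself one extracts, for each block $k$, the distinct strings $\spell(\msaij{i}{x_k..y_k})$ over $i \in [1..m]$: deduplicating the at most $m$ substrings of block $k$, each of length at most $y_k - x_k + 1$, by a radix sort costs $O(m(y_k-x_k+1))$, summing to $O(mn)$; likewise the induced \efg $G(S)$, if wanted, has $O(mn)$ edges (at most $m$ per block boundary) and is listed within the same budget. Adding up, every step is $O(mn)$.

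I do not expect a real obstacle: this is a composition of results already established. The one point needing care is that the backtracking pointers stay consistent with the $K$ and $S$ updates precisely at the transition column $j = x + s(x) + 1$, where $[x+1..j]$ stops being a non-leader segment and becomes a leader segment, so that the reconstructed pointer chain always terminates at $0$ and realizes an optimum; this is a direct extension of the correctness argument for \Cref{alg:minmaxlength} and for \cite[Algorithm 2]{Equietal22}. A secondary point is checking that materializing the blocks' string sets does not exceed the budget, which the $O(mn)$ bound on the total label length guarantees.
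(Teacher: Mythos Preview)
Your proposal is correct and follows the same approach as the paper: the corollary is obtained by combining \Cref{teo:minimalrightextensions} (the $O(mn)$ computation of the minimal right extensions) with the $O(n)$ dynamic program of \Cref{alg:minmaxlength} together with the backtracking arrays $\mathtt{backtrack}_\mathtt{C}$, $\mathtt{backtrack}_S$, and $\mathtt{backtrack}$ that the paper describes in the paragraph immediately preceding the corollary. You flesh out the sorting of the pairs and the final materialization of the blocks more explicitly than the paper does, but the argument is the same.
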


\subsection{Minimizing the maximum height in the gapless setting}\label{sub:gaplesssolution}
For gapless \msas, an $O(mn)$ solution for the construction of segmentations minimizing the maximum block height has been found by Norri et al.~\cite{NCKM19} for the case where the length of a block is limited by a given lower bound $L$, rather than with the repeat-free property.
This result holds under the assumption that $\Sigma$ is an integer alphabet of size $O(m)$.
In this section, we combine the algorithm by Norri et al.\ with the computation of values $v(j)$---that we call the \emph{minimal left extensions}---by M\"akinen et al.~\cite{MCENT20}, obtaining a linear-time solution to the construction of repeat-free founder graphs minimizing the maximum block height.

\begin{observation}[Monotonicity of left extensions~\cite{NCKM19,EGMT19}]\label{obs:left-monotonicity}
Given a gapless $\msaij{1..m}{1..n}$, for any $1 \le x \le y \le n$ we say that $[x..y]$ is a \emph{left extension} of suffix $\msaij{1..m}{y+1..n}$. Then:
\begin{itemize}
    \item if $[x..y]$ is repeat-free then $[x'\!..y]$ is repeat-free for all $x' < x$;
    \item $m \ge H([x'\!..y]) \ge H([x..y])$ for all $x' < x$.
\end{itemize}
Thus, for each $j \in [1..n]$ we define value $v(j)$ as the greatest column index smaller or equal to $j$ such that $[v(j)..j]$ is repeat-free, and we say that $v(j)$ or $[v(j)..j]$ is the \emph{minimal left extension} of $\msaij{1..m}{j+1..n}$. If there is no valid left extensions then $v(j) = -\infty$.
\end{observation}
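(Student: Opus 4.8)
The plan is to prove the two assertions separately, both by elementary arguments about where substrings of the \msa can occur; the only point requiring care is that we use the \emph{gapless} characterization of repeat-freeness, under which $\msaij{i}{x..y}$ is literally a suffix of $\msaij{i}{x'..y}$ whenever $x' < x$. Recall (the gapless specialization of \Cref{lem:characterization}) that in a gapless \msa a segment $[a..b]$ is repeat-free exactly when, for every row $i$, the string $\msaij{i}{a..b}$ occurs in the \msa only with starting column $a$.

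For the first bullet I would fix a repeat-free segment $[x..y]$ and a column index $x'$ with $1 \le x' < x$, and show $[x'\!..y]$ is repeat-free. Take an arbitrary row $i$ and suppose $\msaij{i}{x'..y}$ occurs in some row $i''$ with starting column $p$. Since $x' < x \le y$, the string $\msaij{i}{x..y}$ is the length-$(y-x+1)$ suffix of $\msaij{i}{x'..y}$, so this occurrence forces an occurrence of $\msaij{i}{x..y}$ in row $i''$ with starting column $p + (x - x')$. By repeat-freeness of $[x..y]$ that occurrence must have starting column $x$, hence $p + (x - x') = x$, i.e.\ $p = x'$. Thus every occurrence of $\msaij{i}{x'..y}$ starts at column $x'$; as $i$ was arbitrary, $[x'\!..y]$ is repeat-free, and since this holds for every column index $x'$ with $1 \le x' < x$, the claim follows.

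For the second bullet, note first that $H([a..b])$ is the number of distinct strings among $\msaij{1}{a..b}, \dots, \msaij{m}{a..b}$, which is trivially at most $m$, giving $m \ge H([x'\!..y])$. For the inequality $H([x'\!..y]) \ge H([x..y])$ I would introduce on the row index set the equivalence relations $i \sim_{[a..b]} i'$ iff $\msaij{i}{a..b} = \msaij{i'}{a..b}$, whose number of classes equals $H([a..b])$. If $i \sim_{[x'..y]} i'$ then, restricting both equal strings to their common length-$(y-x+1)$ suffix, $\msaij{i}{x..y} = \msaij{i'}{x..y}$, so $i \sim_{[x..y]} i'$; hence $\sim_{[x'..y]}$ refines $\sim_{[x..y]}$ and therefore has at least as many classes, which is the desired inequality.

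There is essentially no hard step: this is a basic monotonicity fact, and the "obstacle" is merely being precise about occurrence positions. The one thing worth stressing in the writeup is \emph{why} gaplessness is essential: with gaps, matching $\spell(\msaij{i}{x'..y})$ at some position of $\spell(\msaij{i''}{1..n})$ pins the suffix $\spell(\msaij{i}{x..y})$ to a position that is offset by $|\spell(\msaij{i}{x'..x-1})|$, whereas the position one would need is offset by $|\spell(\msaij{i''}{x'..x-1})|$, and these differ when rows $i$ and $i''$ carry different numbers of gaps in columns $[x'..x-1]$. This is exactly why left-extension monotonicity breaks in the general setting and why \Cref{obs:rightextensions} is stated with right extensions and the semi-repeat-free property; the gapless proof above is the clean special case where this obstruction disappears.
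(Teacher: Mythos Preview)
Your proof is correct. The paper does not supply its own proof of this observation---it is stated with citations to \cite{NCKM19,EGMT19} and left as a background fact---so there is no in-paper argument to compare against; your elementary verification via the suffix-of-an-occurrence argument for the first bullet and the refinement-of-equivalence-relations argument for the second is exactly the standard justification, and your closing remark on why gaplessness is needed correctly anticipates the paper's later discussion in \Cref{sub:mainalgorithm}.
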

\begin{definition}[Meaningful left extensions~\cite{NCKM19,EGMT19}]\label{def:leftextensions}
Let $\msaij{1..m}{1..n}$ be a gapless \msa. For any $j \in [1..n]$ we denote with $L_{j} = \ell_{j,1}, \dots, \ell_{j,c_j}$ the \emph{meaningful (repeat-free) left extensions} of $\msaij{1..m}{j+1..n}$, meaning the strictly decreasing sequence of all positions smaller than or equal to $j$ such that:
\begin{enumerate}
    \item $\ell_{j,c_j} < \dots < \ell_{j,2} < \ell_{j,1} = v(j)$, so that $L_j$ captures all repeat-free left extensions of $\msaij{1..m}{j+1..n}$;
    \item $H([\ell_{j,k}..j]) > H([\ell_{j,k} + 1..j])$ for $2 \le k \le c_j$, so that each $\ell_{j,k}$ marks a column where the height of the left extension increases; it follows from \Cref{obs:left-monotonicity} that $\lvert L_j \rvert = c_j \le m$.
\end{enumerate}
If $\msaij{1..m}{j+1..n}$ has no repeat-free left extension, we define $L_j = ()$ and $c_j = 0$.
Otherwise, for completeness, we define $\ell_{j,c_j+1} = -1$.
\end{definition}

Under score $iii.$ \Cref{eq:score} can be rewritten using $L_j = \ell_{j,1}, \dots, \ell_{j,c_j}$ as follows:
\begin{equation}\label{eq:left-segmentation}
    s(j) =
    \min_{k \in [1..c_j]}
    \max\bigg(
    \min_{j' \in [\ell_{j,k+1} + 1..\ell_{j,k}]} s(j'),\quad
    H \big( [\ell_{j,k}..j] \big)
    \bigg)
\end{equation}
and $s(j) = +\infty$ if $c_j = 0$,
so values $L_j$, $H([\ell_{j,k}..j])$, and $\min_{j' \in [\ell_{j,k+1}+1..\ell_{j,k}]} s(j')$ for $k \in [1..c_j]$ make it possible to compute $s(j)$ in $O(m)$ time. On one hand, given a fixed length $L$, Norri et al.~\cite{NCKM19} developed an algorithm to compute these values under the variant of \Cref{def:leftextensions} considering segments of length at least $L$---instead of repeat-free segments---in $O(mn)$ total time.
On the other hand, M{\"a}kinen et al.~\cite{MCENT20} developed a linear-time algorithm to compute values $v(j)$ of a gapless \msa.
The two solutions can be combined by finding values $v(j)$ with the latter, and by using the values as a dynamic lower bound on the minimum accepted segment length.
Since the algorithm we develop in \Cref{sub:mainalgorithm} for the general setting also solves this problem, using the symmetrically defined right extensions, we will not describe such modification in this paper.
\begin{theorem}\label{theo:gaplessheight}
Given a gapless $\msaij{1..m}{1..n}$ from an integer alphabet $\Sigma$ of size $O(m)$, an optimal repeat-free segmentation of $\msaij{1..m}{1..n}$ minimizing the maximum block height can be computed in time $O(mn)$.
\end{theorem}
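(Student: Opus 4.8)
The plan is to evaluate the recurrence \Cref{eq:left-segmentation} at every column $j$, which amounts to computing, for each $j \in [1..n]$, three families of values: the meaningful repeat-free left extensions $L_j = \ell_{j,1}, \dots, \ell_{j,c_j}$ of $\msaij{1..m}{j+1..n}$, the corresponding block heights $H([\ell_{j,k}..j])$, and the minima of $s$ over the contiguous ranges $[\ell_{j,k+1}+1..\ell_{j,k}]$ appearing in \Cref{eq:left-segmentation}. Given these, a single left-to-right pass evaluates $s(1), \dots, s(n)$, spending $O(c_j) = O(m)$ time at column $j$ by \Cref{def:leftextensions}, hence $O(mn)$ in total; recording at each step the minimizing index $k$ (and the witnessing $j'$) lets us backtrack an optimal segmentation in $O(n)$ further time. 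So it suffices to produce the three families within the same budget.

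First I would compute all minimal left extensions $v(j)$ with the linear-time (i.e.\ $O(mn)$-time) algorithm of M\"akinen et al.\ \cite{MCENT20} for gapless \msas; this is where the integer alphabet of size $O(m)$ is used, to build the underlying (generalized) suffix structures in linear time via radix sorting. By \Cref{obs:left-monotonicity}, a left extension $[x..j]$ of $\msaij{1..m}{j+1..n}$ is repeat-free precisely when $x \le v(j)$, i.e.\ precisely when its length is at least $\lambda_j \coloneqq j - v(j) + 1$ (with no valid extension when $v(j) = -\infty$). This reformulation is the crux: although the repeat-free property is not prefix-closed in general, on gapless \msas it coincides, for each fixed right endpoint, with a \emph{per-suffix minimum-length} constraint, which is exactly the setting handled by Norri et al.

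Next I would run the $O(mn)$-time algorithm of Norri et al.\ \cite{NCKM19}, which for a gapless \msa and a length bound $L$ reports, for every right endpoint $j$, the strictly increasing sequence of columns at which the block height of an admissible (length at least $L$) left extension increases, together with those heights. Their algorithm sweeps the right endpoints while maintaining, over a leftward-growing window, the equivalence classes of the $m$ rows and the running height; the bound $L$ is consulted only to decide when a window first becomes admissible, and the per-window work is $O(m)$ plus amortized charges to $(\text{row},\text{column})$ pairs. Replacing the constant $L$ by the column-dependent threshold $\lambda_j$---available in $O(1)$ per column from the previous step---leaves this analysis intact: each $(\text{row},\text{column})$ pair is still touched $O(1)$ times, and truncating each reported list to the repeat-free region yields exactly $L_j$ with $|L_j| = c_j \le m$; this delivers $L_j$ and all heights $H([\ell_{j,k}..j])$ in $O(mn)$ total time. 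The windowed minima of $s$ are maintained incrementally during the same sweep: as $j$ advances, the breakpoint set $\{\ell_{j,k}\}$ changes by amortized $O(1)$ insertions and merges (again by \Cref{obs:left-monotonicity}), and a monotonic stack keyed by these breakpoints keeps the minimum of $s$ over each governed range up to date in amortized $O(1)$---the same bookkeeping as in Norri et al.'s dynamic program---so all the inner minima of \Cref{eq:left-segmentation} are available in $O(mn)$ overall.

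The step I expect to be the main obstacle is the one just described: carefully checking that Norri et al.'s data structures are genuinely agnostic to the \emph{value} of the length threshold and use it only as an admissibility gate, so that substituting the varying $\lambda_j$ for the global $L$ preserves both correctness---which rests entirely on the monotonicity of \Cref{obs:left-monotonicity}---and the $O(mn)$ running time. Once that is verified, the theorem follows by assembling the two cited subroutines with the recurrence \Cref{eq:left-segmentation} and the backtracking described above.
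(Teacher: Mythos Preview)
Your proposal is correct and follows essentially the same approach the paper outlines: compute the minimal left extensions $v(j)$ via M\"akinen et al.~\cite{MCENT20} and then feed them into Norri et al.'s~\cite{NCKM19} algorithm as a column-dependent minimum-length threshold in place of the global bound $L$. The paper itself does not spell out this modification---it explicitly says so and instead points to the right-extension algorithm of \Cref{sub:mainalgorithm} (\Cref{lem:mainalgorithm}) as an alternative route covering the gapless case---so your write-up is, if anything, more detailed than the paper's own treatment of this theorem.
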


\subsection{Revisiting the linear time solution for right extensions}\label{sub:mainalgorithm}
For \msas with gaps and under the semi-repeat-free notion, the monotonicity of left extensions (\Cref{obs:left-monotonicity}) fails \cite[Table 1]{Equietal22}: fixing $j \in [1..n]$, left-extensions $\msaij{1..m}{x..y}$ are not always semi-repeat-free, or \emph{valid}, from $x = v(j)$ backward, and their height could decrease when extending a valid segment.
For example, in the \msa of \Cref{fig:segmentation}, segment $[5..9]$ is semi-repeat-free but segment $[4..9]$ is not, and $H([5..9]) < H([6..9])$.
In this section, we resolve the former of the two issues, developing an algorithm exploiting right extensions and computing the optimal \msa segmentation from left to right, in the same fashion as \cite[Algorithms 1 and 2]{Equietal22} and \Cref{alg:minmaxlength}.
We will discuss the complexity of computing these right extensions in \Cref{sub:complexity}.

Recall the properties of valid right extensions and the definition of values $f(x)$ (\Cref{obs:rightextensions} and \Cref{def:rightextensions}).
\begin{definition}[Meaningful right extensions]\label{def:meaningfulrightextensions}
Given general $\msaij{1..m}{1..n}$, for any $x \in [0..n-1]$ we denote with $R_{x} = r_{x,1}, \dots r_{x,d_x}$ the \emph{meaningful (semi-repeat-free) right extensions} of $\msaij{1..m}{1..x}$, meaning the strictly increasing sequence of all positions greater than $x$ such that:
\begin{itemize}
    \item $f(x) = r_{x,1} < r_{x,2} < \dots < r_{x,d_x}$, so that $R_x$ captures all semi-repeat-free right extensions of $\msaij{1..m}{1..x}$;
    \item $H([x+1..r_{x,k}]) \neq H([x+1..r_{x,k} - 1])$ for $2 \le k \le d_x$, so that each $r_{x,k}$ marks a column where the height of the right extensions \emph{changes}.
\end{itemize}
If $\msaij{1..m}{1..x}$ has no semi-repeat-free right extension, then $R_x = ()$ and $d_x = 0$. Otherwise, for completeness, we define value $r_{x,d_x + 1} = n + 1$.
\end{definition}
Since we will treat all $R_0$, \dots, $R_{n-1}$ together, we complement each value $r_{x,k}$ with column $x$ and the height of the corresponding \msa segment, obtaining triple $(x, r_{x,k}, H([x+1..r_{x,k}]))$.

Thus, under score $iii$.\ \Cref{eq:score} can be rewritten as follows:
\begin{equation}\label{eq:rightscore}
    s(j) =
    \min_{\substack{x \in [0..j-1], \, k \in [1..d_{x}] \,: \\ r_{x,k} \le j < r_{x,k+1}}}
    \max\Big(
    s(x),\;\;
    H \big( [x+1..r_{x,k} ] \big)
    \Big).
\end{equation}
Since each $R_{x}$ defines non-overlapping ranges $[r_{x,k}..r_{x,k+1} - 1]$ over $[1..n]$, at most one range $[r_{x,k}..r_{x,k+1}-1]$ per $R_{x}$ with $x < j$ is involved in the computation of $s(j)$, and the corresponding score depends on which range contains $j$.
Also, note that \Cref{eq:rightscore} is simpler than \Cref{eq:left-segmentation}.
Finally, the algorithm computing the score of an optimal semi-repeat-free segmentation minimizing the maximum block height is described in \Cref{alg:mainalg}, and it works by processing all meaningful right extensions in $R_0, \dots, R_{n-1}$ expressed as triples $(x,r,h)$ and sorted from smallest to largest by the second component.
\begin{algorithm}
\caption{Main algorithm to find the optimal score of a semi-repeat-free segmentation minimizing the maximum block height.
Operation $\min$ returns $+\infty$ if given the empty set in input.}\label{alg:mainalg}
\KwIn{Meaningful right extensions $(x_1, r_1, h_1), \dots, (x_k, r_k, h_k)$ sorted from smallest to largest by the second component.}
\KwOut{Score of an optimal semi-repeat-free segmentation minimizing the maximum block height.}
Initialize array $\mathtt{R}[0..n-1]$ with values in $[0..m] \cup \lbrace \perp \rbrace$ and set all values to $\perp$\;
Initialize array $\mathtt{C}[1..m]$ with values in $[0..m]$ and set all values to $0$\;
$y \gets 1$\;
$\mathtt{minmaxheight}[0] \gets 0$\;
\For{$j \gets 1 \;\KwTo\; n$}{
    \While{$j = r_y$}{
        \If{$\mathtt{R}[x_y] \neq \;\perp$}{
            $\mathtt{C}[\mathtt{R}[x_y]] \gets \mathtt{C}[\mathtt{R}[x_y]] - 1$\tcc*[r]{Remove last solution of $R_{x_y}$}
        }
        $s \gets \max( \mathtt{minmaxheight}[x_y], h_y )$\;
        $\mathtt{R}[x_y] \gets s$\tcc*[r]{Save score corresponding to $(x_y,r_y,h_y)$}
        $\mathtt{C}[s] \gets \mathtt{C}[s] + 1$\tcc*[r]{Add solution corresponding to $(x_y,r_y,h_y)$}
        $y \gets y + 1$\;
    }
    $\mathtt{minmaxheight}[j] \gets \min \lbrace i \in [1..m] : \mathtt{C}[i] > 0 \rbrace$\;
}
\Return{$\mathtt{minmaxheight}[n]$}\;
\end{algorithm}

The main strategy is to keep at each iteration $j$ the best scores of the semi-repeat-free segmentations of $\msaij{1..m}{1..j}$ ending with a right extension $[1..j]$, $[2..j]$, \dots, or $[j..j]$ described by ranges in $R_0$, $R_1$, \dots, or $R_{j-1}$.
Checking each currently valid range individually would result in a quadratic-time solution, so we need to represent these ranges in some other form.
Indeed, by counting these scores with an array $\mathtt{C}[1..m]$ such that $\mathtt{C}[i]$ is equal to the number of available solutions having score $i$, score $s(j)$ can be computed by finding the smallest $i$ such that $\mathtt{C}[i]$ is greater than zero.
Array $\mathtt{C}$ needs to be updated only when $j$ reaches some $r_{x,k}$; in other words, when $j = r$ for some $(x,r,h)$, the score $\max ( s(x), h )$ of an optimal segmentation ending with $[x+1..j]$ must be added to $\mathtt{C}$, and the old score relative to the previous range of $R_{x}$ must be removed.
We can keep track of the scores in an array $\mathtt{R}[0..n-1]$ such that $\mathtt{R}[x]$ is equal to the score associated with the currently valid extension of $R_{x}$.
To compute the actual segmentation, instead of just its score, we can use two backtracking arrays $\mathtt{backtrack}[1..n]$ and $\mathtt{backtrack}_\mathtt{C}[1..m]$: $\mathtt{backtrack}_\mathtt{C}[i]$ is equal to some $x$ such that $[r_{x,k}..r_{x,k+1}-1]$ is a currently valid range resulting in score $i$ and with maximum $r_{x,k+1}$; values of $\mathtt{backtrack}_\mathtt{C}$ can be used to compute $\mathtt{backtrack}[j]$, equal to some $x$ where $[x+1..j]$ is the last segment of an optimal solution for $\msaij{1..m}{1..j}$, with maximum $r_{x,k+1}$.
Then, $\mathtt{backtrack}$ reconstructs an optimal segmentation.
\Cref{alg:mainalg} can be easily modified to update these arrays, if each meaningful right extension $(x, r_{x,k}, H([x+1..r_{x,k}]))$ is augmented with value $r_{x,k+1} - 1$.

\begin{lemma}\label{lem:mainalgorithm}
Given $\msaij{1..m}{1..n}$ and its meaningful right extensions $R_0$, $R_1$, \dots, $R_{n-1}$, we can compute the optimal semi-repeat-free segmentation minimizing the maximum block height in time $O( mn + R )$, with $R \coloneqq \sum_{x=0}^{n-1} \lvert R_x \rvert$.
\end{lemma}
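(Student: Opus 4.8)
The plan is to establish \Cref{lem:mainalgorithm} in two steps: correctness of \Cref{alg:mainalg} against recurrence \eqref{eq:rightscore}, and its running time. For correctness I would prove by induction on $j$ that $\mathtt{minmaxheight}[j] = s(j)$ at the end of outer iteration $j$, the base case being $\mathtt{minmaxheight}[0] = 0 = s(0)$. The core is the following invariant, which I would show holds just before the line setting $\mathtt{minmaxheight}[j]$: for every $x \in [0..j-1]$, if $f(x) \le j$ then $\mathtt{R}[x] = \max(s(x), H([x+1..r_{x,k}]))$ where $k$ is the unique index with $r_{x,k} \le j < r_{x,k+1}$, and otherwise $\mathtt{R}[x] = \perp$; and $\mathtt{C}[i]$ equals the number of $x \in [0..j-1]$ with $\mathtt{R}[x] = i$, for each $i \in [1..m]$. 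Two facts from \Cref{def:meaningfulrightextensions} make this suffice: the ranges $[r_{x,k}..r_{x,k+1}-1]$ partition $[f(x)..n]$ and $H([x+1..\,\cdot\,])$ is constant on each such range, so the contribution $\max(s(x), H([x+1..j]))$ of a last segment $[x+1..j]$ to \eqref{eq:rightscore} depends only on which range of $R_x$ contains $j$; and by \Cref{def:rightextensions} together with \Cref{obs:rightextensions}, $[x+1..j]$ is semi-repeat-free if and only if $j \ge f(x)$, so the $x$ with $f(x) > j$ (exactly those still carrying $\mathtt{R}[x] = \perp$) are correctly excluded.

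The invariant is preserved by the inner \textbf{while} loop: since the triples are sorted by second component and $j$ increases by one each outer iteration, when $j = r_{x,k}$ all triples of $R_x$ with second component at most $j$ have been processed in increasing order of $k$, the last of them overwriting $\mathtt{R}[x]$ with the score of the range starting at $r_{x,k}$ after first decrementing the count of the previously stored score (the guard $\mathtt{R}[x_y] \neq \perp$ correctly skips this decrement for the first range $k=1$). Reading $\mathtt{minmaxheight}[x_y]$ inside the loop is legitimate because $r_{x,k} > x$ forces the triple $(x_y, r_y, h_y)$ to be processed in an outer iteration $j = r_y > x_y$, hence after $\mathtt{minmaxheight}[x_y] = s(x_y)$ was already computed. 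Combining the invariant with \eqref{eq:rightscore} then gives $s(j) = \min\{\, i \in [1..m] : \mathtt{C}[i] > 0 \,\}$, the empty-set case returning $+\infty$ precisely when no valid last segment exists; this closes the induction, so $\mathtt{minmaxheight}[n] = s(n)$. Augmenting each triple with the value $r_{x,k+1}-1$ lets one maintain $\mathtt{backtrack}_\mathtt{C}$ and $\mathtt{backtrack}$ in lockstep with $\mathtt{C}$ and $\mathtt{R}$, so an optimal segmentation (not just its score) can be reconstructed without changing the asymptotics.

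For the running time, initializing $\mathtt{R}[0..n-1]$ and $\mathtt{C}[1..m]$ costs $O(n+m)$, which is within $O(mn)$; each of the $R = \sum_{x=0}^{n-1} \lvert R_x \rvert$ input triples is consumed exactly once by the inner loop in $O(1)$ time, for $O(R)$ total; and the final scan computing $\min\{\, i \in [1..m] : \mathtt{C}[i] > 0 \,\}$ in each outer iteration costs $O(m)$, for $O(mn)$ in total, giving $O(mn + R)$ overall. The step I expect to need the most care is exactly this last scan: unlike in \Cref{alg:minmaxlength}, the minimum positive index of $\mathtt{C}$ cannot be tracked by a monotone pointer, because as noted in \Cref{sub:mainalgorithm} block heights along a family $R_x$ may both increase and decrease, so I would argue that re-scanning $\mathtt{C}$ from scratch each iteration is what keeps the analysis clean, and it is affordable precisely because the factor $m$ already appears in the target bound.
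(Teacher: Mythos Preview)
Your proposal is correct and follows the same approach as the paper: both rely on \Cref{eq:rightscore} for correctness of \Cref{alg:mainalg} and on the $O(m)$ scan of $\mathtt{C}$ per outer iteration plus $O(1)$ work per triple for the running time. Your version simply spells out the loop invariant and the inductive argument that the paper leaves implicit in the prose preceding the lemma. One small omission: the lemma gives you the families $R_0,\dots,R_{n-1}$, not the globally sorted list of triples that \Cref{alg:mainalg} consumes, so you should account for the sorting step; the paper notes that counting sort on the second component does this in $O(n+R)$ time since those values lie in $[1..n]$, which is absorbed by the stated bound.
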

\begin{proof}
The correctness follows from \Cref{eq:rightscore} and from the arguments above.
Sorting the meaningful right extensions $(x,r,h)$ by their second component can be done in time $O(n + R)$, as the meaningful right extensions take value in $[1..n]$.
Moreover, the management of arrays $\mathtt{R}$ and $\mathtt{C}$ takes constant time per meaningful right extensions, and the computation of each $s(j)$ from $\mathtt{C}$ takes $O(m)$ time, reaching the time complexity of $O(mn + R)$.
\end{proof}
For the gapless case, this is an alternative solution to that of \Cref{sub:gaplesssolution}, since $R \in O(mn)$ and the algorithms by Norri et al.\ and Equi et al.\ can be used to compute the meaningful right extensions.

\subsection{Minimizing the maximum block height in the general setting}\label{sub:complexity}
As \Cref{lem:mainalgorithm} states, we can process the meaningful right extensions of \Cref{def:meaningfulrightextensions} to compute the score of an optimal segmentation minimizing the maximum block height.
Unfortunately, in the general setting with gaps, the total number of meaningful right extensions is $O(n^2)$: as it can be seen in \Cref{fig:widthchange}, if any two row suffixes starting from the same column $x$ spell the same string but the spelling is interleaved by gaps, then the height of segment $[x..y]$ can change at any column $y$; this pattern could involve any two rows in any segment of a general $\msaij{1..m}{1..n}$, so in the worst case the quadratic upper bound is tight.

\begin{figure}[htp]
\centering
\includegraphics{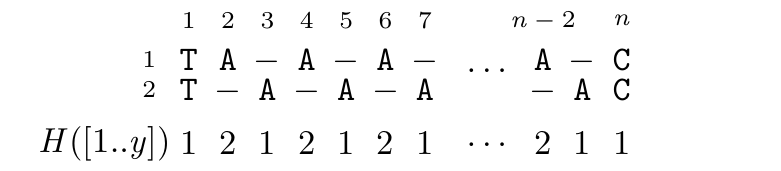}
\caption{Example of $\msaij{1..2}{1..n}$ such that $\lvert R_0 \rvert \in \Theta(n)$.
}\label{fig:widthchange}
\end{figure}
\begin{observation}\label{obs:widthdecrease}
Given $\msaij{1..m}{1..n}$ over alphabet $\Sigma \cup \lbrace\gap\rbrace$, we have that $H([x..y]) > H([x..y+1])$ only if there exist rows $i,i' \in [1..m]$ such that
$\msaij{i}{y+1} = \gap$, $\msaij{i'}{y+1} = c$, and $\spell(\msaij{i}{x..y}) = \spell(\msaij{i'}{x..y+1}) = S \cdot c$, with $c \in \Sigma$ and $S \in \Sigma^+$.
\end{observation}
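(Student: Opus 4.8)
The plan is to recast the claim about block heights as a counting statement on the strings spelled by the $m$ rows, and then extract the structural pattern by pigeonhole. First I would fix notation: for $i \in [1..m]$ set $f(i) = \spell(\msaij{i}{x..y})$ and $g(i) = \spell(\msaij{i}{x..y+1})$, so that $H([x..y]) = \lvert \{ f(i) : i \in [1..m] \} \rvert$ and $H([x..y+1]) = \lvert \{ g(i) : i \in [1..m] \} \rvert$. The elementary but crucial local fact is that extending the segment by column $y+1$ changes each row in one of exactly two ways: if $\msaij{i}{y+1} = \gap$ then $g(i) = f(i)$, and if $\msaij{i}{y+1} = c \in \Sigma$ then $g(i) = f(i)\,c$.

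Assuming $H([x..y]) > H([x..y+1])$, I would pick one representative row $i_t$ for each distinct value $t$ in the image of $f$. There are exactly $H([x..y])$ such rows, and $g$ maps them into a set of size $H([x..y+1]) < H([x..y])$; hence by the pigeonhole principle there are two distinct strings $t_1 \neq t_2$ in the image of $f$ whose representatives $i_1, i_2$ satisfy $g(i_1) = g(i_2)$.

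The heart of the argument is then a case analysis on whether $\msaij{i_1}{y+1}$ and $\msaij{i_2}{y+1}$ are gaps. If both are gaps, then $t_1 = f(i_1) = g(i_1) = g(i_2) = f(i_2) = t_2$, a contradiction. If both are characters $c_1, c_2 \in \Sigma$, then $t_1 c_1 = g(i_1) = g(i_2) = t_2 c_2$, and comparing lengths and then prefixes forces $t_1 = t_2$, again a contradiction. Therefore exactly one of the two columns is a gap; say $\msaij{i_1}{y+1} = \gap$ and $\msaij{i_2}{y+1} = c \in \Sigma$ (the remaining subcase is symmetric). Then $t_1 = g(i_1) = g(i_2) = t_2\,c$, so putting $i := i_1$, $i' := i_2$, and $S := f(i_2) = \spell(\msaij{i'}{x..y})$ gives $\spell(\msaij{i}{x..y}) = \spell(\msaij{i'}{x..y+1}) = S\,c$ with $\msaij{i}{y+1} = \gap$ and $\msaij{i'}{y+1} = c$, exactly as claimed.

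The only delicate point, and the one I expect to be the main obstacle, is guaranteeing $S \in \Sigma^+$ rather than $S = \varepsilon$: without it the pattern could degenerate (for instance a row that is all gaps over $[x..y]$ merging with a length-one row after the extension). I would resolve this by invoking \Cref{ass:empty}, which ensures that in every segment under consideration no row spells the empty string, so $S = \spell(\msaij{i'}{x..y}) \neq \varepsilon$; alternatively, one can simply state the observation for non-degenerate segments.
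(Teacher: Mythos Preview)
Your argument is correct and complete. The paper states this observation without proof, so there is no proof to compare against; your pigeonhole-plus-case-analysis argument is the natural way to establish it and would be exactly what the authors had in mind.

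One small remark on the final point you flag. Your instinct that $S \in \Sigma^+$ needs justification is right: without any side condition the height can decrease via a collision with $S = \varepsilon$ (a row that is all gaps over $[x..y]$ merging with a one-character row), so the observation as literally stated is slightly too strong. Your appeal to \Cref{ass:empty} is not quite on target, though, since that assumption constrains the blocks of a \emph{segmentation}, whereas here $[x..y]$ is an arbitrary segment under analysis, not necessarily one appearing in a proper segmentation. Your alternative fix---stating the conclusion with $S \in \Sigma^*$, or restricting to segments in which no row spells $\varepsilon$---is the cleaner way to close the gap, and in the paper's intended application (right extensions of prefixes that admit a semi-repeat-free segmentation) the non-degeneracy is implicit anyway.
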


The example of \Cref{fig:widthchange} and the context described by \Cref{obs:widthdecrease} seem intuitively artificial, as a high-scoring \msa would try to align the rows to avoid such a situation.
Nonetheless, without further assumptions about gaps in the \msa portions reading the same strings, we are left to compute all meaningful right extensions.
Indeed, let $K_{x,y}$ be the keyword tree of the set of strings $S_{x..y} \coloneqq \lbrace \spell(\msaij{i}{x..y}) : 1 \le i \le m \rbrace$; since $\lvert S_{x..y} \rvert = H([x..y])$, the height of $[x..y]$ is equal to the number of distinct nodes of $K_{x..y}$ corresponding to the strings in $S_{x..y}$.
We can obtain a parameterized solution by noting that if $K_{x..y}$ has $m$ leaves then no two strings in $S_{x..y}$ are one prefix of the other and $H([x..y']) = m$ for all $y' > y$.
\begin{lemma}\label{lem:rightextensions}
Given general $\msaij{1..m}{1..n}$ over integer alphabet $\Sigma \cup \lbrace \gap \rbrace$ of size $\sigma \in O(mn)$, we denote with $\alpha$ the maximum length $y - x + 1$ of any segment $[x..y]$ such that $\spell(\msaij{i}{x..y})$ is a prefix of $\spell(\msaij{i'}{x..y})$ for some $i,i' \in [1..m]$.
Then, we can compute all meaningful right extensions in time $O(m n \alpha \log \sigma )$.
\end{lemma}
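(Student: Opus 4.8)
The plan is to compute, for each $x \in [0..n-1]$, the sequence $R_x$ by explicitly maintaining the keyword tree $K_{x+1..y}$ of the set $S_{x+1..y} = \lbrace \spell(\msaij{i}{x+1..y}) : 1 \le i \le m \rbrace$ while growing $y$ one column at a time starting from $y = x+1$, emitting the height $H([x+1..y])$ --- the number of nodes of $K_{x+1..y}$ that correspond to strings of $S_{x+1..y}$ --- at the appropriate columns. We assume the minimal right extensions $f(x) = r_{x,1}$ have been precomputed in $O(mn)$ time as in \Cref{sect:preprocessing} (if $f(x) = \infty$ then $R_x = ()$ and there is nothing to do); by \Cref{def:meaningfulrightextensions}, $R_x$ consists of $f(x)$ together with the columns $y > f(x)$ at which $H([x+1..y]) \neq H([x+1..y-1])$. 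A column extension $y-1 \to y$ is processed by advancing every row one step: if $\msaij{i}{y} = \gap$ then row $i$ keeps its current node, and if $\msaij{i}{y} = c \in \Sigma$ then row $i$ descends to the child labelled $c$ of its current node, creating it if necessary. Storing the children of each node in a balanced search tree keeps each column extension at cost $O(m\log\sigma)$ --- a $\sigma$-sized array per node would be too wasteful, as $\sigma$ may be $\Theta(mn)$ --- while a per-node counter of the rows located there lets us update $H$ in $O(1)$ per row move; thus we emit $(x, f(x), H([x+1..f(x)]))$ and, for each $y > f(x)$, the triple $(x,y,H([x+1..y]))$ whenever the net change of $H$ over column $y$ is nonzero.

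The time bound hinges on not scanning $y$ all the way to $n$: we stop growing the segment as soon as $K_{x+1..y}$ has exactly $m$ leaves. When this happens each of the $m$ rows occupies a distinct leaf and no occupied node is internal, hence $H([x+1..y]) = m$ and the rows spell $m$ pairwise distinct strings none of which is a prefix of another; this property is preserved when a column is appended, so the height stays $m$ for all $y' \ge y$ and $R_x$ is complete (if the condition is met at some $y \le f(x)$ we simply output the single triple $(x, f(x), m)$, the height being already locked at $m$). The leaf count of the trie is maintainable in $O(1)$ per node creation. By the definition of $\alpha$ this terminal state is reached within $\alpha + 1$ column extensions: if $S_{x+1..y}$ still has two rows whose strings are one a prefix of the other, the segment $[x+1..y]$ has length $y - x \le \alpha$, so as soon as $y - x > \alpha$ the trie has $m$ leaves. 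Dually, every column $y$ at which $H$ changes is witnessed by such a prefix relation in $S_{x+1..y-1}$: when $H$ decreases this is \Cref{obs:widthdecrease}, and when $H$ increases, two rows that are distinct at column $y$ but spelled the same string at column $y-1$ provide it; hence $y - x - 1 \le \alpha$, so $|R_x| = O(\alpha)$ and every element of $R_x$ is emitted within the scanned window.

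Over the $n$ choices of $x$, each contributing at most $\alpha + 1$ column extensions at $O(m\log\sigma)$ each, the total time is $O(mn\alpha\log\sigma)$, which absorbs the $O(mn)$ preprocessing for the $f(x)$. The main obstacle is the argument of the second paragraph: spotting the ``$m$ leaves'' stopping rule and tying the number of scanned columns to $\alpha$ is what turns the naive $O(mn^2\log\sigma)$ scan into the claimed bound; the trie maintenance itself is routine, the only delicate point being the order of operations within a column --- update the trie, then emit a triple if $H$ changed, then test the stopping condition --- so that a height change occurring exactly at column $x+\alpha+1$ is not missed.
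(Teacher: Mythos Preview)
Your proposal is correct and follows essentially the same approach as the paper: both build, for each starting column $x$, the keyword tree $K_{x+1..y}$ incrementally while advancing $y$, maintain a per-node row counter to track $H$, use an $O(\log\sigma)$ dictionary for the children of each trie node, and stop as soon as the trie has $m$ leaves, bounding the number of processed columns per $x$ by $O(\alpha)$. Your write-up is in fact slightly more explicit than the paper's in justifying why every height change lies within the scanned window (you argue separately for increases and invoke \Cref{obs:widthdecrease} for decreases), whereas the paper relies on the same observation more implicitly.
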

\begin{proof}
For each $x \in [0..n-1]$, we can find $R_x$ by incrementally computing trees
$K_{x+1..x+1}$, $K_{x+1..x+2}$, \dots, $K_{x+1..n}$
using a dynamic keyword tree $\mathcal{T}$ supporting the traversal from the root to the leaves and the insertion of a $c$-child to an arbitrary node $v$, with $c \in \Sigma$.
A possible implementation of the procedure is described in \Cref{alg:realrightextensions}.
During the computation, array $\mathtt{V}[1..m]$ keeps track of the nodes corresponding to strings $\spell(\msaij{i}{x+1..r})$, each variable $v.\mathrm{count}$ counts the number of rows reading the corresponding string, and a variable $h$ counts the number of distinct nodes $v$ such that $v.\mathrm{count}$ is greater than zero: if $h$ changes then the corresponding meaningful right extension of $[x+1..r]$ is $(x,r,h)$.
For each $R_x$, the algorithm stops if the number of leaves of $\mathcal{T}$ is $m$, that is it incrementally computes at most $\alpha$ keyword trees; no meaningful right extension is missed, thanks to \Cref{obs:widthdecrease} and the above arguments.
We can compute $R_0$, \dots, $R_{n-1}$ in time $O(m n \alpha \log \sigma)$, since the traversal and insertion operations of $\mathcal{T}$ can be implemented in time $O(\log \sigma)$\footnote{Since only insertion is needed and alphabet $\Sigma$ is fixed, the addition of a child to each node $v$ can be implemented with a dynamic binary tree with height at most $\lceil \log_2 \sigma \rceil$ leaves, growing downwards as the number of children grows.} and the other operations can be supported in constant time.
\end{proof}

\begin{algorithm}
\caption{Algorithm computing all meaningful right extensions $R_0$, \dots, $R_{n-1}$. To efficiently compute keyword trees $K_{x+1..r}$ for $y \in [x+1..n]$, we need a dynamic tree data structure $\mathcal{T}$ supporting navigation and insertions in time $O(\log \sigma)$.}\label{alg:realrightextensions}
\SetKw{KwOutput}{output}
\KwIn{$\msaij{1..m}{1..n}$ from an integer alphabet $\Sigma \cup \lbrace \gap \rbrace$ of size $\sigma \in O(mn)$, minimal right extensions $f(x)$ for $x \in [0..n-1]$.}
\KwOut{Meaningful prefix-aware right extensions $R_0$, \dots, $R_{n-1}$ represented as triples $\big(x,r_{x,k},H([x+1..r_{x,k}])\big)$.}
\For{$x \gets 0 \;\KwTo\; n-1$}{
    Initialize empty keyword tree $\mathcal{T}$, containing only node root\;
    Initialize array $\mathtt{V}[1..m]$ with values pointers to nodes of $\mathcal{T}$ and set all values to node $\rroot$\;
    $h \gets 0$\;
    $r \gets x$\;
    \While{$\mathcal{T}.\mathrm{leaves} < m \;\wedge\; r \le m$}{
        $h' \gets h$\;
        \For{$i \gets 1 \;\KwTo\; m$} {
            \If{$\msaij{i}{r} \neq \gap$}{
                $\mathtt{V}[i].\mathrm{count} \gets \mathtt{V}[i].\mathrm{count} - 1$\;
                \If{$\mathtt{V}[i].\mathrm{count} = 0$}{
                    $h \gets h - 1$\;
                }
                \uIf{$\mathtt{V}[i]$ has an $(\msaij{i}{r})$-child $v$}{
                    $\mathtt{V}[i] \gets v$\;
                }
                \Else{
                    Add new $(\msaij{i}{r})$-child $v$ to $\mathtt{V}[i]$\;
                    $\mathtt{V}[i] \gets v$\;
                }
                \If{$\mathtt{V}[i].\mathrm{count} = 0$}{
                    $h \gets h + 1$\;
                }
                $\mathtt{V}[i].\mathrm{count} \gets \mathtt{V}[i].\mathrm{count} + 1$\;
            }
        }
        \uIf{$r = f(x)$}{
            $\KwOutput\; (x,r,h)$\;
        }
        \ElseIf{$r \ge f(x) \;\wedge\; h \neq h'$}{
            $\KwOutput\; (x,r,h)$\;
        }
        $r \gets r + 1$\;
    }
}
\end{algorithm}

Since the total number of meaningful right extensions is $O(mn \alpha)$, \Cref{lem:rightextensions,lem:mainalgorithm} give the following solution to our segmentation problem.
\begin{theorem}\label{theo:generalheight}
Given general $\msaij{1..m}{1..n}$ over an integer alphabet $\Sigma$ of size $O(mn)$, we can compute the score of an optimal segmentation minimizing the maximum block height in time $O(m n \alpha \log \sigma)$, where $\alpha$ is the length of the longest \msa segment where any two rows spell strings $S,S'$ such that $S$ is a prefix of $S'$.
\end{theorem}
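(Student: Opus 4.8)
The plan is to obtain \Cref{theo:generalheight} as a straightforward composition of the two results just proved. First I would invoke \Cref{lem:rightextensions} to compute, in $O(mn\alpha\log\sigma)$ time, all the meaningful right extensions $R_0,\dots,R_{n-1}$, with each value $r_{x,k}$ emitted as the triple $(x,r_{x,k},H([x+1..r_{x,k}]))$ expected by \Cref{alg:mainalg} (and, if an explicit optimal segmentation is wanted rather than only its score, augmented with $r_{x,k+1}-1$ as noted after that algorithm). Then I would feed these triples to \Cref{alg:mainalg}: by \Cref{lem:mainalgorithm} it returns the score of an optimal semi-repeat-free segmentation of $\msaij{1..m}{1..n}$ minimizing the maximum block height in time $O(mn + R)$, where $R = \sum_{x=0}^{n-1}\lvert R_x\rvert$. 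Summing the two phases and bounding $R$ appropriately yields the claimed running time; correctness is inherited verbatim from \Cref{lem:mainalgorithm}, hence ultimately from \Cref{eq:rightscore} and the characterization of valid segmentations (\Cref{lem:characterization}).

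The only point that needs more than bookkeeping is the size bound $R \in O(mn\alpha)$, which I would read off from the proof of \Cref{lem:rightextensions}. Fix a starting column and consider the run of \Cref{alg:realrightextensions} for that $x$: its inner loop maintains a dynamic keyword tree $\mathcal{T}$ whose number of leaves never decreases, and each iteration emits at most one triple, so it suffices to see that $\mathcal{T}$ reaches $m$ leaves after $O(\alpha)$ iterations, at which point the loop stops. This is where $\alpha$ enters: $\mathcal{T}$ has $m$ leaves exactly when the strings $\spell(\msaij{i}{x+1..r})$, $i \in [1..m]$, are pairwise distinct and none a proper prefix of another, and by definition of $\alpha$ this must hold once the current segment has length exceeding $\alpha$ (no two distinct rows can then spell strings one a prefix of the other). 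Moreover, once $\mathcal{T}$ has $m$ leaves the height is $m$ and, by \Cref{obs:widthdecrease}, it can no longer decrease, so no further meaningful right extension is produced. Hence $\lvert R_x \rvert \in O(\alpha)$ for every $x$, giving $R \in O(n\alpha) \subseteq O(mn\alpha)$, so $O(mn + R) = O(mn + n\alpha)$ is absorbed into $O(mn\alpha\log\sigma)$, and adding the $O(mn\alpha\log\sigma)$ spent in the first phase gives the bound.

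I do not anticipate a genuine obstacle: the statement is essentially a composition of \Cref{lem:rightextensions,lem:mainalgorithm}, and the only non-mechanical ingredient is the $\lvert R_x \rvert \in O(\alpha)$ argument above, which rests on two facts already in hand --- the definition of $\alpha$ and the monotonicity of the block height once it attains $m$ (\Cref{obs:widthdecrease}). What remains is purely arithmetic bookkeeping, namely checking that both $O(mn + n\alpha)$ and $O(mn\alpha)$ are dominated by $O(mn\alpha\log\sigma)$ using $\alpha \ge 1$, $m \ge 1$, and $\sigma \ge 2$.
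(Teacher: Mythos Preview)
Your proposal is correct and follows exactly the approach the paper takes: the paper derives \Cref{theo:generalheight} in one sentence by combining \Cref{lem:rightextensions} and \Cref{lem:mainalgorithm} together with the bound on the total number of meaningful right extensions. Your write-up is in fact more careful than the paper's, since you argue $\lvert R_x\rvert\in O(\alpha)$ and hence $R\in O(n\alpha)$, which is tighter than the $O(mn\alpha)$ the paper states; either bound is absorbed by $O(mn\alpha\log\sigma)$, so the final complexity is unaffected.
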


In the worst case, the number of meaningful right extensions is $O(mn^2)$, $\alpha \in \Theta(n)$, and the time complexity of \Cref{lem:rightextensions} is $\Theta(mn^2 \log \sigma)$.
Thus, we introduce a different generalization of block height from the gapless setting to the general one.
\begin{definition}[Prefix-aware height]\label{def:prefixawareheight}
Given $\msaij{1..m}{1..n}$, we define the \emph{prefix-aware height} of a segment $[x..y]$, denoted as $\pah(\msaij{1..m}{x..y})$ or just $\pah([x..y])$, as the number of distinct strings $S$ in $\lbrace \spell(\msaij{i}{x..y}) : 1 \le i \le m \rbrace$ such that $S$ is not a prefix of some other string of the set.
\end{definition}
Since $\pah([x..y])$ is equal to $H([x..y])$ minus the number of strings spelled in $[x..y]$ that are proper prefixes of other strings of the segment, this refined height is always smaller or equal to the original height: the relative optimal segmentation provides a lower bound for the maximum height in the original setting.
Moreover, the necessary condition for the decrease in height stated in \Cref{obs:widthdecrease} is no longer valid, and it is easy to see that the monotonicity of \emph{prefix-aware} right extensions holds (see \Cref{obs:left-monotonicity}).
Indeed, if we define the \emph{meaningful prefix-aware right extensions} $\overline{R}_0$, \dots, $\overline{R}_{n-1}$ as in \Cref{def:meaningfulrightextensions}, it is easy to see that $\lvert \overline{R}_x \rvert \le m + 1$ for all $x \in [0..n-1]$, so the number of these extensions is $O(mn)$ in total.

\begin{definition}[Meaningful prefix-aware right extensions]\label{def:mpfarightextensions}
Given $\msaij{1..m}{1..n}$ over alphabet $\Sigma \cup \lbrace \gap \rbrace$, for any $x \in [0..n-1]$ we denote with $\overline{R}_{x} = r_{x,1}, \dots r_{x,d_x}$ the \emph{meaningful prefix-aware right extensions} of $\msaij{1..m}{1..x}$ as in \Cref{def:meaningfulrightextensions}, substituting segment height $H$ with prefix-aware height $\overline{H}$.
\end{definition}

Finally, given $\overline{R}_0$, \dots, $\overline{R}_{n-1}$ as input, \Cref{alg:mainalg} correctly computes the score of an optimal segmentation under our refined height, since \Cref{eq:rightscore} still holds.
In \Cref{sect:preprocessing}, we will provide an algorithm based on the generalized suffix tree of the gaps-removed \msa rows computing the prefix-aware extensions in time linear in the \msa size, obtaining the following result.
\begin{theorem}\label{theo:pah}
Given $\msaij{1..m}{1..n}$ over integer alphabet $\Sigma \cup \lbrace \gap \rbrace$ of size $\sigma \le mn$, computing a semi-repeat-free segmentation minimizing the maximum prefix-aware block height takes $O(mn)$ time.
\end{theorem}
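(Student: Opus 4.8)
The plan is to reduce the problem to feeding the \emph{meaningful prefix-aware right extensions} $\overline{R}_0,\dots,\overline{R}_{n-1}$ into \Cref{alg:mainalg}, so that the whole pipeline has three steps: (a) compute the minimal right extensions $f(x)$ for all $x\in[0..n-1]$; (b) compute the triples $(x,r_{x,k},\pah([x+1..r_{x,k}]))$ describing $\overline{R}_0,\dots,\overline{R}_{n-1}$; (c) sort these triples by their second component and run \Cref{alg:mainalg}. Step (c) is already justified: the DP recurrence \Cref{eq:rightscore} remains valid with $H$ replaced by $\pah$, as $\pah$ is still a per-segment quantity (\Cref{def:prefixawareheight}), and by \Cref{lem:mainalgorithm} it runs in time $O(mn+\overline{R})$ with $\overline{R}=\sum_{x=0}^{n-1}\lvert\overline{R}_x\rvert$. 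The key structural fact is that a string of a segment which is not a proper prefix of another string of the segment stays so when the segment is extended to the right, and two such strings remain distinct; hence $\pah$ is monotone non-decreasing under right extension, and since it is at most $m$ we get $\lvert\overline{R}_x\rvert=O(m)$, so $\overline{R}=O(mn)$ and step (c) costs $O(mn)$. It remains to realize steps (a) and (b) in $O(mn)$ time, which is the content of \Cref{sect:preprocessing}.

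For step (a), I would build the generalized suffix tree \gst of the gaps-removed rows $\spell(\msaij{i}{1..n})$ (constructible in $O(mn)$ time, as $\sigma\le mn$) and test, column by column, the semi-repeat-free characterization of \Cref{lem:characterization} by reducing it to the \emph{exclusive ancestor set} problem on \gst, which \Cref{sect:preprocessing} solves in linear time; this is \Cref{teo:minimalrightextensions}, also invoked by \Cref{corol:minmaxlength}, and it yields all values $f(x)$ within the budget.

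For step (b)---the main obstacle---the task is to find, for each $x$, the $O(m)$ columns $y$ at which $\pah([x+1..y])$ increases (it never decreases, by the monotonicity above). Concretely, one tracks, as $y$ grows, the locus in \gst of each string $\spell(\msaij{i}{x+1..y})$ (it stays on a gap column and descends one symbol on a non-gap column) and counts the distinct loci that are not a strict ancestor of another active locus; by \Cref{def:prefixawareheight} this count equals $\pah([x+1..y])$. The difficulty is that the combined length of all these locus trajectories over $x$, $i$ and $y$ is $\Theta(mn^2)$ in the worst case, so they cannot be followed symbol by symbol. The plan is to build also the symmetric prefix tree of the rows and to use the constant-time navigation between the two trees offered by the weighted-ancestor data structure of Belazzougui et al.\ \cite{DBLP:conf/cpm/BelazzouguiKPR21} so as to jump directly between the $O(mn)$ \emph{relevant} columns---essentially those where two row strings first branch apart in \gst, or where a row string becomes its entire gaps-removed suffix---updating along the way, with counters attached to \gst nodes, the number of such maximal loci in amortized constant time and emitting a triple whenever it changes. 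The hard part is the charging argument: one must prove that this event-driven traversal touches only $O(mn)$ columns overall and misses no change of $\pah$; here the $O(m)$ bound on $\lvert\overline{R}_x\rvert$ and the fact that the obstruction of \Cref{obs:widthdecrease} is voided for $\pah$ together supply the budget that makes the amortized analysis close.
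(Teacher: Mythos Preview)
Your three-step pipeline matches the paper's exactly, and your justification of step~(c) via \Cref{lem:mainalgorithm} together with the monotonicity of $\pah$ (hence $\lvert\overline{R}_x\rvert\le m+1$) is the same as the paper's argument in \Cref{sub:complexity}. Step~(a) is likewise identical to \Cref{teo:minimalrightextensions}.

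Where your proposal diverges is in the \emph{realization} of step~(b). You describe it operationally---track the $m$ loci as $y$ grows and use the prefix tree plus weighted-ancestor queries to ``jump between the $O(mn)$ relevant columns''---and defer the correctness and the amortized bound to an unspecified charging argument. The paper replaces this dynamic picture with a purely \emph{static} one: for each column $x$ it considers the forest $F_x$ of \gst rooted at the exclusive ancestors $W_x$ (which has $O(m)$ nodes), defines for every node $v\in F_x$ a single integer $\pos(v)$ (the earliest \msa column at which the string $\sstring(\parent(v))\cdot\cchar(v)$ finishes in some row), and proves the combinatorial lemma
\[
\pah([x+1..y])=\big\lvert\{\hat v\in\widehat{F}_x:\pos(\hat v)\le y\}\big\rvert,
\]
where $\widehat{F}_x$ is obtained from $F_x$ by deleting, for each internal node, one child of minimum $\pos$ (the \emph{first-born}). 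This turns $\overline{R}_x$ into ``sort $O(m)$ integers and read off the level sets'', so no event-driven traversal or charging argument is needed at all. The prefix tree \gpt and the weighted-ancestor structure of Belazzougui et al.\ are used for one precise purpose: given $v\in F_x$, locate in $O(1)$ time the corresponding node $v^{-1}$ in \gpt, where the values $\pos^{-1}$ (and hence $\pos(v)=\pos^{-1}(v^{-1})$) have been precomputed bottom-up once for the whole tree.

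In short, your plan is correct and uses the same ingredients, but the paper's key technical contribution---the $\pos$ function and the first-born counting lemma---is exactly the missing piece that your ``hard part'' paragraph gestures at without supplying. With that lemma in hand, step~(b) is not an amortized dynamic computation but a straightforward $O(m)$-per-column static one.
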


\section{Preprocessing the \msa for its segmentation}\label{sect:preprocessing}
In this section, we study the computation of the minimal right extensions $f(x)$ (\Cref{def:rightextensions}) and the meaningful prefix-aware right extensions $\overline{R}_x$ (\Cref{def:mpfarightextensions}), for $0 \le x \le n - 1$.
Our goal is to compute them in time $O(n)$ and $O(mn)$, respectively.
For the former, Equi et al.\ in \cite{Equietal22} proposed an $O(nm \log m)$-time solution using the following data structure, built from the gaps-removed \msa rows.

\begin{definition}
Given $\msaij{1..m}{1..n}$ from alphabet $\Sigma \cup \lbrace \gap \rbrace$, we define \gst as the generalized suffix tree of the set of strings
$\lbrace \spell( \msaij{i}{1..n} ) \cdot \$_i : 1 \le i \le m \rbrace$, with $\$_1, \dots, \$_m$ $m$ new distinct terminator symbols not in $\Sigma$.\footnote{We added the $m$ new distinct terminators for simplicity, whereas Equi et al.\ used the suffix tree of the concatenation of all gaps-removed rows with a single new symbol $\$$ between each. The suffix tree of this string, if a second unique terminator $\#$ is concatenated to this string, is equivalent to \gst for our purposes.}
\end{definition}
An example of \gst is given in \Cref{fig:gstmsa}. From the suffix tree properties, it follows that for any gaps-removed row $S_i \coloneqq \spell(\msaij{i}{1..n}) \$_i$, with $1 \le i \le m$:
each suffix $S_i[x..\lvert S_i \rvert]$ corresponds to a unique leaf $\ell_{i,x}$ of \gst and vice versa, with $1 \le x \le \lvert S_i \rvert$;
each substring $S_i[x..y]$ corresponds to an explicit or implicit node of \gst in the root-to-$\ell_{i,x}$ path; and each explicit or implicit node corresponds to one or more of such substrings, uniquely identifiable thanks to the leaves covered by the node.
Also, note that \gst does not contain any information about the gap symbols of the \msa, as this information will be added back into the structure thanks to the set of leaves and nodes considered.

\begin{figure}[htp]
\centering
\includegraphics{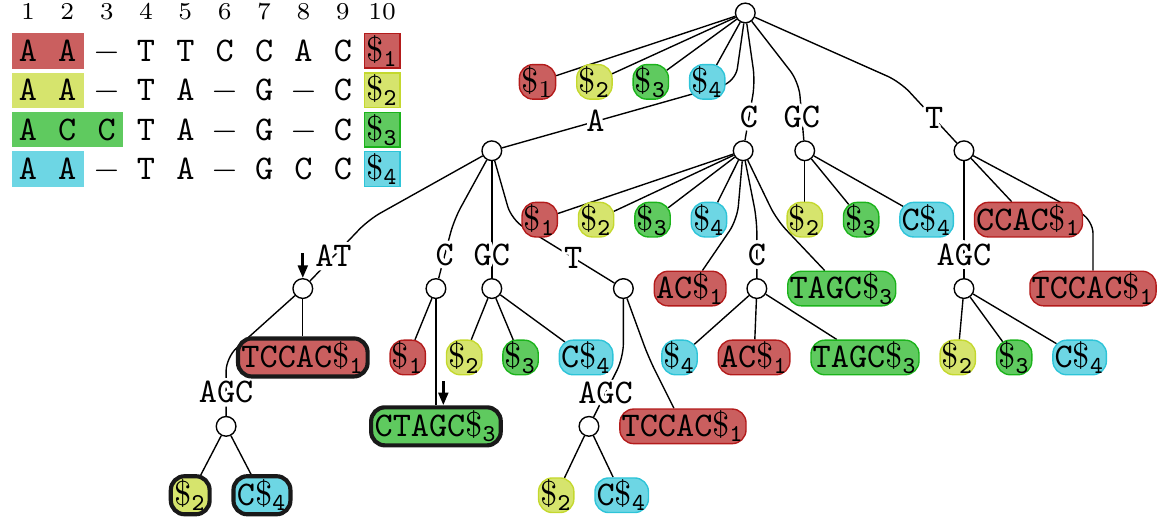}
\caption{Example of an $\msaij{1..4}{1..10}$ and its \gst, where the label to each leaf has been moved inside the leaf itself.
We have also highlighted the leaves corresponding to suffixes $\spell(\msaij{i}{1..n})$ (black outline) and its exclusive ancestors (arrows).
}\label{fig:gstmsa}
\end{figure}

In \Cref{sub:gst} we perform an analysis of \gst similar to that of Equi et al., showing that semi-repeat-free segments of the \msa correspond to a specific set of nodes of \gst covering exactly $m$ leaves. Then, in \Cref{sub:exclusiveancestorset}, we show that the novel resulting problem on the tree structure of \gst, that we call the \emph{exclusive ancestor set problem}, can be solved efficiently, resulting in an algorithm computing the minimal right extensions in linear time, a solution that we describe in \Cref{sub:minimalrightextensions}.
Moreover, in \Cref{sec:paextensions}, we extend these techniques to show that the forests inside \gst identified by the exclusive ancestors can describe the meaningful prefix-aware right extensions: by computing for each node of these forests the position indicating where the first \msa occurrence of the related string ends, and by sorting these positions, we can compute the meaningful right extension in $O(m^2n)$ global time.
Finally, in \Cref{sub:pos} we describe how these positions can be computed efficiently, thanks to the \emph{generalized prefix tree} of the gaps-removed rows and a map between the suffix tree and prefix tree nodes.
This map is computable in linear time, as an application of affix trees or affix arrays \cite{DBLP:journals/algorithmica/Maass03,DBLP:journals/tcs/Strothmann07}, or with the data structure for weighted ancestor queries of Belazzougui et al.\ \cite{DBLP:conf/cpm/BelazzouguiKPR21}, making it possible to navigate from the suffix tree to the prefix tree in constant time, reaching global $O(mn)$ time.

\subsection{Semi-repeat-free segments in the generalized suffix tree}\label{sub:gst}
The following has been stated and exploited in~\cite{Equietal22}.
\begin{definition}[Semi-repeat-free substrings]\label{def:semirepeatfreesubstrings}
Recall the definition of a semi-repeat-free segment (\Cref{lem:characterization}). Given substring $\msaij{i}{x..y}$ of $\msaij{1..m}{1..n}$ such that $\spell(\msaij{i}{x..y}) \in \Sigma^+$, we say that $\msaij{i}{x..y}$ is a \emph{semi-repeat-free substring} if for all $1 \le i' \le m$ string $\spell(\msaij{i}{x..y})$ occurs in gaps-removed row $i'$ only at position $g(i',x)$ (or it does not occur at all).
\end{definition}
\begin{observation}\label{obs:rowsegmentf}
Segment $[x..y]$ is semi-repeat-free if and only if all substrings $\msaij{i}{x..y}$ are semi-repeat-free, for $1 \le i \le m$.
If $\msaij{i}{x..y}$ is semi-repeat-free, then $\msaij{i}{x..y'}$ is semi-repeat-free for all $y < y' \le n$.
Recall the definition of minimal right extension $f(x)$, for $0 \le x \le n-1$ (\Cref{def:rightextensions}).
Let $f^i(x)$ be the smallest integer greater than $x$ such that substring $\msaij{i}{x+1..f^i(x)}$ is semi-repeat-free: it is easy to see that $f(x) = \max_{i=1}^{m} f^i(x)$.
\end{observation}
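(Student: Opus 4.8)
The plan is to establish the three assertions of \Cref{obs:rowsegmentf} in turn, each by unfolding the relevant definitions and invoking \Cref{lem:characterization} together with \Cref{def:semirepeatfreesubstrings}; the whole argument is essentially bookkeeping, so the only care needed is to keep the coordinate translation $g(i',\cdot)$ and the non-emptiness convention of \Cref{def:semirepeatfreesubstrings} consistent with the segment-level statement of \Cref{lem:characterization}.

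For the first equivalence I would simply observe that, by \Cref{lem:characterization}, segment $[x..y]$ is semi-repeat-free precisely when for \emph{every} pair $i,i' \in [1..m]$ the string $\spell(\msaij{i}{x..y})$ occurs in $\spell(\msaij{i'}{1..n})$ only at position $g(i',x)$. Quantifying first over $i'$ and then over $i$, this is exactly the conjunction over $i \in [1..m]$ of the condition of \Cref{def:semirepeatfreesubstrings} that $\msaij{i}{x..y}$ be a semi-repeat-free substring. The only corner case is a row spelling $\varepsilon$ in $[x..y]$: then the left-hand condition fails, since $\varepsilon$ occurs at every position of $\spell(\msaij{i'}{1..n})$, and the right-hand condition fails as well, since \Cref{def:semirepeatfreesubstrings} presupposes $\spell(\msaij{i}{x..y}) \in \Sigma^+$; by \Cref{ass:empty} this degenerate case is excluded for proper segmentations anyway.

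For the row-level monotonicity, I would fix a semi-repeat-free substring $\msaij{i}{x..y}$ and any $y'$ with $y < y' \le n$, and set $P = \spell(\msaij{i}{x..y})$ and $P' = \spell(\msaij{i}{x..y'})$, noting $P \prefix P'$ and $P' \in \Sigma^+$. Every occurrence of $P'$ in a gaps-removed row $\spell(\msaij{i'}{1..n})$ starting at some position $p$ induces an occurrence of its prefix $P$ starting at the same $p$, and since $\msaij{i}{x..y}$ is semi-repeat-free this forces $p = g(i',x)$. Hence $P'$ too occurs in every row $i'$ only at $g(i',x)$, so $\msaij{i}{x..y'}$ is a semi-repeat-free substring.

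For the identity $f(x) = \max_{i=1}^{m} f^i(x)$ I would combine the previous two parts. For $y > x$, segment $[x+1..y]$ is semi-repeat-free iff every $\msaij{i}{x+1..y}$ is a semi-repeat-free substring (first part), and by the row-level monotonicity the set of valid $y$ for a fixed $i$ is the up-set $\{\, y : y \ge f^i(x) \,\}$, which is empty (equivalently $f^i(x) = \infty$) when row $i$ has no valid extension. Intersecting over $i$, $[x+1..y]$ is semi-repeat-free iff $y \ge f^i(x)$ for all $i$, i.e.\ iff $y \ge \max_i f^i(x)$, so the least such $y$ is $\max_i f^i(x)$, with the maximum read as $\infty$ if some $f^i(x) = \infty$ --- matching \Cref{def:rightextensions}. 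I do not expect a genuine obstacle here: the single point to double-check is the interface between the two notions of ``semi-repeat-free'', since \Cref{lem:characterization} is phrased at the segment level against full gaps-removed rows whereas \Cref{def:semirepeatfreesubstrings} isolates a single row substring --- in particular that $g(i',x)$ is the same anchor in both, which holds because it is defined in both as $x$ minus the number of gaps in $\msaij{i'}{1..x}$.
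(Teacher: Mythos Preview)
The paper states this as an \emph{Observation} and does not supply a proof; it merely remarks that ``it is easy to see that $f(x)=\max_{i=1}^m f^i(x)$'' and attributes the statement to \cite{Equietal22}. Your argument is correct and is precisely the natural unfolding of \Cref{lem:characterization} and \Cref{def:semirepeatfreesubstrings} that the paper leaves implicit: the first equivalence is a direct regrouping of quantifiers, the row-level monotonicity follows because an occurrence of $P'$ yields an occurrence of its prefix $P$ at the same position, and the identity for $f(x)$ is obtained by intersecting the per-row up-sets $[f^i(x)..n]$. Your handling of the $\varepsilon$ corner case via \Cref{ass:empty} is also appropriate.
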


This translates into a specific set of implicit or explicit nodes of \gst. 
The fact that we added a unique terminator symbol to each row is equivalent to the addition of an \msa column spelling $\$_1 \cdots \$_m$ at position $n+1$, which means that $[x+1..n+1]$ is always semi-repeat-free and the minimal right extensions such that $f(x) = \infty$ become $f(x) = n + 1$.
\begin{lemma}\label{lem:gst}
Given $m$ row substrings $\msaij{i}{x..y_i}$ of $\msaij{1..m}{1..n}$ such that $\spell(\msaij{i}{x..y_i}) \in \Sigma^+$ for $1 \le i \le m$,
let $W = \lbrace w_1, \dots, w_k \rbrace$ be the set of implicit or explicit nodes of \gst corresponding to strings
$\lbrace \spell( \msaij{i}{x..y_i} ) : 1 \le i \le m \rbrace$.
Then
    $\msaij{i}{x..y_i}$ is semi-repeat-free for all $1 \le i \le m$
    if and only if
    $W$ covers exactly $m$ leaves in \gst.
\end{lemma}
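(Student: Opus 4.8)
The plan is to prove each direction by unpacking what ``$W$ covers exactly $m$ leaves'' means in terms of leaf occurrences of the strings $\spell(\msaij{i}{x..y_i})$, and then connecting this to \Cref{def:semirepeatfreesubstrings}. Throughout, fix the row substrings $\msaij{i}{x..y_i}$, write $P_i \coloneqq \spell(\msaij{i}{x..y_i}) \in \Sigma^+$, and let $w_i \in W$ be the (implicit or explicit) node of \gst spelled by $P_i$. The key observation is that the leaves covered by $w_i$ are exactly the leaves $\ell_{i',x'}$ such that $P_i$ is a prefix of the gaps-removed suffix $S_{i'}[x'..|S_{i'}|]$, i.e.\ such that $P_i$ occurs at position $x'$ in gaps-removed row $i'$. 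In particular, since $P_i = \spell(\msaij{i}{x..y_i})$ genuinely occurs at position $g(i,x)$ in row $i$, the leaf $\ell_{i,g(i,x)}$ is always covered by $w_i$; so $W$ always covers \emph{at least} $m$ leaves when counted with multiplicity over $i \in [1..m]$. Covering \emph{exactly} $m$ leaves therefore means: the sets of leaves covered by the $w_i$ are pairwise disjoint, and each $w_i$ covers exactly the single leaf $\ell_{i,g(i,x)}$.

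For the ($\Leftarrow$) direction, assume $W$ covers exactly $m$ leaves. Then, by the discussion above, each $w_i$ is a leaf covering only $\ell_{i,g(i,x)}$. Suppose for contradiction that some $\msaij{i}{x..y_i}$ is not semi-repeat-free: then $P_i$ occurs in some gaps-removed row $i'$ at a position $x' \neq g(i',x)$. That occurrence corresponds to a leaf $\ell_{i',x'}$ covered by $w_i$, distinct from $\ell_{i,g(i,x)}$ (distinct because either $i' \neq i$, or $i' = i$ and $x' \neq g(i,x)$), contradicting that $w_i$ covers a single leaf. Hence all $\msaij{i}{x..y_i}$ are semi-repeat-free.

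For the ($\Rightarrow$) direction, assume every $\msaij{i}{x..y_i}$ is semi-repeat-free. Then $P_i$ occurs in row $i'$ only at position $g(i',x)$, for every $i'$; so the leaves covered by $w_i$ are exactly those $\ell_{i',g(i',x)}$ for which $P_i$ actually occurs at $g(i',x)$ in row $i'$. I still need to argue (a) each $w_i$ covers exactly one leaf and (b) the leaf sets are disjoint across $i$. For (a): if $\ell_{i',g(i',x)}$ is covered by $w_i$ with $i' \neq i$, then $P_i$ and $P_{i'}$ both occur at the same position $g(i',x)$ of row $i'$, so one of them is a prefix of the other; without loss of generality $P_i \prefix P_{i'}$. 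But then $P_i$ occurs at position $g(i,x)$ of row $i$ and, as a prefix of $P_{i'} = \spell(\msaij{i'}{x..y_{i'}})$, also at position $g(i,x)$ in the gaps-removed row $i$ starting from column $x$ --- this is fine --- the issue arises when $P_i$ is a \emph{proper} prefix: then consider the occurrence of $P_i$ inside the occurrence of $P_{i'}$; since $P_{i'}$ occurs at $g(i',x)$ and $P_i \properprefix P_{i'}$ starting at the same place, this does not immediately violate semi-repeat-freeness of $\msaij{i}{x..y_i}$. So the cleaner route for (a)+(b) simultaneously is: the total leaf count $|{\bigcup_i (\text{leaves of } w_i)}|$ is at most $m$ because each leaf $\ell_{i',x'}$ that is covered by \emph{any} $w_i$ must have $x' = g(i',x)$ (semi-repeat-freeness applied to that $i$), and for fixed $i'$ there is only one such position; hence at most one leaf per row $i'$, i.e.\ at most $m$ leaves total. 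Combined with the ``at least $m$'' bound from the always-present leaves $\ell_{i,g(i,x)}$, the count is exactly $m$.

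The main obstacle is the ($\Rightarrow$) direction, specifically the bookkeeping around the case where distinct nodes $w_i = w_{i'}$ coincide (when $P_i = P_{i'}$) or are nested (when one is a proper prefix of the other): I must be careful that ``$W$ covers exactly $m$ leaves'' is counting the union of covered leaf sets, so coincidences and nestings reduce the union but are consistent with semi-repeat-freeness only because a proper-prefix relation $P_i \properprefix P_{i'}$ between two \emph{aligned} segments starting at the same column $x$ does not by itself break \Cref{def:semirepeatfreesubstrings}. The slick fix, as sketched above, is to avoid tracking $W$ as a set of $k \le m$ distinct nodes and instead bound the union of covered leaves directly: semi-repeat-freeness forces every covered leaf to be of the form $\ell_{i',g(i',x)}$, of which there is at most one per row, giving the upper bound $m$; the trivial lower bound $m$ comes from the $m$ genuine occurrences $\ell_{i,g(i,x)}$, and these may well collapse onto fewer than $m$ distinct leaves only if two rows share an aligned prefix relation --- but then the collapsed leaf is still counted, and the arithmetic ``exactly $m$'' is recovered by noting the collapse in the lower-bound count is matched by a collapse of the corresponding $w_i$'s covered-leaf sets in the upper-bound count. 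I would phrase this final matching argument carefully, perhaps by a direct injection from $[1..m]$ into the covered leaves sending $i \mapsto \ell_{i,g(i,x)}$ and showing it is a bijection onto $\bigcup_i(\text{leaves of }w_i)$.
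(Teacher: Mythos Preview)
Your approach is the same as the paper's: identify the $m$ ``baseline'' leaves $\ell_{i,g(i,x)}$ that are always covered by $W$, and then argue that $W$ covers an \emph{additional} leaf if and only if some $P_i$ has an occurrence at a position other than $g(i',x)$ in some row $i'$.

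However, you miss one elementary observation that the paper uses implicitly, and this causes both a false intermediate claim in your $(\Leftarrow)$ direction and the long detour in your $(\Rightarrow)$ direction. The observation is: the $m$ baseline leaves $\ell_{1,g(1,x)},\dots,\ell_{m,g(m,x)}$ are always \emph{pairwise distinct}, simply because \gst is built with a distinct terminator $\$_i$ per row, so leaves belonging to different rows can never coincide. Hence $W$ always covers at least $m$ distinct leaves, with no ``multiplicity'' caveat needed.

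Once you have this, your claim in $(\Leftarrow)$ that ``each $w_i$ covers exactly the single leaf $\ell_{i,g(i,x)}$'' is seen to be false (take $P_i=P_{i'}$ for $i\neq i'$; then $w_i=w_{i'}$ covers at least two baseline leaves), but it is also unnecessary: ``exactly $m$ leaves'' together with ``at least the $m$ distinct baseline leaves'' means $W$ covers \emph{precisely} the set $\{\ell_{i',g(i',x)}:i'\in[1..m]\}$. An invalid $P_i$ then produces a covered leaf $\ell_{i',x'}$ with $x'\neq g(i',x)$, which is not in this set (it is the wrong position in row $i'$), giving the contradiction directly. Your argument as written derives the contradiction from the single-leaf claim, so strictly speaking it has a gap.

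Similarly, in $(\Rightarrow)$ your ``slick fix'' is exactly right and is all that is needed: semi-repeat-freeness forces every covered leaf to lie in $\{\ell_{i',g(i',x)}:i'\in[1..m]\}$, a set of size $m$. There is no collapse to worry about in the lower bound, and the final paragraph about injections and matching can be dropped entirely.
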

\begin{proof}
By construction of \gst, $W$ covers the $m$ leaves $\ell_{1,z_1}, \dots, \ell_{m,z_m}$, with $z_i = g(i,x)$, so we only need to prove that if some $\msaij{i}{x..y_i}$ is not semi-repeat-free, or \emph{invalid}, then $W$ covers more than $m$ leaves, and vice versa.

($\Leftarrow$) Let $\msaij{i}{x..y_i}$ be invalid, i.e.\ $\spell(\msaij{i}{x..y_i})$ occurs in $S_{i'}$ at some position $\hat{z}$ other than $z_{i'}$, for some row $1 \le i' \le m$. Then the node of \gst corresponding to string $\spell(\msaij{i}{x..y_i})$ covers leaf $\ell_{i'\!,\hat{z}} \neq \ell_{i'\!,z_{i'}}$, thus $W$ covers more than $m$ leaves.

($\Rightarrow$) Let $\ell_{i',\hat{z}}$ be a leaf of \gst other than leaves $\ell_{1,z_1}, \dots, \ell_{m,z_m}$ covered by some node $w \in W$. By construction, $w$ corresponds to $\spell( \msaij{i}{x..y_i} )$ for some $1 \le i \le m$, so we have that $\spell( \msaij{i}{x..y_i} )$ occurs in $S_{i'}$ at some position other than $g(i',x)$, since $\ell_{i'\!,\hat{z}} \neq \ell_{i'\!,z_{i'}}$. Thus, $\msaij{i'}{x..y_i}$ is invalid.
\end{proof}
Note that the correctness of \Cref{lem:gst} does not hold if we swap the semi-repeat-free notion with the repeat-free one.

\Cref{lem:gst}, combined with \Cref{obs:rowsegmentf}, implies that the problem of computing values $f^i(x)$ for all $i \in [1..m]$ can be solved by analyzing the tree structure of \gst against the \msa suffixes.
Indeed, let $\mathcal{L}_x \coloneqq \lbrace \ell_{i,z_i} : 1 \le i \le m, z_i = g(i,x+1) \rbrace$ be the leaves of \gst corresponding to the suffixes $\spell(\msaij{i}{x+1..n})$.
For each row $1 \le i \le m$, the first semi-repeat-free prefix of $\spell(\msaij{i}{x+1..n})$ corresponds to the first implicit or explicit node $v$ of \gst in the root-to-$\ell_{i,z_i}$ path such that $v$ covers only leaves in $\mathcal{L}_x$.
The fact that \gst is a compacted trie is not an issue: the parent of $v$ in the suffix trie is branching, since it covers more leaves than $v$, so the first explicit node of \gst in the root-to-$\ell_{i,z_i}$ path covering only leaves in $\mathcal{L}_x$ is the first explicit descendant $w$ of $v$, thus we can identify $v$ by finding $w$.
Finally, $f^i(x)$ is computed by retrieving the smallest column index $y$ such that $\spell(\msaij{i}{x+1..y}) = \sstring(\parent(w)) \cdot \cchar(w)$, where $\sstring(u)$ is the concatenation of edge labels of the root-to-$u$ path, and $\cchar(u)$ is the first symbol of the edge label from $\parent(u)$ to $u$.
In other words, $y$ corresponds to the $k$-th non-gap symbol of \msa row $i$, with $k = \mathrm{rank}(\msaij{i}{1..n}, x) + \stringdepth(\parent(w)) + 1$, where $\mathrm{rank}(\msaij{i}{1..n}, x)$ is the number of non-gap symbols in $\msaij{i}{1..x}$ and $\stringdepth(u) \coloneqq \lvert \sstring(u) \rvert$.
For example, in \Cref{fig:gstmsa} the leaves of $\mathcal{L}_0$ have been marked and so have the shallowest ancestors covering only leaves in $\mathcal{L}_0$.

\subsection{Exclusive ancestor set}\label{sub:exclusiveancestorset}
The results of the previous section show that we can compute the minimal right extensions by solving multiple instances of the following problem on the tree structure of \gst.
\begin{problem}[Exclusive ancestor set]\label{prob:exclusiveancestorset}
Let $T = (V,E,\mathrm{root})$ be a rooted ordered tree, with $L^T \subseteq V$ the set of its leaves. Given $T$ and a subset of leaves $L \subseteq L^T$, find the minimal set $W$ of exclusive ancestors of $L$ in $T$, i.e.\ the minimal set $W \subseteq V$ such that $W$ covers all leaves in $L$ and only leaves in $L$.
Can $T$ be preprocessed to support the efficient solving of multiple instances of the problem?
\end{problem}

As is the case for \gst, we can assume that each internal node of $T$ has at least two children, otherwise, a linear-time processing of $T$ can be employed to compact its unary paths. Indeed, after a linear-time preprocessing of $T$, any instance of \Cref{prob:exclusiveancestorset} defined by $L$ and $T$ can be solved in time $O(\lvert L \rvert)$ by a careful traversal of the tree with the following procedure, that we describe informally:
\begin{enumerate}
    \item partition $L$ in $k$ maximal sets $L_1$, \dots, $L_k$ of leaves contiguous in the ordered traversal of $T$, to be processed independently (if two leaves belong to different contiguous sets, any common ancestor cannot be part of the solution);
    \item for each $L_i$, with $1 \le i \le k $, start from the leftmost leaf $\ell_i$ and ascend in the tree until the closest ancestor of $\ell_i$ that covers some leaf not in $L_i$;
    \item upon failure in step 2., add the last safe ancestor to the solution $W$ and if there are still uncovered leaves in $L_i$ repeat steps 2.\ and 3.\ starting from the leftmost uncovered leaf.
\end{enumerate}
An example of the procedure is shown in \Cref{fig:exclusiveancestorset}. The failure condition of step 2.\ can be evaluated by checking if both the leftmost leaf and the rightmost leaf in the subtree of the candidate replacement are still in set $L_i$, and step 2.\ always terminates if we assume that $L$ is a nontrivial instance: if $L \subset L^T$, then the root of $T$ is not the solution to the problem.
\begin{figure}[htp]
\centering
\includegraphics{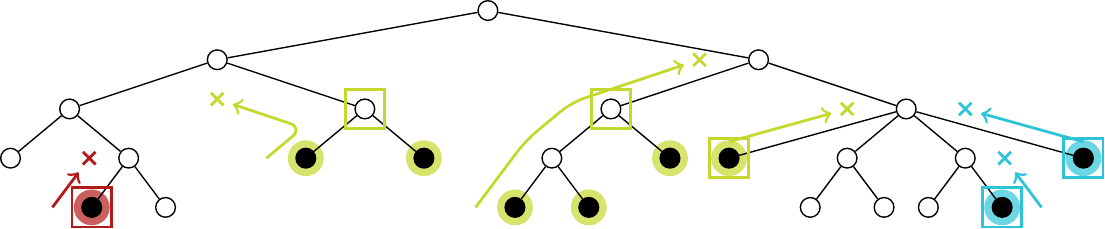}
\caption{Example of an instance of exclusive ancestor set, where the set of leaves $L$ corresponds to the black leaves: the algorithm partitions $L$ into sets of contiguous leaves (shown as red, lime, and blue leaves), and for each set, it finds the exclusive ancestors (marked with rectangles). Each arrow shows the ascent of step 2.\ up the tree until the node corresponding to the failure condition, which is marked with a cross.}\label{fig:exclusiveancestorset}
\end{figure}

Assuming the leaves of $T$ are sorted, step 1.\ can be implemented efficiently: we can partition $L$ into sets of contiguous leaves by coloring leaves in $L$ and finding all the leaves with the preceding leaf not in $L$.
We can easily preprocess $T$ to support the required operations in constant time, leading to a time complexity of $O(\lvert L \rvert)$, since any forest built on top of leaves $L$ has $O(\lvert L \rvert)$ nodes.
\begin{lemma}\label{lem:exclusiveancestorset}
The exclusive ancestor set problem on a rooted ordered tree $T = (V,E,\mathrm{root})$ and a subset $L$ of its leaves can be solved in time O$(\lvert L \rvert)$, after a $O(\lvert V \rvert)$-time preprocessing to support operations $v.\leftmostleaf$, $v.\rightmostleaf$ on any node $v \in V$ and operations $\ell.\prevleaf$, $\ell.\nextleaf$, and the binary coloring of any leaf $\ell \in L^T$ in constant time.
\end{lemma}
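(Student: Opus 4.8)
The plan is to show that each of the three phases of the informal procedure can be carried out in time $O(\lvert L \rvert)$ once $T$ has been preprocessed in $O(\lvert V \rvert)$ time, and that this procedure is correct. First I would establish the preprocessing: compute an Euler tour / DFS numbering of $T$ so that for every node $v$ we can access pointers $v.\leftmostleaf$ and $v.\rightmostleaf$ (the first and last leaves in the subtree of $v$, in the left-to-right leaf order) in constant time, and so that the leaves are stored in a doubly linked list giving $\ell.\prevleaf$ and $\ell.\nextleaf$ in constant time; we also equip each leaf with a mutable color bit. All of this is standard and takes $O(\lvert V \rvert)$ time and space. Additionally I would precompute parent pointers so that the ascent in step~2 moves from a node to its parent in constant time.

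Next I would argue correctness of the three-step procedure, which is the conceptual core. The key claim for step~1 is that if two leaves $\ell, \ell'$ of $L$ are separated in the leaf order by a leaf not in $L$, then their lowest common ancestor $u$ covers that intervening leaf, hence $u \notin W$ and no ancestor of both $\ell$ and $\ell'$ can belong to $W$; therefore the maximal runs $L_1,\dots,L_k$ of contiguous leaves of $L$ are handled independently and $W = \bigcup_i W_i$ where $W_i$ is the exclusive ancestor set of $L_i$. For a single contiguous block $L_i$, the claim is that starting from the leftmost uncovered leaf $\ell$ and ascending to the highest ancestor $v$ all of whose covered leaves still lie in $L_i$ yields a node that must be in every exclusive ancestor set covering $\ell$: any node on the root-to-$\ell$ path strictly above $v$ covers a leaf outside $L_i$ (by maximality of $v$ and the fact that "covers only leaves of $L_i$" is monotone along the path toward the root in the sense that once it fails it keeps failing), and any node strictly below $v$ on that path is dominated by $v$, so including $v$ is both necessary and, after removing the leaves it covers, leaves a strictly smaller uncovered suffix of $L_i$ to recurse on; minimality of $W_i$ follows since each chosen node is forced. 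The failure test in step~2 is exactly "$v.\leftmostleaf \in L_i$ and $v.\rightmostleaf \in L_i$", which by the contiguity of $L_i$ in leaf order is equivalent to "$v$ covers only leaves of $L_i$", and it is checkable in constant time using the precomputed pointers and the color bits.

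Then I would bound the running time. Step~1 costs $O(\lvert L \rvert)$: color the leaves of $L$, then scan them and emit the start of a new block whenever $\ell.\prevleaf \notin L$; uncoloring at the end is again $O(\lvert L \rvert)$. For steps 2--3, the total work is proportional to the number of edges traversed during all the ascents plus the number of nodes added to $W$. Every node visited by an ascent is a node of the forest $F$ obtained by taking the union of all root-to-leaf paths from the leaves of $L$ up to their chosen exclusive ancestors; since $F$ has $\lvert L \rvert$ leaves and every internal node of $F$ that lies strictly inside a block has at least two children in $F$ (because $T$ has no unary internal nodes, or has been compacted to ensure this), $F$ has $O(\lvert L \rvert)$ nodes and each is visited $O(1)$ times across all ascents — here one uses the standard amortization that after an ascent fails at $v$ and we add $v$ to $W$, the next ascent restarts from the leftmost still-uncovered leaf, which is a descendant of a sibling of a node on the path just traversed, so no node is re-entered. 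Finally, $\lvert W \rvert \le \lvert L \rvert$ since the chosen ancestors are disjoint in the leaves they cover. Summing gives $O(\lvert L \rvert)$.

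The main obstacle I anticipate is making the amortization in steps 2--3 fully rigorous: one must argue that the union of all ascent paths really forms a forest whose size is $O(\lvert L \rvert)$ and that no edge is traversed more than a constant number of times, rather than, say, $O(\operatorname{height})$ times in a degenerate case. This is where the assumption that $T$ has no internal node with a single child (or has been preprocessed to compact unary paths, exactly as noted for $\mgst$) is essential — without it a long unary chain above a block could be climbed repeatedly. Everything else is a routine combination of Euler-tour preprocessing and linked-list bookkeeping.
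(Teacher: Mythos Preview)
Your proposal is correct and follows essentially the same approach as the paper: partition $L$ into maximal contiguous runs via coloring and the $\prevleaf$ test, then for each run ascend greedily from the leftmost uncovered leaf using the $\leftmostleaf$/$\rightmostleaf$ test, with the $O(\lvert L \rvert)$ bound coming from the fact that the forest of subtrees rooted at the exclusive ancestors has $\lvert L \rvert$ leaves and, by the no-unary-internal-nodes assumption, $O(\lvert L \rvert)$ nodes in total. The paper's own justification is considerably terser (essentially a single sentence invoking that ``any forest built on top of leaves $L$ has $O(\lvert L \rvert)$ nodes''), whereas you spell out both correctness and the amortization explicitly; your remark that the next restart leaf is ``a descendant of a sibling of a node on the path just traversed'' is not literally needed (and not always accurate across different $L_i$), but the cleaner observation you also make---that the ascent paths lie in disjoint subtrees of a forest with $O(\lvert L \rvert)$ nodes---already suffices.
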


\subsection{Computing the minimal right extensions}\label{sub:minimalrightextensions}
Returning to the problem of computing values $f(x)$, the representation of \gst needs to support the operations on its tree structure described by \Cref{lem:exclusiveancestorset} plus operations $v.\stringdepth$, returning the length of the string corresponding to the root-to-$v$ path in \gst of an explicit node $v$, and $\ell.\suffixlink$, implementing the suffix links of the leaves.
The final algorithm, described in \Cref{alg:minimalrightextensions}, 
computes leaf sets $\mathcal{L}_0$, $\mathcal{L}_1$, \dots, $\mathcal{L}_{n-1}$ corresponding to the \msa suffixes starting at column $1, 2, \dots, n$, respectively, and for each $\mathcal{L}_x$ with $0 \le x < n$:
\begin{enumerate}
    \item it marks the leaves in $\mathcal{L}_x$ and partitions them in sets of contiguous leaves, by finding all their left boundaries $\ell$ such that $\ell.\prevleaf$ is not marked;
    \item it solves the exclusive ancestor set problem on each set of contiguous leaves and whenever it finds an exclusive ancestor, covering leaves $\ell_{i_1}, \dots, \ell_{i_k}$, it computes values $f^i(x)$ for $i \in \lbrace i_1, \dots, i_k \rbrace$ (see the conclusion of \Cref{sub:gst});
    \item after processing all leaves, it finally computes $f(x) = \max_{i=1}^{m} f^i(x)$ and transforms $\mathcal{L}_x$ into $\mathcal{L}_{x+1}$ by taking the suffix links\footnote{As noted by an anonymous reviewer for the conference version of the paper corresponding to this section~\cite{RM22b}, the support for suffix links is not strictly necessary, since we are exploring leaves only. Indeed, a traversal of the tree can easily fill an $m \times n$ table containing $\mathcal{L}_0$, \dots, $\mathcal{L}_{n-1}$, that we then have to store.} of only leaves $\ell_{i}$ such that $\msaij{i}{x+1} \neq \gap$.
\end{enumerate}

\begin{algorithm}
\SetKw{KwOutput}{output}
\SetKwRepeat{Do}{do}{while}
\KwIn{$\msaij{1..m}{1..n}$ from alphabet $\Sigma \cup \lbrace \mathtt{-} \rbrace$}
\KwOut{Pairs $(x,f(x))$ for $x = 0, \dots, n-1$}
Preprocess the \msa rows to support select and rank queries\;
Build \gst, the generalized suffix tree of $\lbrace \spell(\msaij{i}{1..n}) \$_i : 1 \le i \le m \rbrace$\;
Initialize $\mathtt{L}[i]$ as the leaf of \gst corresponding to $\spell(\msaij{i}{1..n})$\;
\For{$x \gets 0 \;\KwTo\; n-1$}{
    \For(\tcp*[f]{Mark each leaf of $\mathcal{L}_x$}){$i \gets 1 \;\KwTo\; m$}{
        $\mathtt{L}[i].\mathrm{marked} \gets \atrue$\;
    }
    \tcp{Process each set of contiguous leaves}
    \For{$\mathrm{i} \in \lbrace 1, \dots, m \rbrace : \mathtt{L}[i].\mathrm{prevleaf}.\mathrm{marked} = \afalse$}{
        $\mathrm{lb} \gets \mathtt{L}[i]$\tcc*[r]{Find left and right boundaries}
        $\mathrm{rb} \gets \mathrm{lb}$\;
        \While{$\mathrm{rb}.\mathrm{nextleaf}.\mathrm{marked} = \atrue$}{
            $\mathrm{rb} \gets \mathrm{rb}.\mathrm{nextleaf}$\;
        }
        $w \gets \mathrm{lb}; \; \mathrm{lleaf} \gets \mathrm{lb}; \;  \mathrm{rleaf} \gets \mathrm{lb}$\tcc*[r]{Find the exclusive ancestors}
        \While{$\mathrm{rleaf} \le \mathrm{rb}$}{
            $w' \gets w.\mathrm{parent}; \; \mathrm{lleaf}' \gets w'\!.\mathrm{leftmostleaf}; \; \mathrm{rleaf}' \gets w'\!.\mathrm{rightmostleaf}$\;
            \uIf(\tcp*[f]{$w'$ is a correct replacement}){$\mathrm{lb} \le \mathrm{lleaf}' \,\wedge\, \mathrm{rleaf}' \le \mathrm{rb}$}{
                $w \gets w'; \; \mathrm{lleaf} \gets \mathrm{lleaf}'; \; \mathrm{rleaf} \gets \mathrm{rleaf}'$\;
            }
            \Else(\tcp*[f]{$w'$ fails so $w$ is an exclusive ancestor}){
                $g \gets (w.\mathrm{parent}).\mathrm{stringdepth} + 1$\;
                $\ell \gets \mathrm{lleaf}$\;
                \While{$\ell \le \mathrm{rleaf}$}{
                    $i' \gets \ell.\mathrm{getrow}$\;
                    $\mathtt{f}[i'] \gets \msaij{i'}{1..n}.\mathrm{select}( \msaij{i'}{1..n}.\mathrm{rank}(x) + g )$\;
                    $\ell \gets \ell.\mathrm{nextleaf}$\;
                }
                $w \gets \ell; \; \mathrm{lleaf} \gets \ell; \;  \mathrm{rleaf} \gets \ell$\;
            }
        }
    }
    \KwOutput{$(x,\max_{i=1}^{m}\mathtt{f}[i])$}\;
    \For(\tcp*[f]{Cleanup and compute $\mathcal{L}_{x+1}$}){$i \gets 1 \;\KwTo\; m$}{
        $\mathtt{L}[i].\mathrm{marked} \gets \afalse$\;
        \If{$\msaij{i}{x+1} \neq \mathtt{-}$}{
            $\mathtt{L}[i] \gets \mathtt{L}[i].\suffixlink$\;
        }
    }
}
\caption{Algorithm computing the minimal right extensions $f(x)$, for $0 \le x \le n - 1$. 
For simplicity, we use the total order $\le$ to check the failure condition of the exclusive ancestor set problem, but it is easy to see that $\le$ can be replaced by an additional binary coloring of the leaves, to check if a leaf is contained in a set of contiguous leaves of interest.}\label{alg:minimalrightextensions}
\end{algorithm}

\begin{theorem}\label{teo:minimalrightextensions}
Given $\msaij{1..m}{1..n}$, we can compute the minimal right extensions $f(x)$ for $0 \le x < n$ (\Cref{def:rightextensions}) in time $O(mn)$.
\end{theorem}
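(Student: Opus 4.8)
\emph{Approach.} The plan is to implement \Cref{alg:minimalrightextensions}, reducing the computation of all $f(x)$ (\Cref{def:rightextensions}) to $n$ instances of the exclusive ancestor set problem on the tree structure of \gst, one per \msa suffix column, and to show that each instance together with the surrounding bookkeeping costs $O(m)$ time after an $O(mn)$-time preprocessing. First I would build \gst, the generalized suffix tree of $\lbrace \spell(\msaij{i}{1..n})\$_i : 1 \le i \le m \rbrace$; by the standing assumption on the alphabet this takes $O(mn)$ time and lists the leaves of \gst in lexicographic order. In one traversal I would annotate every explicit node $v$ with $\stringdepth(v)$ and with $v.\leftmostleaf, v.\rightmostleaf$ stored as positions in the leaf array; keeping the leaves in an array makes $\prevleaf$ and $\nextleaf$ constant-time, and a Boolean flag per leaf supports the binary coloring; for each row $i$ I would also build a rank/select index on the bitvector of its non-gap columns, all in $O(mn)$ total. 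These provide exactly the constant-time operations required by \Cref{lem:exclusiveancestorset}, together with $\stringdepth(\cdot)$ and the leaf suffix links.

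\emph{Main loop.} I would maintain an array $\mathtt{L}[1..m]$ with $\mathtt{L}[i]$ the leaf of \gst corresponding to suffix $\spell(\msaij{i}{x+1..n})$, so that $\mathtt{L}$ represents the leaf set $\mathcal{L}_x$ of \Cref{sub:gst}; initially ($x=0$) $\mathtt{L}[i]$ is the leaf of the whole gaps-removed row $i$. For $x = 0, \dots, n-1$, I would: mark the $m$ leaves of $\mathcal{L}_x$; split them into maximal runs of leaves contiguous in lexicographic order by testing for each $i$ whether $\mathtt{L}[i].\prevleaf$ is marked; run the exclusive-ancestor procedure of \Cref{lem:exclusiveancestorset} on each run; and for every exclusive ancestor $w$ it reports, covering the leaves of rows $i_1, \dots, i_k$, set $f^{i_t}(x)$, for each $t$, to the \msa column of the $\big(r + \stringdepth(\parent(w)) + 1\big)$-th non-gap symbol of row $i_t$ by a single $\select$ query, where $r$ is the number of non-gap symbols of $\msaij{i_t}{1..x}$ obtained by one $\rank$ query. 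Then I would output $(x, \max_{i=1}^m f^i(x))$, unmark the leaves, and pass to $\mathcal{L}_{x+1}$ by replacing $\mathtt{L}[i]$ with $\mathtt{L}[i].\suffixlink$ exactly for the rows with $\msaij{i}{x+1} \neq \gap$, leaving $\mathtt{L}[i]$ unchanged for the rest, since then $\spell(\msaij{i}{x+1..n}) = \spell(\msaij{i}{x+2..n})$. Correctness is where the earlier lemmas carry the argument: \Cref{lem:gst} applied to the increasing right extensions of $\msaij{1..m}{1..x}$, together with \Cref{obs:rowsegmentf}, shows that $f^i(x)$ is governed by the shallowest node on the root-to-$\mathtt{L}[i]$ path covering only leaves of $\mathcal{L}_x$, which (as explained in \Cref{sub:gst}) is recovered as the first explicit such node $w$ via $\stringdepth(\parent(w))+1$; the $m$ unique terminators guarantee that every suffix has a leaf and turn the case $f(x)=\infty$ into $f(x)=n+1$, so no special case arises; and $f(x) = \max_i f^i(x)$ by \Cref{obs:rowsegmentf}.

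\emph{Running time and the main obstacle.} The preprocessing is $O(mn)$. For a fixed $x$, marking and unmarking, the run split, the final maximum, and the transition to $\mathcal{L}_{x+1}$ are each $O(m)$; by \Cref{lem:exclusiveancestorset} an exclusive-ancestor instance on a run of $\ell$ contiguous leaves costs $O(\ell)$, and since the runs of $\mathcal{L}_x$ together contain $m$ leaves and each reported ancestor spends $O(1)$ extra time per leaf it covers on its $\rank$/$\select$ query, the whole column costs $O(m)$, hence $O(mn)$ overall. The point I would be most careful about is exactly this last amortization: that splitting one column into $k$ separate exclusive-ancestor instances still costs $O(m)$, not something like $O(km)$ or with a logarithmic overhead — equivalently, that the combined length of all the upward walks in the procedure and of all the reported ancestors' leaf lists is $O(m)$. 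This is precisely what the unary-path compaction underlying \Cref{lem:exclusiveancestorset} buys, since then any forest placed on top of $\ell$ leaves has only $O(\ell)$ nodes. (As the paper also notes, suffix links can be dispensed with: $\mathcal{L}_0, \dots, \mathcal{L}_{n-1}$ can instead be produced by a single traversal of \gst and stored, within the same $O(mn)$ budget.)
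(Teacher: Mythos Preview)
Your proposal is correct and follows essentially the same approach as the paper: build \gst and the rank/select structures in $O(mn)$ time, then for each column $x$ mark the leaf set $\mathcal{L}_x$, solve the exclusive ancestor set problem via \Cref{lem:exclusiveancestorset}, read off each $f^i(x)$ from $\stringdepth(\parent(w))+1$ with a rank/select pair, output $f(x)=\max_i f^i(x)$, and advance $\mathcal{L}_x$ to $\mathcal{L}_{x+1}$ by suffix links on the non-gap rows. Your write-up is in fact more explicit than the paper's own proof, which simply invokes \Cref{obs:rowsegmentf}, \Cref{lem:gst,lem:exclusiveancestorset}, and the bound $\lvert \mathcal{L}_x \rvert = m$ to conclude $O(m)$ per column; the amortization point you single out is exactly the content of \Cref{lem:exclusiveancestorset}.
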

\begin{proof}
The correctness is given by \Cref{obs:rowsegmentf} and \Cref{lem:gst,lem:exclusiveancestorset}. The construction of \gst is equivalent to building the suffix tree of a string of length smaller than or equal to $(m+1) \cdot n$: a suffix tree supporting the required operations in constant time can be constructed in $O(mn)$ time since we assume $\lvert \Sigma \rvert \le mn$.
Also, we can preprocess the \msa rows to answer in constant time rank and select queries on the position of gap and non-gap symbols. Thus, the computation of each $f(x)$ takes time $O(\lvert \mathcal{L}_x \rvert + m) = O(m)$, so $O(mn)$ time in total.
\end{proof}

\begin{corollary}\label{corol:maxblocks}
Given $\msaij{1..m}{1..n}$ from $\Sigma \cup \lbrace \gap \rbrace$, with $\Sigma = [1..\sigma]$ and $\sigma \le mn$, the construction of an optimal semi-repeat-free segmentation minimizing the maximum number of blocks can be done in time $O(mn)$.
\end{corollary}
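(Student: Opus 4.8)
The plan is to combine \Cref{teo:minimalrightextensions} with the linear-time segmentation algorithm of Equi et al.\ recalled in \Cref{sub:efgconstruction}. First I would invoke \Cref{teo:minimalrightextensions} to compute, in $O(mn)$ time, the minimal right extensions $f(x)$ for every $0 \le x < n$ (adopting the convention $f(x) = n+1$ when $\msaij{1..m}{1..x}$ has no semi-repeat-free extension, as discussed in \Cref{sub:gst}); the assumption $\sigma \le mn$ is exactly what that theorem needs for the suffix-tree construction. This yields the $n$ pairs $(x, f(x))$, which I would then sort by their second component: since every $f(x)$ lies in $[1..n+1]$, a counting sort accomplishes this in $O(n)$ time, so no comparison-based sort is required.

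Next I would feed the sorted list of pairs to the $O(n)$-time dynamic programming of Equi et al.\ for score $i.$, which evaluates \Cref{eq:score} with $\bigoplus = \max$ and $E(s(j'),j',j) = s(j') + 1$, initialized by $s(0) = 0$ and $s(j) = -\infty$ when there is no semi-repeat-free segmentation. The key point, already established by \Cref{obs:rightextensions} together with \Cref{def:rightextensions}, is that every semi-repeat-free right extension $[j'+1..j]$ has the minimal right extension $[j'+1..f(j')]$ as a prefix; hence scanning the pairs $(x,f(x))$ in increasing order of $f(x)$ and maintaining the running optimum suffices to update all $s(j)$ correctly. This produces $s(n)$ in linear time. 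To output an actual optimal segmentation rather than only its score, I would augment the recurrence with a backtracking pointer $\mathtt{backtrack}[j]$ storing some column $x$ that attains the optimum for $s(j)$, and then trace it back from $n$ to $0$; this adds only $O(n)$ time and space, exactly as in the analogous modification described for \Cref{alg:minmaxlength}.

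Summing the three phases gives $O(mn) + O(n) + O(n) = O(mn)$, as claimed. I do not expect any genuine obstacle here: the corollary is essentially an assembly of \Cref{teo:minimalrightextensions} with a previously known linear-time algorithm. The only two points that warrant a sentence of care are (i) using a linear-time bucket/counting sort for the pairs $(x,f(x))$, justified by their range being $[1..n+1]$, and (ii) observing that reconstructing the segmentation, as opposed to merely computing its optimal score, is handled by the standard backtracking-pointer technique and therefore does not change the asymptotic running time.
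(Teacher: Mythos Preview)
Your proposal is correct and follows essentially the same approach as the paper: invoke \Cref{teo:minimalrightextensions} to obtain the pairs $(x,f(x))$ in $O(mn)$ time, sort them by the second component, and feed them to Equi et al.'s $O(n)$-time algorithm for score~$i$. You spell out a few details the paper leaves implicit (the counting sort justified by the bounded range of $f(x)$, and the backtracking pointers for recovering the segmentation itself), but these only elaborate on the same argument.
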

\begin{proof}
Algorithm \cite[Algorithm 1]{Equietal22} by Equi et al.\ solves the problem in $O(n)$ time, assuming it is given the minimal right extensions $(x,f(x))$ sorted in increasing order by the second component, which we can now compute and sort in time $O(mn)$ thanks to \Cref{teo:minimalrightextensions}.
\end{proof}

\subsection{Computing the prefix-aware right extensions}\label{sec:paextensions}
Given \gst and its leaves $\mathcal{L}_x$ corresponding to the suffixes starting at column $x+1$, the forest with $m$ leaves identified by the exclusive ancestors $W_x$ of $\mathcal{L}_x$ can also be used to study the meaningful prefix-aware right extensions $\overline{R}_x$ (\Cref{def:mpfarightextensions}).

\begin{definition}[First ending position]\label{def:pos}
Given \gst, let $F_x$ be the set of all explicit nodes of \gst belonging to the subtree rooted at some exclusive ancestor $w \in W_x$.
Then, for each $v \in F_x$ we define value $\pos(v)$ as the \emph{first ending position} of string $\mathrm{string}(\parent(v)) \cdot \mathrm{char}(v)$ in the \msa, where $\mathrm{char}(v)$ is the first character of the label from $v$'s parent to $v$.
In other words, if $S = \sstring(\parent(v)) \in \Sigma^+$ and $c = \cchar(v) \in \Sigma$, then $\pos(v)$ is the minimum column index $y \in [1..n]$ such that $Sc = \spell(\msaij{i}{x+1..y})$ for some $1 \le i \le m$.
\end{definition}
An example of sets $W_x$, $F_x$, and of values $\pos(v)$ is shown in \Cref{fig:gstmsaheight,fig:pahexample}.
After plotting these values in a horizontal line, it is easy to notice that all increases in $\pah$ correspond to some $\pos(v)$, but not the other way around: the $\pos$ values that do not affect $\pah$, because they do not correspond to two or more rows reading strings that are not one prefix of the other, are the first-born children---with respect to the value of $\pos$---of branching nodes.

\begin{figure}[htp]
\centering
\includegraphics{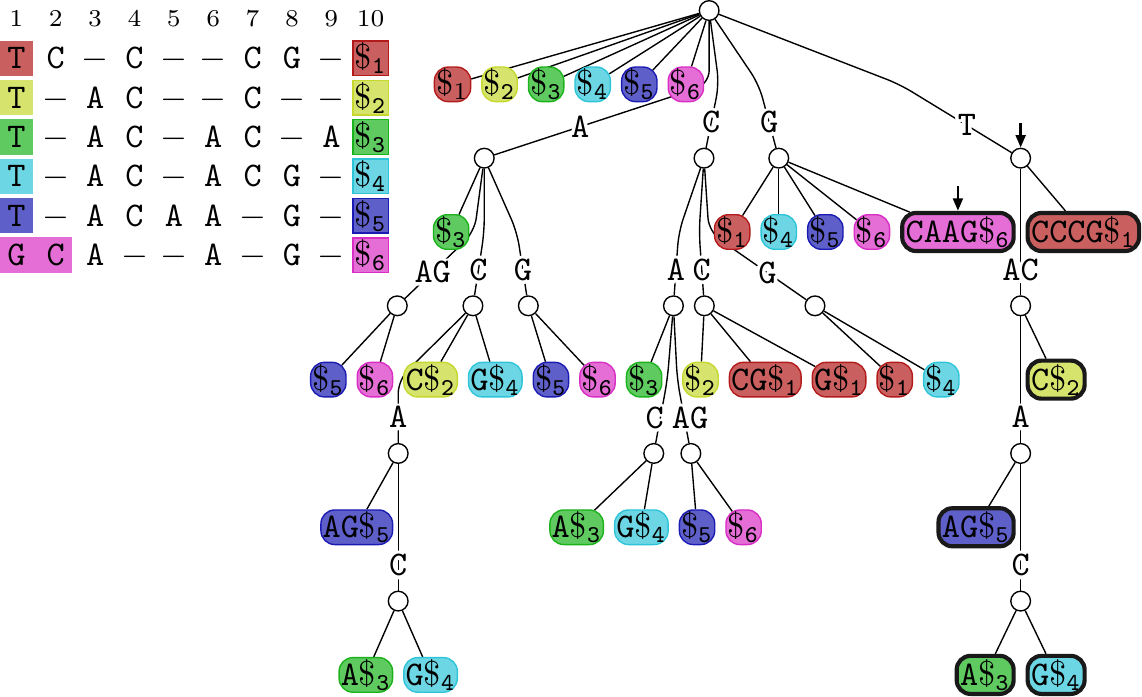}
\caption{A different example of an $\msaij{1..6}{1..9}$ and its \gst.
We have also highlighted, with a black outline, the leaves $\mathcal{L}_0$ corresponding to suffixes $\spell(\msaij{i}{1..n})$; their exclusive ancestors $W_0$, the nodes corresponding to $\mathtt{G \cdot CAAG\$_6}$ and $\mathtt{T}$, are marked with arrows.
}\label{fig:gstmsaheight}
\end{figure}
\begin{figure}[htp]
\centering
\includegraphics{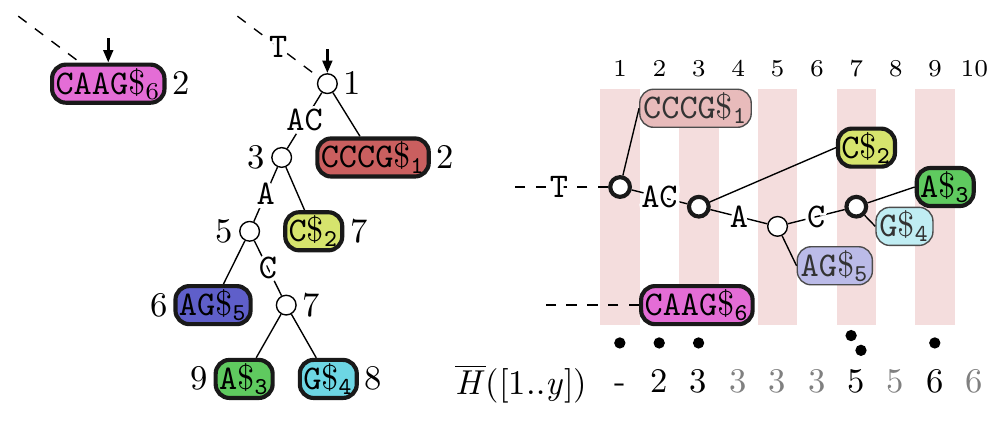}
\caption{On the left, the forest $F_0$ of the example \msa of \Cref{fig:gstmsaheight}, annotated with values $\pos(v)$. On the right, the same forest is plotted against the \msa columns, with only the non-first-born nodes of $\widehat{F}_0$ highlighted. Note that $f(0) = f^6(0) = 2$.}\label{fig:pahexample}
\end{figure}

\begin{definition}[First-born nodes]
Given \gst and its forest $F_x$ corresponding to all semi-repeat-free strings starting from column $x+1$, with $0 \le x < n$, for each internal node $v \in F_x$ we arbitrarily choose one of its children with minimum value of $\pos$ to be a \emph{first-born node} of $F_x$.
Then, let $\widehat{F}_x$ be the subset of non-first-born nodes of $F_x$, obtained by removing these nodes from $F_x$.
\end{definition}

\begin{lemma}
Given \gst and the set of non-first-born nodes $\widehat{F}_x$ associated with column $x$, for any $y \in [f(x)..n]$ the prefix-aware height of segment $[x+1..y]$ is equal to the number of nodes in $\widehat{F}_x$ having $\pos$ value equal or smaller than $y$, in symbols
$
    \pah([x+1..y]) =
    \big\lvert \lbrace
    \widehat{v} \in \widehat{F}_x : \pos(\widehat{v}) \le y
    \rbrace \big\rvert.
$
\end{lemma}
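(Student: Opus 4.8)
We sketch the intended argument. Fix a column $x$ and a value $y \in [f(x)..n]$. For each row $i \in [1..m]$ put $P_i \coloneqq \spell(\msaij{i}{x+1..y})$ and let $\nu_i$ be the (explicit or implicit) locus of $P_i$ in \gst. The first step is to reformulate the prefix-aware height purely in terms of loci: a string of $\lbrace P_1,\dots,P_m\rbrace$ is a proper prefix of \emph{another} string of this multiset precisely when its locus is a proper \gst-ancestor of some $\nu_j$, so by \Cref{def:prefixawareheight}
\[
    \pah([x+1..y]) \;=\; \big\lvert\, \lbrace\, \text{distinct } \nu_i \;:\; \nu_i \text{ is not a proper ancestor of any } \nu_j \,\rbrace \,\big\rvert .
\]
Because $y \ge f(x) \ge f^i(x)$, \Cref{obs:rowsegmentf} shows that every substring $\msaij{i}{x+1..y}$ is semi-repeat-free, so (as in the analysis behind \Cref{lem:gst}) $\nu_i$ lies at or below the top of the edge entering the exclusive ancestor $w_i \in W_x$ covering $\ell_{i,z_i}$; hence $\nu_i$ lies on a root-to-leaf path of the forest $F_x$. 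I would then let $\widehat v_i \in F_x$ be the unique node whose entering \gst-edge carries $\nu_i$, that is $\stringdepth(\parent(\widehat v_i)) < \lvert P_i \rvert \le \stringdepth(\widehat v_i)$, equivalently $\sstring(\parent(\widehat v_i)) \cdot \cchar(\widehat v_i) \prefix P_i \prefix \sstring(\widehat v_i)$ (reading $\parent(w)$ as the \gst-parent when $w$ is a root of $F_x$).

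Next I would show that $D_y \coloneqq \lbrace\, v \in F_x : \pos(v) \le y \,\rbrace$ is exactly the ancestor-closure of $\lbrace \widehat v_1,\dots,\widehat v_m\rbrace$ in $F_x$. Since every \gst-child of an $F_x$-node is again in $F_x$ and distinct children carry distinct first symbols, $\sstring(\parent(v)) \cdot \cchar(v) \prefix P_i$ forces $\ell_{i,z_i}$ into the subtree of $v$, which makes $\pos(v) \le y$ equivalent to $v$ being a weak ancestor of some $\widehat v_i$; and $D_y$ is ancestor-closed because reaching the edge into a node requires first reaching the edge into its parent, i.e.\ $\pos(\parent(v)) \le \pos(v)$. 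Viewing $D_y$ as a subforest of $F_x$, its leaves (nodes with no child in $D_y$) are then the ancestor-maximal elements of $\lbrace \widehat v_i\rbrace$, and its internal nodes the proper ancestors of the $\widehat v_i$'s. I would then argue that $\pah([x+1..y])$ equals the number of ancestor-maximal $\widehat v_i$'s: comparing string lengths shows that if $\widehat v_i$ were a proper ancestor of $\widehat v_j$ then $P_i \properprefix P_j$, so every prefix-maximal $\nu_i$ has ancestor-maximal $\widehat v_i$; conversely, for a fixed ancestor-maximal $\widehat v$ all loci with $\widehat v_i = \widehat v$ sit on the single edge entering $\widehat v$ and (by maximality) no $\nu_j$ lies strictly below $\widehat v$, so exactly the deepest of these loci---a single locus---is prefix-maximal. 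Hence distinct prefix-maximal loci biject with ancestor-maximal $\widehat v_i$'s, giving $\pah([x+1..y]) = \lvert \mathrm{leaves}(D_y) \rvert$.

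The final step is the passage to $\widehat F_x$. Writing $B$ for the set of first-born nodes of $F_x$, we have $\widehat F_x = F_x \setminus B$ and $\widehat F_x \cap D_y = D_y \setminus B$. The plan is to show that $v \mapsto \parent(v)$ restricts to a bijection from $B \cap D_y$ onto the internal nodes of the forest $D_y$: if $v \in B \cap D_y$ then $\parent(v) \in D_y$ by ancestor-closedness and has the child $v$ in $D_y$; conversely, an internal node $u$ of $D_y$ has a child $v' \in D_y$, and since the first-born child of $u$ realizes $\min\lbrace \pos(v'') : v'' \text{ a child of } u\rbrace$, we get $\pos(\text{first-born of } u) \le \pos(v') \le y$, so that first-born child lies in $B \cap D_y$ and maps to $u$; injectivity holds because each internal node of $F_x$ has a unique first-born child. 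Therefore $\lvert B \cap D_y \rvert$ equals the number of internal nodes of $D_y$, whence $\lvert \widehat F_x \cap D_y \rvert = \lvert D_y \rvert - \lvert B \cap D_y \rvert = \lvert \mathrm{leaves}(D_y) \rvert = \pah([x+1..y])$, which is the claim.

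The main obstacle is the careful handling of implicit versus explicit \gst nodes throughout the first two steps: making precise that every $\nu_i$ falls inside the forest $F_x$, that $\widehat v_i$ is well defined, and---most delicately---that several rows sharing the same $\widehat v_i$ contribute exactly one prefix-maximal string when $\widehat v_i$ is ancestor-maximal and none otherwise. In contrast, the facts $\pos(\parent(v)) \le \pos(v)$ and $\pos(\text{first-born of } u) = \min\lbrace \pos(v'') : v'' \text{ a child of } u \rbrace$, on which the final counting bijection rests, follow directly from the definitions.
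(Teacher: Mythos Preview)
Your argument is correct and takes a genuinely different route from the paper's. The paper proceeds by structural induction on the height of subtrees rooted at nodes of $F_x$: it first isolates the \emph{independence of collateral relatives}---that the prefix-aware heights over disjoint row sets $I_{v_1},I_{v_2}$ add when $v_1,v_2$ are not in an ancestor relationship and both $\pos$ thresholds are met---and then proves the intermediate identity
\[
\pah\big(\msaij{I_v}{x+1..y}\big)=\big\lvert\{\hat v\in\widehat{F}_x^v:\pos(\hat v)\le y\}\big\rvert
\]
for every $v\in F_x$ by induction, combining the results over the exclusive ancestors $W_x$ at the end. Your proof instead gives a single global picture: you realise $D_y=\{v\in F_x:\pos(v)\le y\}$ as the ancestor-closure of the nodes $\widehat v_i$ carrying the loci $\nu_i$, identify $\pah([x+1..y])$ with the number of leaves of the subforest $D_y$, and then count those leaves via the bijection $v\mapsto\parent(v)$ between first-born nodes in $D_y$ and internal nodes of $D_y$. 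The paper's induction is more mechanical to verify and yields the localized identity for every $I_v$ as a by-product, which could be useful elsewhere; your approach is shorter and more conceptual, making the role of the first-born choice transparent as exactly the device that cancels internal nodes against non-leaf contributions. The delicate points you flag---that $y\ge f(x)\ge f^i(x)$ forces each $\nu_i$ onto an edge of $F_x$, and that several rows sharing the same $\widehat v_i$ contribute a single prefix-maximal locus precisely when $\widehat v_i$ is ancestor-maximal---are handled correctly in your sketch.
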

\begin{proof}
For any $v \in F_x$, let $I_v$ be the set of row indexes $i$ such that $v$ covers the leaf of \gst corresponding to $\spell(\msaij{i}{x+1..n})$.
From the properties of \gst it follows that if any $v_1,v_2 \in F_x$ are not one ancestor of the other,
then the corresponding strings $\sstring(\parent(v_1)) \cdot \cchar(v_1)$ and $\sstring(\parent(v_2)) \cdot \cchar(v_2)$ are not one prefix of the other: if $y \ge \pos(v_1)$ and $y \ge \pos(v_2)$, then $\pah(\msaij{I_{v_1} \cup I_{v_2}}{x+1..y}) = \pah (\msaij{I_{v_1}}{x+1..y}) + \pah (\msaij{I_{v_2}}{x+1..y})$. We call this key property the \emph{independence of collateral relatives}.\footnote{In genealogical terms, the ancestor relationship is described as a direct line, as opposed to a collateral line for relatives that are not in a direct line.} In particular, the property holds for any subset $U$ of children of some node $v \in F_x$, provided $y \ge \max_{u \in U} \pos(u)$, and it holds for the exclusive ancestors $W_x \subseteq F_x$, because $y \ge f(x) \ge f^i(x) > \max_{w \in W_x} \pos(w)$ by assumption:
\begin{equation}\label{eq:posindependence}
    \overline{H} \big(
    \msaij{1..m}{x+1..y}
    \big) =
    \sum_{w \in W_x} \overline{H} \big( \msaij{I_w}{x+1..y} \big).
\end{equation}

We can now prove the modification of the thesis restricted to the rows $I_v$ of any node $v \in F_x$.
To do so, we introduce one final notation: we denote with $\widehat{F}_x^v$ the set $(\widehat{F}_x \cap \mgst(v)) \cup \lbrace v \rbrace$, that also deals with the case when $v$ is a first-born node.
Then, for any $v \in F_x$ and $y \in [\max_{i \in I_v}f^i(x)..n]$ we have that
\begin{equation} \label{eq:posinduction}
\overline{H} \big(
    \msaij{I_v}{x+1..y}
    \big) =
    \begin{cases}
        1 & \text{if}\;y < \pos(v),\\
        \big\lvert \big\lbrace
        \hat{v} \in \widehat{F}_x^v : \pos(\hat{v}) \le y
        \big\rbrace \big\rvert
        & \text{otherwise.}
    \end{cases}
\end{equation}
The proof of \Cref{eq:posinduction} proceeds by induction on the height of the subtree rooted at $v$.
\begin{description}
\item[Base case:] If $v$ is a leaf then $\widehat{F}_j^v = \lbrace v \rbrace$, $I_v = \lbrace i \rbrace$ for some $i \in [1..m]$, and $\pah(\msaij{\lbrace i \rbrace}{x+1..y}) = 1$, so \Cref{eq:posinduction} is easily verified.
\item[Inductive hypothesis:] \Cref{eq:posinduction} holds for all nodes $v$ such that the subtree rooted at $v$ has height less than or equal to $h \ge 0$.
\item[Inductive step:] Let the height of the subtree rooted at $v$ be equal to $h+1$, and let $u_1, \dots, u_p$ be the $p \ge 2$ children of $v$, with $u_1$ the first-born.
If $y < \pos(v)$ then all occurrences of $Sc = \sstring(\parent(v)) \cdot \cchar(v)$ in the \msa end after column $y$, so all strings $\spell(\msaij{i}{x+1..y})$ with $i \in I_v$ are prefixes of $Sc$ and $\pah\big( \msaij{I_v}{x+1..y} \big) = 1$.
Using the same argument, \Cref{eq:posinduction} is also verified if $y < \pos(u_1)$, so we can assume $y \ge \pos(u_1) > \pos(v)$.
Consider the children $u_k$ of $v$ such that $y < \pos(u_k)$, for $2 \le k \le p$; the strings spelled in the corresponding rows $I_{u_k}$ are prefixes of $\sstring(\parent(u_1))$, so they are ignored in the prefix-aware height. If $U_{\le} \coloneqq \lbrace u_k : 1 \le k \le p \wedge \pos(u_k) \le y \rbrace$ then
\begin{align*}
    \pah \big( \msaij{I_v}{x+1..y} \big)
    &= \pah \Bigg( \textsf{MSA} \Bigg[ \bigcup_{u \in U_{\le}} I_{u} \Bigg] \Bigg[ x+1..y \Bigg] \Bigg) \\
    &= \sum_{u \in U_{\le}} \pah \big(
        \msaij{I_{u}}{x+1..y} \big) &\substack{\text{\small independence}\\\text{\small of collateral}\\\text{\small relatives}} \\
    &= \sum_{u \in U_{\le}} \big\lvert \big\lbrace \widehat{u} \in \widehat{F}_x^u : \pos(\widehat{u}) \le y \big\rbrace \big\rvert & \substack{\text{\small inductive}\\\text{\small hypothesis}} \\
    &= \big\lvert \big\lbrace
        \hat{v} \in \widehat{F}_x^v : \pos(\hat{v}) \le y
        \big\rbrace \big\rvert.
\end{align*}
Note that the last equality holds because $\pos(u_1)$ of $\widehat{F}_x^{u_1}$ is replaced by $\pos(v)$ of $\widehat{F}_x^v$.
\end{description}
The thesis follows from \Cref{eq:posindependence,eq:posinduction}, because the exclusive ancestors partition the rows $[1..m]$ into $\lvert W_x \rvert$ sets.
Also, note that $\lvert \widehat{F}_x \rvert = m$.
\end{proof}

An example of sets $F_x$ and $\widehat{F}_x$ can be seen in \Cref{fig:pahexample}.
Unfortunately, their naive computation takes time $O(m^2)$ if done locally, because \gst does not contain the information on the ending positions of \msa substrings---and it cannot be easily augmented to do so.
\begin{lemma}\label{lem:mpare1}
Given a general $\msaij{1..m}{1..n}$, \gst, and the exclusive ancestors $W_x$, we can compute the meaningful prefix-aware right extensions $\overline{R}_x$ in time $O(m^2)$.
\end{lemma}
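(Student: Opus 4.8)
The plan is to carry out the computation entirely inside the forest $F_x$ (\Cref{def:pos}) that the exclusive ancestors $W_x$ cut out of \gst. Since $W_x$ covers exactly the $m$ leaves of $\mathcal{L}_x$ and only those, and every internal node of a suffix tree is branching, $F_x$ is a forest with $m$ leaves and therefore $O(m)$ explicit nodes in total. First I would materialize $F_x$ by a depth-first descent from each $w \in W_x$, marking its nodes and recording for every node the list of its children inside $F_x$; this takes $O(m)$ time. I also assume the $O(mn)$-time preprocessing of the \msa rows supporting $\rank$ and $\select$ on the non-gap positions of each row, and I precompute $\rank(\msaij{i}{1..n}, x)$ for all $i$ in $O(m)$ time for the current $x$.

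The substantive step is computing $\pos(v)$ for every $v \in F_x$. I would initialize every $\pos(v)$ to $+\infty$ and then, for each row $i \in [1..m]$, walk upward in \gst from the leaf $\ell_i \in \mathcal{L}_x$ to the unique exclusive ancestor $w_i \in W_x$ lying on this path (which exists because the blocks of leaves covered by the members of $W_x$ partition $\mathcal{L}_x$), updating at every visited node $v$ the value $\pos(v) \gets \min\big(\pos(v),\, y_i(v)\big)$, where $y_i(v) = \select\big(\msaij{i}{1..n},\, \rank(\msaij{i}{1..n}, x) + \stringdepth(\parent(v)) + 1\big)$ is the column of the $(\stringdepth(\parent(v)) + 1)$-th non-gap symbol of row $i$ at or after column $x+1$. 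Since $v$ lies on the root-to-$\ell_i$ path, the string $\sstring(\parent(v)) \cdot \cchar(v)$, of length $\stringdepth(\parent(v)) + 1$, is a prefix of $\spell(\msaij{i}{x+1..n})$, so $y_i(v)$ is exactly the ending column of its first \msa occurrence in row $i$; the minimum over all rows whose $\mathcal{L}_x$-leaf is covered by $v$ is then $\pos(v)$ by \Cref{def:pos}. Each visited node costs $O(1)$, and the $m$ walks stay inside $F_x$, which has $O(m)$ nodes, so the total cost of this step is $O(m^2)$.

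With all $\pos$ values available I would pick, for each internal node of $F_x$, one child of minimum $\pos$ as its first-born, delete these to obtain $\widehat{F}_x$ (which leaves exactly $m$ nodes), and sort the $m$ values $\lbrace \pos(\widehat{v}) : \widehat{v} \in \widehat{F}_x \rbrace$ in nondecreasing order in $O(m \log m)$ time. Then $\overline{R}_x$ is read off directly: the first extension is $f(x) = r_{x,1}$ (\Cref{def:rightextensions}), and every subsequent $r_{x,k}$ is a distinct value of the sorted list exceeding $f(x)$, annotated with $\pah([x+1..r_{x,k}])$, which by the lemma just proved equals the number of $\pos$ values that are at most $r_{x,k}$ and is thus obtained from the sorted order; this produces exactly the triples of \Cref{def:mpfarightextensions}. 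The dominant term is the $O(m^2)$ cost of the walks, and it is also the main obstacle: \gst stores only string lengths, not the \msa ending columns of occurrences, and because gaps can be distributed differently among the rows covered by a node, the row realizing $\pos(v)$ need not realize $\pos$ of any descendant of $v$, so no single bottom-up pass suffices and one is forced to probe each pair of a node and one of its covered rows; a caterpillar-shaped $F_x$ makes the total $\Theta(m^2)$. Eliminating this term is precisely what the prefix-tree and weighted-ancestor construction of \Cref{sub:pos} accomplishes.
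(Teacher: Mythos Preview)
Your proposal is correct and follows essentially the same approach as the paper: compute $\pos(v)$ for every $v\in F_x$ by visiting every pair $(v,i)$ with $i\in I_v$ via rank/select on the rows, then select first-borns, sort the remaining $m$ values, and scan them past $f(x)$ exactly as in the paper's \Cref{alg:mpahext}. The only cosmetic difference is that you organise the $\sum_{v\in F_x}\lvert I_v\rvert$ work as $m$ leaf-to-ancestor walks rather than per-node loops over $I_v$, which is the same computation and yields the same $O(m^2)$ bound on a caterpillar-shaped $F_x$.
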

\begin{proof}
For each $v \in F_x$, we can compute $\pos(v)$ by finding for each row $i \in I_v$ the ending position $y_i$ of the occurrence of $Sc = \sstring(\parent(v))\cdot\cchar(v)$ in $\msaij{i}{1..n}$ ($Sc$ is a semi-repeat-free substring so there is at most one occurrence per row).
In other words, position $y_i$ correponds to the $k$-th non-gap character of row $i$, where $k = \rrank( \msaij{i}{1..n}, x ) + \stringdepth(\parent(v)) + 1$.
Then, $\pos(v) = \min_{i \in I_v} y_i$.
The first-born child of $v$ can be found by choosing one of its children with minimum $\pos$ values, and the removal of first-born nodes results in the $\pos$ values of $\widehat{F}_x$.
Given $f(x)$ and the ordered $\pos$ values, a simple algorithm like \Cref{alg:mpahext} considers all columns containing $\pos$ values and outputs the relative prefix-aware right extension as a triple $(x,y,\pah([x+1..y]))$.

\begin{algorithm}
\caption{Algorithm computing the meaningful prefix-aware right extensions, given the sorted $\mathrm{pos}$ values of $\hat{F}_x$. Set $\hat{F}_x$ is obtained by removing the first-born nodes from $F_x$.
}\label{alg:mpahext}
\SetKw{KwOutput}{output}
\KwIn{Value $f(x)$, values $\mathrm{pos}(w_1), \dots, \mathrm{pos}(w_m)$ of nodes in $\hat{F}_x$, sorted from smallest to largest order.}
\KwOut{Meaningful prefix-aware right extensions $\overline{R}_x$.}
$h \gets 1$\;
\While{$\mathrm{pos}(w_h) < f(x)$}{
    $h \gets h + 1$\;
}
\While{$h \le m$}{
    \If{$h = m \;\vee\; \mathrm{pos}(w_{h+1}) \neq \mathrm{pos}(w_h)$}{
        $\KwOutput \; (x,\mathrm{pos}(w_h),h)$\;
    }
    $h \gets h + 1$\;
}
\end{algorithm}

Since we can preprocess in linear time the \msa rows to answer rank and select queries in constant time, the computation of each $\pos(v)$ takes $O(\lvert I_v \rvert)$ time.
Forest $F_x$ is composed of compacted trees with $m$ total leaves, so it contains $O(m)$ nodes: the subtrees of $F_x$ can be unbalanced, hence the total time is $O(m^2)$.
Then, these values can be sorted in time $O(m \log m)$ and then processed in $O(m)$ time.
\end{proof}

\subsection{Speedup using the generalized prefix tree}\label{sub:pos}
Thanks to \Cref{lem:mpare1}, we can compute the meaningful prefix-aware right extensions in $O(m^2n)$ time, the bottleneck being the computation of values $\pos(v)$ (\Cref{def:pos}).
The other tasks can be executed in time $O(mn)$ by adapting the solution to a global computation: the $\pos$ values of all $\widehat{F}_0$, \dots, $\widehat{F}_{n-1}$ are $O(mn)$ in total; all together, they can be sorted in $O(mn)$ time since they take value in $[1..n]$, and they can be separately processed again in total linear time.
\gst does not contain the information about the ending positions of \msa strings, but it does contain the information on the first starting positions (i.e.\ the occurrences): indeed, the sets of leaves $\mathcal{L}_0$, \dots, $\mathcal{L}_{n-1}$ consider each and every suffix starting from a certain \msa column.
This gives us the key intuition of exploiting symmetry to compute values $\pos(v)$ efficiently.

\begin{definition}\label{def:gpt}
Given a general $\msaij{1..m}{1..n}$, we define \gpt as the generalized prefix tree of the set of strings
$\lbrace \$_i \cdot \spell( \msaij{i}{1..n} ) : 1 \le i \le m \rbrace$, with $\$_1, \dots, \$_m$ $m$ new distinct terminator symbols not in $\Sigma$.
Alternatively, \gpt can be constructed as the generalized suffix tree of
$\lbrace \spell( \msaij{i}{1..n} )^{-1} \cdot \$_i : 1 \le i \le m \rbrace$.
\end{definition}

\begin{figure}[htp]
\centering
\includegraphics{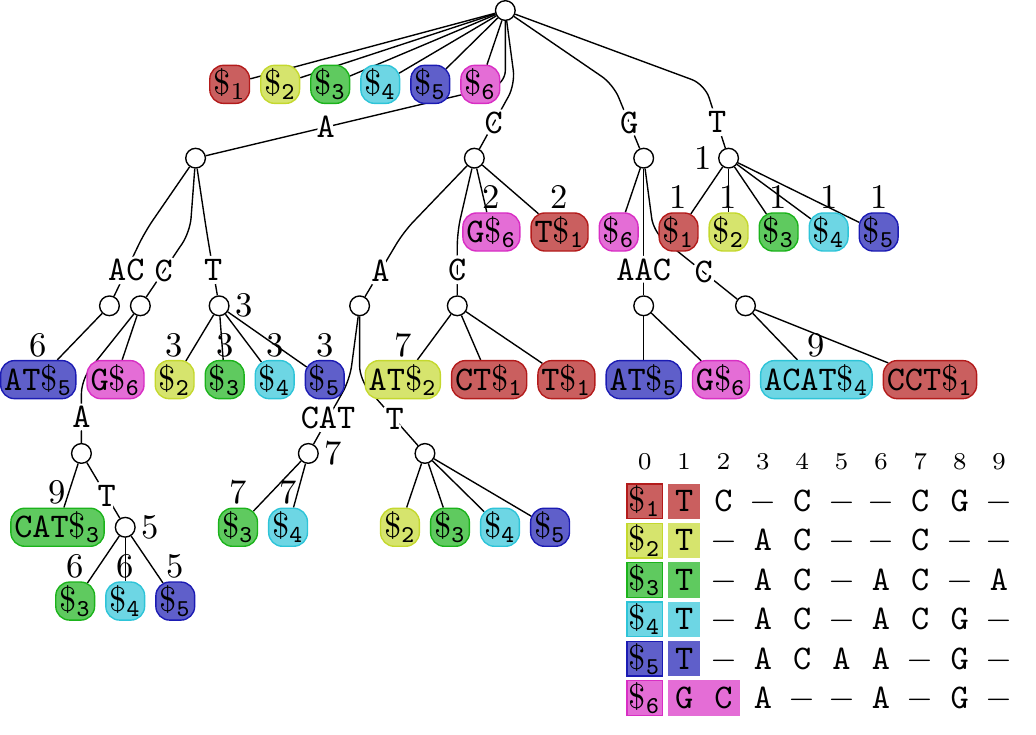}
\caption{Example of the \gpt built from the \msa of \Cref{fig:gstmsaheight}, annotated with the $\pos^{-1}$ values relevant for the computation of $R_0$ (\Cref{fig:pahexample}), the meaningful right extensions starting from column $1$.
}\label{fig:gptmsa}
\end{figure}

\begin{observation}\label{obs:inversepos}
Note that in \gpt strings are read from right to left. For each node $u$ of \gpt, let $\pos^{-1}(u)$ be the first ending position of $\sstring(u)$ in some \msa row:
\begin{itemize}
    \item for any leaf $\ell$ of \gpt corresponding to row $i \in [1..m]$, we have that $\pos^{-1}(\ell)$ is equal to the $k$-th non-gap character of $\msaij{i}{1..n}$, with $k = \lvert \sstring(\ell) \rvert$;
    \item for any internal node $u$ of \gpt, let $v_1$, \dots, $v_p$ be its children; then $\pos^{-1}(u) = \min_{k = 1}^{p} \pos^{-1}(v_k)$;
    \item given a node $v$ of \gst, let $v^{-1}$ be the node of \gpt corresponding to string $\sstring(\parent(v))\cdot \cchar(v)$ read from right to left; if this is an implicit node, then we define $v^{-1}$ as the first explicit ancestor in \gpt; then $\pos(v) = \pos^{-1}(v^{-1})$.
\end{itemize}
\end{observation}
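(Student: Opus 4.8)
The plan is to reduce the three items of \Cref{obs:inversepos} to a single symmetry dictionary between $\mgpt$ and the \emph{ending positions} of substrings in the rows, mirroring the dictionary between $\mgst$ and starting positions used in \Cref{sub:gst}. I would use the equivalent construction of $\mgpt$ as the generalized suffix tree of the reversed rows $\spell(\msaij{i}{1..n})^{-1}\$_i$, and first record the elementary correspondence: a suffix of $\spell(\msaij{i}{1..n})^{-1}\$_i$ of length $k+1$ equals $(\spell(\msaij{i}{1..n})[1..k])^{-1}\$_i$, so the non-terminator leaves of $\mgpt$ are in bijection with the non-empty prefixes $\spell(\msaij{i}{1..n})[1..k]$ of the rows, the leaf for $(i,k)$ having string depth $k$. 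Dually, an explicit node $u$ of $\mgpt$ whose root-to-$u$ label is $T$ represents the reversed string $T^{-1}$, and the leaf for $(i',k')$ lies in the subtree of $u$ exactly when $T$ is a prefix of $(\spell(\msaij{i'}{1..n})[1..k'])^{-1}$, i.e.\ when $T^{-1}$ occurs in gaps-removed row $i'$ ending at position $k'$. Hence the leaves below $u$ enumerate precisely the occurrences of $T^{-1}$ in the rows, each tagged with its row and its ending position, and converting $k'$ into an \msa column through the positions of the non-gap symbols of row $i'$ turns this into the set of \msa columns at which $T^{-1}$ occurs-ending.

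With this dictionary the first item is immediate: a leaf $\ell$ of $\mgpt$ coming from row $i$ with string depth $k$ represents the single occurrence of $\spell(\msaij{i}{1..n})[1..k]$ that ends at the $k$-th non-gap symbol of $\msaij{i}{1..n}$, so $\pos^{-1}(\ell)$ is exactly the \msa column of that symbol. The second item follows because the leaves in the subtree of an internal node $u$ form the disjoint union of the leaf subtrees of its children $v_1,\dots,v_p$, and $\pos^{-1}$ of a node is, by definition, the minimum \msa column over the occurrences carried by its subtree leaves; minimising over a union equals the minimum of the per-child minima, giving $\pos^{-1}(u)=\min_{k} \pos^{-1}(v_k)$. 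Consequently a single bottom-up sweep of $\mgpt$ fills in all $\pos^{-1}$ values, and since $\mgpt$ is the suffix tree of a string of length at most $(m+1)n$ over an alphabet of size $\le mn$, building it and performing the sweep both cost $O(mn)$.

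For the third item I take $v\in F_x$, set $S\coloneqq\sstring(\parent(v))$ and $c\coloneqq\cchar(v)$, and look at the $\mgpt$-locus of $(Sc)^{-1}$. By the dictionary, the leaves of its subtree are exactly the occurrences of $Sc$ in the rows together with their ending \msa columns, so $\pos^{-1}$ of that locus is the smallest \msa column at which $Sc$ occurs-ending anywhere. This is where the hypothesis $v\in F_x$ is used: the $\mgst$-locus of $Sc$ is one character into the edge into $v$, hence either a descendant of, or an implicit node on the edge into, the unique exclusive ancestor $w\in W_x$ whose subtree contains $v$; since $w$ covers only leaves of $\mathcal{L}_x$ (by definition of the exclusive ancestor set, \Cref{prob:exclusiveancestorset}), every occurrence of $Sc$ in the rows starts at column $x+1$. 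Therefore the set $\{y : Sc=\spell(\msaij{i}{x+1..y})\text{ for some }i\}$ from \Cref{def:pos} already coincides with the full set of ending columns of $Sc$, and minimising both descriptions yields $\pos(v)=\pos^{-1}(\text{locus of }(Sc)^{-1})$. If that locus is implicit it lies strictly inside a $\mgpt$-edge; every point of that edge spans the same leaves as the explicit node at its lower end, so $\pos^{-1}$ is constant along it and we may legitimately set $v^{-1}$ to be that explicit node, obtaining $\pos(v)=\pos^{-1}(v^{-1})$. (For the downstream $O(mn)$ algorithm one additionally needs the map $v\mapsto v^{-1}$ in constant time, which is precisely a weighted-ancestor query relating the two trees, but this is outside the statement.)

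The main obstacle is bookkeeping rather than depth: keeping three coordinate systems aligned — \msa columns, positions in the gaps-removed rows, and positions in the reversed rows inside $\mgpt$ — and, for the third item, verifying carefully that the subtree below the $(Sc)^{-1}$-locus contains no spurious leaves. This last point is exactly where semi-repeat-freeness of the strings carried by $F_x$ enters (equivalently, why \Cref{def:pos} may restrict to occurrences starting at column $x+1$ without changing the minimum), and the implicit-locus case is the only remaining technical wrinkle, handled by the leaf-set invariance along an edge noted above.
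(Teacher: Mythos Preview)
The paper does not prove \Cref{obs:inversepos}; it is stated as an observation and illustrated by an example only. Your write-up therefore supplies the justification the paper omits, and your overall approach---interpreting $\mgpt$ as the generalized suffix tree of the reversed rows, so that leaves encode ending positions and the second bullet is the obvious bottom-up minimum---is the natural one and matches the spirit of the paper's surrounding discussion.

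One point worth flagging. In the implicit-locus case you round $v^{-1}$ to the explicit node at the \emph{lower} end of the edge, arguing (correctly) that every point on an edge covers the same leaf set as the child endpoint, hence the same set of ending positions and the same minimum. The paper's text, however, literally says ``the first explicit \emph{ancestor} in \gpt,'' i.e.\ the upper endpoint $p$. Taken at face value that choice is not safe: $p$ may have other children, so $\pos^{-1}(p)=\min_k \pos^{-1}(v_k)$ can be strictly smaller than $\pos^{-1}$ of the locus of $(Sc)^{-1}$, and the identity $\pos(v)=\pos^{-1}(v^{-1})$ would fail. Your version is the one that makes the third bullet true; either the paper intends ``descendant'' here or relies on an unstated convention. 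Your leaf-set-invariance argument is exactly the right fix.

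A minor bookkeeping quibble: you write that the leaf for $(i,k)$ has string depth $k$, but the corresponding suffix $(\spell(\msaij{i}{1..n})[1..k])^{-1}\$_i$ has length $k+1$, so there is an off-by-one between your string-depth and the $k$ appearing in the first bullet of the observation depending on whether the terminator is counted. This does not affect the argument, but tighten the wording so the conversion between string depth and ``$k$-th non-gap character'' lines up.
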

For example, if $v$ is the \gst node of \Cref{fig:gstmsaheight,fig:pahexample} corresponding to string $\mathtt{TAC}\cdot\mathtt{A}$, $v^{-1}$ corresponds to $\mathtt{ACAT}$ in the \gpt of \Cref{fig:gptmsa} and $\pos^{-1}(v^{-1}) = 5$.
\begin{lemma}\label{lem:pos}
Given a general $\msaij{1..m}{1..n}$, \gst, and \gpt, values $\pos(v)$ for any node $v$ of \gst can be computed in $O(mn)$ time.
\end{lemma}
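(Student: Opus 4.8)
The plan is to first build \gpt in linear time, then compute all $\pos^{-1}$ values on it with a single bottom-up traversal, and finally transfer these values to the nodes of \gst via a constant-time navigation between the two trees. The construction of \gpt is the generalized suffix tree of $\lbrace \spell(\msaij{i}{1..n})^{-1} \cdot \$_i : 1 \le i \le m \rbrace$, a set of strings of total length $O(mn)$ over an integer alphabet of size $\sigma \le mn$, so it can be built in $O(mn)$ time by the standard suffix tree construction cited in the preliminaries. To initialize the $\pos^{-1}$ values at the leaves, we preprocess the \msa rows for constant-time rank/select queries on gap positions: by \Cref{obs:inversepos}, a leaf $\ell$ of \gpt for row $i$ has $\pos^{-1}(\ell)$ equal to the column of the $k$-th non-gap symbol of $\msaij{i}{1..n}$, where $k = \stringdepth(\ell)$, computable in constant time each; then a single post-order traversal of \gpt propagates $\pos^{-1}(u) = \min_k \pos^{-1}(v_k)$ over the children $v_1,\dots,v_p$ of each internal node $u$, in $O(mn)$ total time.

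The remaining step is, for each node $v$ of \gst, to locate the node $v^{-1}$ of \gpt corresponding to the reverse of $\sstring(\parent(v)) \cdot \cchar(v)$ (or, when this is implicit in \gpt, its nearest explicit ancestor), so that we can set $\pos(v) = \pos^{-1}(v^{-1})$ as justified by \Cref{obs:inversepos}. Observe that $\sstring(\parent(v)) \cdot \cchar(v)$ is exactly a substring $S_i[x..y]$ of some gaps-removed row $S_i$, identifiable from the leftmost leaf covered by $v$ together with $\stringdepth(\parent(v)) + 1$; its reverse is then a substring of $S_i^{-1}$ with known endpoints in $S_i^{-1}$. The map from such an occurrence to the corresponding (explicit or implicit) node of \gpt is precisely a weighted ancestor query on \gpt, which by the data structure of Belazzougui et al.\ \cite{DBLP:conf/cpm/BelazzouguiKPR21} can be answered in constant time after $O(mn)$-time preprocessing; alternatively one can use the affix tree/affix array machinery \cite{DBLP:journals/algorithmica/Maass03,DBLP:journals/tcs/Strothmann07} offering constant-time navigation between the two trees. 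Since there are $O(mn)$ explicit nodes in \gst, iterating this query over all of them costs $O(mn)$, and the lemma follows.

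The main obstacle is the bookkeeping in the last step: we must argue carefully that the string $\sstring(\parent(v)) \cdot \cchar(v)$ is genuinely a substring of one of the strings indexed by \gpt with computable offsets — so that a weighted ancestor query is well-defined — and that when the reverse lands on an implicit node of \gpt, taking the nearest explicit ancestor still yields the correct $\pos$ value (which is where the definition of $\pos^{-1}$ as a minimum over the subtree, established in \Cref{obs:inversepos}, does the work, since an implicit node and the explicit node below it that we would otherwise want cover the same leaves, but the explicit ancestor above covers a superset; one checks that this does not change the minimum ending position among the relevant rows $I_v$, because the extra leaves correspond to rows whose strings are strict prefixes and hence end no earlier). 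Once this correspondence is pinned down, the time bounds are immediate from the cited construction and query results.
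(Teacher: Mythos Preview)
Your approach matches the paper's: compute $\pos^{-1}$ bottom-up on \gpt, then for each explicit node $v$ of \gst locate one occurrence of $\sstring(\parent(v))\cdot\cchar(v)$ and answer a weighted ancestor query in \gpt (or use affix trees/arrays), retrieving $\pos(v)=\pos^{-1}(v^{-1})$ in $O(1)$ each. The paper's own proof is equally brief and cites the same tools.

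The one genuine gap is in your final paragraph. Your argument that taking the first explicit \emph{ancestor} in \gpt still yields the correct value is wrong: the extra leaves covered by that ancestor, compared to the implicit locus of $(Sc)^{-1}$, correspond to occurrences of a proper \emph{suffix} of $Sc$ that are not occurrences of $Sc$ itself, and nothing forces those to end at a later \msa column---so $\pos^{-1}$ of the ancestor can be strictly smaller than $\pos(v)$. Your claim that ``the extra leaves correspond to rows whose strings are strict prefixes and hence end no earlier'' does not hold. The resolution is simpler than the argument you attempt: a weighted ancestor query returns the locus $(u,k)$, where by the paper's own definition $u$ is the explicit node at the \emph{lower} end of the edge containing the implicit point; this $u$ covers exactly the same leaves as the implicit node, so $\pos^{-1}(u)=\pos(v)$ with no further justification needed. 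In short, use the explicit descendant delivered by the locus, not the explicit ancestor.
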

\begin{proof}
As shown in \Cref{obs:inversepos}, the tree structure of \gpt makes it possible to compute $\pos^{-1}(v)$ recursively: similarly to the computation of $\mathcal{L}_0$, \dots, $\mathcal{L}_{n-1}$ in \gst, this takes $O(mn)$ time.
It remains to show that given node $v$ of $\mgst$ we can find node $v^{-1}$ of $\mgpt$ in $O(1)$ time: a data structure representing a synchronized suffix and prefix tree of a string has been developed in the form of the affix tree, or of the corresponding affix array), admitting linear-time construction algorithms~\cite{DBLP:journals/algorithmica/Maass03,DBLP:journals/tcs/Strothmann07}.
However, these results hold under the assumption that the alphabet $\Sigma$ has constant size.
Alternatively, it is straightforward to locate just one occurrence of $\sstring(\parent(v)) \cdot \cchar(v)$ in the \msa, so we can find $v^{-1}$ by answering the corresponding weighted ancestor query in \gpt.
Belazzougui et al.\ recently proved that we can preprocess suffix trees in linear time to be able to answer weighted ancestor queries in constant time, with no assumption about the size of the alphabet~\cite{DBLP:conf/cpm/BelazzouguiKPR21}.
\end{proof}
This concludes the proof of \Cref{theo:pah}:
as we have already shown in \Cref{sub:complexity}, the meaningful prefix-aware right extensions are a drop-in replacement for \Cref{alg:mainalg} of the original meaningful right extensions, so \Cref{lem:pos} implies that the optimal segmentation minimizing the maximum prefix-aware height can be computed in linear time.

\section{Indexing a set of predefined paths in a semi-repeat-free \efg}\label{sect:indexing-efg-with-paths}
The \efg, as a representation of the original \msa sequences, is a lossy data structure: the graph spells the original sequences but also their recombination, with no immediate mechanism deciding whether a pattern actually occurs in the original sequences.
As an enhancement of the \efg framework towards practical applications, we concentrate on a BWT-based index to find the exact matches of a given pattern in a semi-repeat-free \efg, while considering only a predefined set of paths in the graph.

\begin{problem}[Indexing \efgs for the path listing problem]\label{problem:pathlisting}
    Given a semi-repeat-free \efg $G = (V,E,\ell)$ and a set $\mathcal{P}$ of paths in $G$, index $G$ to answer the following query: given a pattern $Q$, find and report paths $P$ of $\mathcal{P}$ containing pattern $Q$ as a substring.
\end{problem}

Consequently, the \efg may be used to identify which of the \msa sequences, from which the \efg has been generated, contains a given pattern.
The overall strategy in achieving this is constructing a pair of indexable texts similarly to what was done in Section~\ref{sect:index}, generating a \emph{BWT index} and augmenting it with additional data structures. To this end, we adapt some of the ideas presented by Norri \cite{Norri2022FounderSegmentations}.

Recall the definition of generalized suffix array and $\bwt_T$ of a text $T$ from \Cref{sect:definitions}.

\begin{definition}
	A \emph{BWT index} is a data structure generated from text $T$ that requires $O(|T|s_\mathsf{LF})$ bits of space and supports the following operations.
	\begin{itemize}
		\item Accessing the $i$-th character in $\mbwt_T$ in constant time.
		\item Given a lexicographic range $[l..r]$, \emph{backward searching} character $c$ can be done in $O(t_\mathsf{LF})$ time. The operation results in the lexicographic range of the suffixes of $T$ in $[l..r]$ that are preceded by $c$.
	\end{itemize}
\end{definition}

Throughout this section, $t_\mathsf{LF}$ and $s_\mathsf{LF}$ are variables the values of which depend on the particular BWT index.

\begin{lemma}
	\label{lem:bwt-index-bckm20}
	\cite[Theorem~6.2]{BCKM20}
	There is a BWT index such that $t_\mathsf{LF} = \log \log \sigma + k$ for any positive $k$, assuming that $\sigma \le |T|$ and $\ceil{\log \sigma} \le \mu$ where $\mu$ is the machine word size in bits. The index data structure can be built in $O(|T|)$ time and $O(|T| \log \sigma)$ bits of working space, and requires $|T| \log \sigma(1 + \frac{1}{k}) + O(|T| \log \log \sigma)$ bits of space. By choosing $k = 1$, we have $t_\mathsf{LF} = \log \log \sigma$ and $s_\mathsf{LF} = \log \sigma$.
\end{lemma}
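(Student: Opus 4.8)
The plan is to recall the ingredients behind \cite[Theorem~6.2]{BCKM20}, since the statement is used as an off-the-shelf black box. First I would observe that both required operations reduce to two primitives on the single string $\mbwt_T$: constant-time \emph{access} to $\mbwt_T[i]$, and $\rrank_c(\mbwt_T, i)$ queries counting the occurrences of a character $c$ in $\mbwt_T[1..i]$. Indeed, backward searching $c$ into a lexicographic range $[l..r]$ outputs the range $[C[c] + \rrank_c(\mbwt_T, l-1) + 1 .. C[c] + \rrank_c(\mbwt_T, r)]$, where $C[c]$ is the number of suffixes of $T$ starting with a character strictly smaller than $c$, an $O(\sigma \log |T|)$-bit array precomputed once at construction time; hence $t_\mathsf{LF}$ is just the cost of one $\rrank$ query plus $O(1)$.

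Second, I would build $\mbwt_T$ in $O(|T|)$ time. Since $\sigma \le |T|$ by assumption, the text is effectively over an integer alphabet, so a linear-time suffix array construction for integer alphabets (e.g.\ DC3 or SA-IS) followed by the identity $\mbwt_T[i] = T'[\sa_{T'}[i]-1]$ yields the transform within the claimed $O(|T|)$ time and $O(|T|\log\sigma)$ bits of working space; alternatively one invokes a direct linear-time BWT construction that avoids materialising the full suffix array.

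Third, and this is the technical core, I would equip $\mbwt_T$ with a sequence representation supporting $O(1)$-time access and $O(\log\log\sigma)$-time $\rrank$. This is also where the parameter $k$ enters: one partitions $\mbwt_T$ into blocks, stores within each block a compact ($\log\sigma$-bounded) encoding sufficient for access, and superimposes a sparsified rank directory whose sampling rate is tuned by $k$; coarsening the sampling (larger $k$) saves space, bringing the total to $|T|\log\sigma(1 + \tfrac{1}{k}) + O(|T|\log\log\sigma)$ bits, at the cost of an extra $O(k)$ scan inside a block per query, giving $t_\mathsf{LF} = \log\log\sigma + k$. The $\log\log\sigma$ term itself comes from resolving within-block character multiplicities with a van Emde Boas / $y$-fast-trie–style predecessor structure over the $\sigma$ alphabet positions, which is exactly why the word-size condition $\lceil\log\sigma\rceil \le \mu$ is needed. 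Setting $k=1$ collapses the statement to $t_\mathsf{LF} = \log\log\sigma$ and $s_\mathsf{LF} = \log\sigma$.

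The main obstacle is this third step: achieving the tiny space redundancy of only $\tfrac{1}{k}|T|\log\sigma + O(|T|\log\log\sigma)$ bits \emph{simultaneously} with $O(\log\log\sigma + k)$-time rank, $O(1)$-time access, and $O(|T|)$ construction time demands the careful block/directory design of \cite{BCKM20} (building on optimal sequence-representation results), rather than a plain wavelet tree, which would only give $O(\log\sigma)$-time rank, or a multiary wavelet tree, which does not hit this precise trade-off. Everything else, namely the reduction of backward search to a single $\rrank$ call and the linear-time construction of the transform, is routine.
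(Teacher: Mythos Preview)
The paper provides no proof for this lemma: it is stated purely as a citation of \cite[Theorem~6.2]{BCKM20} and used thereafter as a black box. There is therefore nothing in the paper to compare your argument against.

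Your sketch is a reasonable high-level outline of the ingredients behind such a result (reduction of backward search to two $\rrank$ calls plus the $C$ array, linear-time suffix-array/BWT construction over integer alphabets, and a sequence representation trading the parameter $k$ for space redundancy), but for the purposes of this paper none of it is required: the lemma is imported wholesale, and reproducing or even sketching its proof would be out of place here. If you intend to keep a sketch, be aware that the precise mechanism in \cite{BCKM20} for the $\log\log\sigma$ term and the role of $k$ is specific to that paper's construction, so attributing it to ``a van Emde Boas / $y$-fast-trie--style predecessor structure'' and an ``$O(k)$ scan inside a block'' is speculative unless you have verified it against the source; a safer wording would simply defer to \cite{BCKM20} for the details.
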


The search algorithm is based on the property of semi-repeat-free \efgs that certain prefixes of the concatenations of the node labels of any two connected nodes are distinct, as shown in the following lemma.

\begin{lemma}
	\label{lem:semi-repeat-free-prefix-property}
	\cite[Lemma~4.4]{Norri2022FounderSegmentations}
	Suppose $G = (V, E, \ell)$ is a semi-repeat-free \efg and $v_1, v_2, w_2 \in V$ are nodes such that $(v_1, w_2) \in E$, $v_2$ is in the same block as $w_2$, and $\ell(v_2) \prefix \ell(w_2)$. There is no path $W = u_1 \dotsm u_t$ of $G$ such that $u_1 \neq v_1$ and $\ell(v_1)\ell(v_2) \prefix \ell(u_1)\dotsm\ell(u_t)$.
\end{lemma}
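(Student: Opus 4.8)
The plan is to argue by contradiction: suppose there is a path $W = u_1 \dotsm u_t$ of $G$ with $u_1 \neq v_1$ and $\ell(v_1)\ell(v_2) \prefix \ell(W)$, where $\ell(W) \coloneqq \ell(u_1)\dotsm\ell(u_t)$, and aim to contradict the semi-repeat-free property. The first thing to record is that, since node labels are non-empty, $\ell(v_1) \properprefix \ell(v_1)\ell(v_2) \prefix \ell(W)$, so $\ell(v_1)$ occurs as a prefix of the label of the path $W$, which starts at $u_1$; the semi-repeat-free property then forces $u_1$ into the block of $v_1$, say $v_1, u_1 \in V^{k}$. Combining $(v_1, w_2) \in E$ with $v_1 \in V^k$ gives $w_2 \in V^{k+1}$, hence $v_2 \in V^{k+1}$, and since every edge of $G$ connects consecutive blocks, the $i$-th node of $W$ lies in $V^{k+i-1}$.

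The heart of the argument is to track where the string $\ell(v_2)$ occurs inside $\ell(W)$: from $\ell(v_1)\ell(v_2) \prefix \ell(W)$ it occurs starting at position $p \coloneqq \lvert \ell(v_1) \rvert + 1 \geq 2$. I would then note that $\ell(u_1) \neq \ell(v_1)$ --- distinct nodes of a block have distinct labels, a property the elastic relaxation retains --- and that $\ell(v_1)$ and $\ell(u_1)$ are comparable (one a prefix of the other, since $\ell(v_1)$ is a prefix of $\ell(u_1)\ell(u_2)\dotsm$), so in fact $\lvert \ell(v_1)\rvert \neq \lvert \ell(u_1)\rvert$. Writing $c_1 = 1 < c_2 < \dotsm$ for the node-boundary offsets of $\ell(W)$ (so $c_j$ is where $u_j$ starts), this gives $p > c_1$ and $p \neq c_2$. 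Now I split on whether $p$ is a node boundary. If $p \notin \{c_1, c_2, \dotsm\}$, the occurrence of $\ell(v_2)$ in $\ell(W)$ begins strictly inside some node's label, hence cannot be a prefix of the label of any path --- impossible in a semi-repeat-free \efg. If $p = c_j$, then $j \geq 3$ (we excluded $j = 1, 2$), so $\ell(v_2) \prefix \ell(u_j u_{j+1}\dotsm u_t)$ is a prefix of the label of the path starting at $u_j$; semi-repeat-freeness forces $u_j$ into $v_2$'s block $V^{k+1}$, contradicting $u_j \in V^{k+j-1}$ with $k+j-1 \geq k+2$. Either way we reach a contradiction, so no such $W$ exists.

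I expect the main obstacle to be the second case ($p = c_j$): it is where one must actually use that path edges go between consecutive blocks in order to pin down the block of $u_j$, and must invoke the exact shape of the semi-repeat-free property --- that an admissible occurrence of a node label both begins at a node boundary and has that boundary node in the same block. Note also that one cannot shortcut the argument by applying \Cref{lemma:keyproperty} to the path $v_1 w_2$ (whose label is $\ell(v_1)\ell(w_2)$), because $\ell(v_2)$ may be a \emph{proper} prefix of $\ell(w_2)$, so $\ell(v_1)\ell(v_2)$ is only a prefix of $\ell(v_1)\ell(w_2)$ and \Cref{lemma:keyproperty} does not apply to it directly; it is precisely this gap that the position-tracking argument above is designed to fill.
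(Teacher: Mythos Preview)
Your proof is correct, and it takes a genuinely different route from the paper's.

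The paper argues by an exhaustive string-decomposition case analysis: after handling the single-node case, it splits on whether $\ell(u_1) = \ell(v_1)\cdot Q$ or $\ell(v_1) = \ell(u_1)\cdot Q$, and on whether $\ell(v_1)\ell(v_2) \prefix \ell(u_1)\ell(u_2)$ or conversely, yielding four cases; the last of these requires a further split into two sub-cases before a contradiction emerges. At no point does the paper explicitly use the block index of the nodes along $W$.

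Your argument is shorter and more structural. You first pin $u_1$ to the block $V^k$ of $v_1$ (via semi-repeat-freeness applied to $\ell(v_1)$), deduce $\lvert\ell(v_1)\rvert \neq \lvert\ell(u_1)\rvert$ from distinctness of labels within a block, and then track the single position $p = \lvert\ell(v_1)\rvert + 1$ against the node boundaries $c_1, c_2, \dotsc$ of $\ell(W)$. This reduces everything to two cases: either $p$ falls strictly inside a node label (direct violation of semi-repeat-freeness for $\ell(v_2)$), or $p = c_j$ with $j \geq 3$, in which case the block arithmetic $u_j \in V^{k+j-1} \neq V^{k+1}$ gives the contradiction. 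Crucially, you exploit that edges go between consecutive blocks, which the paper's proof does not use; this is what lets you replace the paper's six-way string case split with a two-way position split. Your approach also absorbs the paper's separate ``$W$ has one node'' case into Case~A automatically (when $t \le 2$, the boundary set is $\{c_1\}$ or $\{c_1, c_2\}$, both excluded, so Case~A always fires).
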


\begin{table}
	\centering
	\begin{tabular}{llllll}
		\toprule
		\multirow{2}{*}{Cases}		& \multirow{2}{*}{Condition}                                        & \multirow{2}{*}{Node}	& \multicolumn{3}{c}{Node labels}\\
		\cmidrule(lr){4-6}
                                    &                                                                   &						& $i = 1$				& $i = 2$				& $i = 3, \ldots, t$\\
		\midrule
		1,	                        & $\ell(v_1)\ell(v_2) \prefix \ell(u_1)\ell(u_2)$                   & $u_i$					& $\ell(v_1)\cdot Q$	& $\ell(u_2)$			&\\
		2							& $\ell(u_1)\ell(u_2) \prefix \ell(v_1)\ell(v_2)$                   & $v_i$					& $\ell(v_1)$			& $\ell(v_2)$			&\\
		\cmidrule(lr){3-6}
		\multirow{2}{*}{3}			& \multirow{2}{*}{$\ell(v_1)\ell(v_2) \prefix \ell(u_1)\ell(u_2)$}  & $u_i$					& $\ell(u_1)$			& $\ell(u_2)$			&\\
									&                                                                   & $v_i$					& $\ell(u_1)\cdot Q$	& $\ell(v_2)$			&\\
		\cmidrule(lr){3-6}
		\multirow{2}{*}{4}			& \multirow{2}{*}{$\ell(u_1)\ell(u_2) \prefix \ell(v_1)\ell(v_2)$}  & $u_i$					& $\ell(u_1)$			& $\ell(u_2)$			&\\
									&                                                                   & $v_i$					& $\ell(u_1)\cdot Q$	& $\ell(v_2)$	&\\
		\cmidrule(lr){3-6}
		\multirow{2}{*}{4.1}        & \multirow{2}{*}{$S$ may be $\varepsilon$}                         & $u_i$                 & $\ell(u_1)$			& $Q \cdot R \cdot S$	&$T$\\
		                            &                                                                   & $v_i$                 & $\ell(u_1)\cdot Q$    & $Q \cdot R$           &\\
		\cmidrule(lr){3-6}
		\multirow{2}{*}{4.2}        & \multirow{2}{*}{$S\ne\varepsilon$}                                & $u_i$                 & $\ell(u_1)$			& $Q \cdot R$			& $S\cdot T$\\
		                            &                                                                   & $v_i$                 & $\ell(u_1)\cdot Q$    & $Q \cdot R \cdot S$   &\\
		\bottomrule
	\end{tabular}
	\caption{\label{tab:semi-repeat-free-prefix-property-node-labels-and-decomposition}Node labels and decompositions in Lemma~\ref{lem:semi-repeat-free-prefix-property}, given that $W$ consists of at least two nodes.}
\end{table}

\begin{proof}
	We show that if $W$ existed, $G$ would not be semi-repeat-free. We note that if $\ell(v_1)\ell(v_2) \prefix \ell(u_1)\dotsm\ell(u_t)$, then $(\ell(v_1)\ell(v_2))[1, k] = (\ell(u_1)\dotsm\ell(u_t))[1, k]$ for any $k \le |\ell(v_1)\ell(v_2)|$ and vice-versa.
	
	Suppose that $W$ consists of only one node, $u_1$. Then $\ell(v_2)$ is a non-prefix substring of $\ell(u_1)$, a contradiction since $G$ is semi-repeat-free.

	Suppose now that $W$ consists of at least two nodes. There are four possibilities for decomposing the node labels, as summarised in Table~\ref{tab:semi-repeat-free-prefix-property-node-labels-and-decomposition}, where $Q$ and $S$ are some non-empty strings:
\begin{enumerate}
 \item $\ell(u_1) = \ell(v_1) \cdot Q$ and $\ell(v_1)\ell(v_2) \prefix \ell(u_1)\ell(u_2)$.
 If $W$ existed, $\ell(v_2)$ would be a prefix of $Q \cdot \ell(u_2)$ and consequently a non-prefix substring of $\ell(u_1)\ell(u_2)$, a contradiction since $G$ is semi-repeat-free.

 \item $\ell(u_1) = \ell(v_1) \cdot Q$ and $\ell(u_1)\ell(u_2) \prefix \ell(v_1)\ell(v_2)$.
 It follows that $Q\cdot\ell(u_2)$ is a prefix of $\ell(v_2)$ and consequently $\ell(u_2)$ is a non-prefix substring of $\ell(v_2)$, again a contradiction since $G$ is semi-repeat-free.

 \item $\ell(v_1) = \ell(u_1) \cdot Q$ and $\ell(v_1)\ell(v_2) \prefix \ell(u_1)\ell(u_2)$.
 Similarly to the previous case, $Q\cdot\ell(v_2) \prefix \ell(u_2)$ and consequently $\ell(v_2)$ is a non-prefix substring of $\ell(u_2)$, a contradiction.
 \item $\ell(v_1) = \ell(u_1) \cdot Q$ and $\ell(u_1)\ell(u_2) \prefix \ell(v_1)\ell(v_2)$.
 Since $G$ is semi-repeat-free and $Q$ is a suffix of $\ell(v_1)$, $Q$ also has to be a prefix of both $\ell(u_2)$ and $\ell(v_2)$. Furthermore, either $\ell(u_2) = Q\cdot R$ or $\ell(v_2) = Q\cdot R$ where $R$ is a non-empty string since otherwise $Q$ as the node label of either node would appear as a proper suffix of $\ell(v_1)$ and consequently $G$ would not be semi-repeat-free. We now have two possibilities for decomposing $\ell(u_2)$ and $\ell(v_2)$:
 \begin{enumerate}
     \item $\ell(u_2) = Q\cdot R\cdot S$ and $\ell(v_2) = Q\cdot R$.
     If $S = \varepsilon$, $u_2 = v_2$. Since $\ell(v_1)\ell(v_2) = \ell(u_1)\cdot Q\cdot Q\cdot R$, it must be that $Q\cdot Q\cdot R\cdot S \prefix \ell(u_2)\dotsm\ell(u_t)$. It now holds that $\ell(v_2) = Q\cdot R$ is a non-prefix substring of $\ell(u_2)\dotsm\ell(u_t)$:
	\[
	    \begin{array}{rl}
	        Q\cdot \underbrace{Q\cdot R}_{\ell(v_2)} &\prefix \; \underbrace{Q\cdot R\cdot S}_{\ell(u_2)} \mathrlap{\underbrace{\phantom{Q}}_{\ell(u_3)\dotsm\ell(u_t)}} \; \cdot \;\;\; T.\\
	    \end{array}
	\]
	Hence $G$ would not be semi-repeat-free.

    \item $\ell(u_2) = Q\cdot R$ and $\ell(v_2) = Q\cdot R\cdot S$ for some $S \ne \varepsilon$.
	We now have
	
	\[
    	\begin{array}{rl}
    		Q\cdot \underbrace{Q\cdot R\cdot S}_{\ell(v_2)}	&\prefix\quad \underbrace{Q\cdot R}_{\ell(u_2)} \mathrlap{\underbrace{\phantom{Q\cdot R}}_{\ell(u_3)\dotsm\ell(u_t)}} \, \cdot \; S\cdot T.
    	\end{array}
	\]\\
	Since $Q\cdot Q\cdot R\cdot S$ = $Q\cdot\ell(v_2)$, $Q\cdot R = \ell(u_2)$, $S\cdot T = \ell(u_3)\dotsm\ell(u_t)$, and none of $Q, R, S$ and $T$ equals $\varepsilon$, $\ell(v_2)$ is now a non-prefix substring of $\ell(u_2)\dotsm\ell(u_t)$, a contradiction, since $G$ is semi-repeat-free.\qedhere
 \end{enumerate}
\end{enumerate}
\end{proof}

By using this result, we can construct an indexable text analogous to $D_F$ and $D_R$ from \Cref{sect:index} and consisting of the concatenation of the strings spelled by the edges of the \efg.
In order to have each node label at a particular position in the lexicographic order of the suffixes of the text, we also add the labels of the nodes in the last \efg block, that is, nodes that only have in-edges.

\begin{align*}
	D'_F	&= \0 \cdot \prod_{v \in V^{-}} \ell(v)\0		\cdot \prod_{(v, w) \in E} \ell(v)\ell(w)\0		&\textrm{where\;} V^{-} = \{v \in V \mid (v, \cdot) \notin E\}
\end{align*}
Determining if a given pattern occurs as a substring of some path in an \efg can then be done as follows.

\begin{lemma}
	\label{lem:semi-repeat-free-text-representation-lexicographic-range-nestedness}
	Suppose $G = (V, E, \ell)$ is a semi-repeat-free \efg, and $[l..r]$ and $[l'..r']$ are the lexicographic ranges of some $\ell(v)$ and $\ell(w)$ in $\sa_{D'_F}$ (not necessarily in the same order) where $v, w \in V$. Then $l' \le l$ and $r \le r'$ if and only if $v$ and $w$ are in the same block of $G$.
\end{lemma}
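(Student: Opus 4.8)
The plan is to use the semi-repeat-free property to pin down exactly which suffixes of $D'_F$ have a given node label as a prefix, and then read off the nestedness of lexicographic ranges from this. The core structural step is the following. Fix $u \in V$ and let $D'_F[p..]$ be a suffix having $\ell(u)$ as a prefix. Since $\ell(u) \in \Sigma^+$ and $\0 \notin \Sigma$, this occurrence of $\ell(u)$ lies inside a single maximal $\Sigma$-run of $D'_F$, i.e.\ inside a label $\ell(z)$ with $z \in V^{-}$ or inside a concatenation $\ell(a)\ell(b)$ with $(a,b) \in E$; hence it is an occurrence of $\ell(u)$ inside $\ell(P)$ for the one- or two-node path $P$ ($z$, or $a\,b$). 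By \Cref{def:semi-repeat-free}, $\ell(u)$ occurs in $G$ only as a prefix of a path starting from $u$'s block, so $p$ must be the starting position of the label of some node $u^\star$ that lies in the same block as $u$ and is one of $z$, $a$, or $b$. In other words, every position in the lexicographic range of $\ell(u)$ is a ``label-start position'' --- the start of a $V^{-}$ label, or of the first, or of the second, label of an edge pair --- and each label-start position is associated with a unique node, which in this case lies in $u$'s block. The only delicate sub-case is when $\ell(u)$ spills from $\ell(a)$ into $\ell(b)$: then $\ell(a) \properprefix \ell(u) \prefix \ell(a)\ell(b)$, $u^\star = a$, and $a$ is still in $u$'s block since $\ell(u)$ is then a prefix of the path $a\,b$.

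For the ``$\Leftarrow$'' direction, suppose $v$ and $w$ are in the same block and (as in the configurations where the lemma is applied) $\ell(v) \prefix \ell(w)$; the reverse ordering is symmetric. Then every suffix of $D'_F$ having $\ell(w)$ as a prefix also has $\ell(v) \prefix \ell(w)$ as a prefix, so, since suffixes sharing a common prefix form a contiguous range of $\sa_{D'_F}$, the lexicographic range of $\ell(w)$ is nested inside that of $\ell(v)$; taking $[l..r]$ to be the range of $\ell(w)$ and $[l'..r']$ the range of $\ell(v)$ yields $l' \le l$ and $r \le r'$.

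For the ``$\Rightarrow$'' direction, first note that every node label $\ell(u)$ does occur in $D'_F$: if $u \in V^{-}$ it appears in the leading $\prod_{v \in V^{-}}\ell(v)\0$ part, and otherwise $u$ has an out-edge $(u,u')$ so $\ell(u)$ is a prefix of $\ell(u)\ell(u')\0$; hence every node-label range is a non-empty interval of $\sa_{D'_F}$. Now suppose $l' \le l$ and $r \le r'$, and write $A, B$ for the nodes whose label ranges are $[l..r]$ and $[l'..r']$ respectively, so $\{A,B\} = \{v,w\}$ and $[l..r] \subseteq [l'..r']$. Since $[l..r]$ is non-empty, pick $p \in [l..r]$: by the structural step applied with $u = A$, the position $p$ is a label-start position whose associated node $u^\star$ lies in $A$'s block; since $p$ also lies in $[l'..r']$, the structural step applied with $u = B$ forces the node associated with $p$ --- the same $u^\star$, because $p$ determines it uniquely --- into $B$'s block. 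As blocks partition $V$, $A$'s block equals $u^\star$'s block equals $B$'s block, so $v$ and $w$ are in the same block.

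The main obstacle is the structural step: converting the semi-repeat-free property into the assertion that every occurrence of a node label in $D'_F$ begins at the label of a block-mate, while correctly handling occurrences that straddle the boundary between the two labels of an edge pair (the spill-over sub-case) and verifying that a label-start position pins down a single node. Once that is in place, both directions are short; a secondary point worth flagging is that the ``$\Leftarrow$'' direction genuinely relies on $\ell(v)$ and $\ell(w)$ being prefix-comparable, which is guaranteed in the search situations where the lemma is invoked (cf.\ \Cref{lem:semi-repeat-free-prefix-property}).
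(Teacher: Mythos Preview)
Your proof is correct, but the $(\Rightarrow)$ direction is considerably more elaborate than the paper's. The paper argues in two lines: if $[l..r]\subseteq[l'..r']$, take any index $i\in[l..r]$ (which exists, since every node label occurs in $D'_F$); then both $\ell(v)$ and $\ell(w)$ are prefixes of the suffix $\sa_{D'_F}[i]$, so one is a prefix of the other; finally, if $\ell(w)\prefix\ell(v)$ then $\ell(w)$ is a prefix of the single-node path $v$, and the semi-repeat-free property forces $v$ to lie in $w$'s block. Your structural step (classifying all occurrences of a node label in $D'_F$ as starting at a label-start position of a block-mate) reaches the same conclusion via a longer detour. It is a true statement and might be handy elsewhere, but for this lemma the paper's shortcut---nestedness $\Rightarrow$ prefix-comparability $\Rightarrow$ same block---is both simpler and sufficient.

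For the $(\Leftarrow)$ direction, you are right to flag that the argument needs $\ell(v)$ and $\ell(w)$ to be prefix-comparable: without that, two nodes in the same block with incomparable labels (e.g.\ $\mathtt{AC}$ and $\mathtt{AG}$) have disjoint, not nested, ranges. The paper's own proof silently assumes this as well (``Since $v$ and $w$ are in the same block \ldots\ and $\ell(v)\prefix\ell(w)$ or vice-versa''), so the lemma as literally stated is too strong; your explicit acknowledgement that the applications (cf.\ \Cref{lem:semi-repeat-free-prefix-property} and the expanded backward search) always supply this hypothesis is a useful clarification.
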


\begin{proof}
	We prove the Lemma by showing that each of the conditions implies the other.
	
	($\Rightarrow$) Since the lexicographic ranges are nested, either $\ell(v)$ must be a prefix of $\ell(w)$ or vice-versa. Since $G$ is semi-repeat-free, $v$ and $w$ must be in the same block.
	
	($\Leftarrow$) Since $v$ and $w$ are in the same block of $G$ and $\ell(v) \prefix \ell(w)$ or vice-versa, their lexicographic ranges in $\sa_{D'_F}$ must be nested.
\end{proof}

\begin{lemma}[Expanded backward search]
	\label{lem:expanded-backward-search-tentative}
	\cite[Lemma~4.6]{Norri2022FounderSegmentations}
	Suppose that $G = (V, E, \ell)$ where $\ell \colon V \to \{1, \ldots, \sigma\}$ is a non-trivial semi-repeat-free \efg. By preprocessing $G$ in $O(L(|E| + |V|t_\mathsf{LF}))$ time to generate a data structure of $O(L|E|s_\mathsf{LF})$ bits, we can determine for any pattern $Q$ an encoding of nodes $\{v_1, v_2, \ldots, v_k\} \subset V$ such that for all $1 < i \le k$, there is $(v_{i - 1}, w_i) \in E$ with $w_i$ in the same block as $v_i$, $\ell(w_i) \prefix \ell(v_i)$, and $Q$ a substring of $\ell(v_1)\ell(v_2)\dotsm\ell(v_k)$. Here $L = \max_{(v, w) \in E} ( \lvert \ell(v) \rvert+ \lvert \ell(w) \rvert )$. The time requirement of the operation is $O(|Q|t_\mathsf{LF})$, if $Q$ spans the concatenation of at least two node labels in $G$.
    Otherwise, determining all edge labels containing $Q$ takes $O((|Q| + L \cdot \mathrm{occ})\cdot t_\mathsf{LF})$ time where 
    $\mathrm{occ}$ is the number of occurrences of $Q$ in $D'_F$.
\end{lemma}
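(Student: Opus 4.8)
The plan is to reuse the case analysis of \Cref{sect:index}, replacing the two suffix trees on $D_F$ and $D_R$ by a single \bwt index on $D'_F$: backward search on $D'_F$ plays the role of the suffix tree of $D_R$ there, while a bitvector over $\mbwt_{D'_F}$ supplies the block-boundary information that the $\0$-branches of that suffix tree provided.

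\textbf{Preprocessing.} First I would build the \bwt index of $D'_F$ via \Cref{lem:bwt-index-bckm20}. Since $G$ is non-trivial, every node lies on an edge (a node of the first block has an out-edge, every other node an in-edge), so $|V| = O(|E|)$, and each label $\ell(v)$ with $v \in V^-$ has length at most $L$; hence $|D'_F| = O(L|E|)$, the index is built in $O(|D'_F|) = O(L|E|)$ time, and it occupies $O(|D'_F|\, s_\mathsf{LF}) = O(L|E|\, s_\mathsf{LF})$ bits. Next, for every $v \in V$ I would recover the lexicographic range of $\ell(v)$ in $\sa_{D'_F}$ by a backward search of $\ell(v)$, costing $O(|\ell(v)|\, t_\mathsf{LF})$, so $O(L|V|\, t_\mathsf{LF})$ in total; together with building the index this matches the claimed $O(L(|E| + |V|\, t_\mathsf{LF}))$ preprocessing time. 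By \Cref{lem:semi-repeat-free-text-representation-lexicographic-range-nestedness} these ranges form a laminar family over $[1..|D'_F|]$, with two ranges nested exactly when the nodes share a block; I would store this family as a balanced-parentheses tree embedded in $[1..|D'_F|]$ (fitting in $O(|D'_F|)$ bits) and, in addition, a bitvector $B$ marking the positions of $\mbwt_{D'_F}$ whose preceding text character is $\0$, i.e.\ whose suffix starts at a block boundary, so that a $\rrank_1$ query on $B$ decides in $O(1)$ time whether a given lexicographic range contains a block-boundary suffix, and a walk of at most $L$ $\mathsf{LF}$-steps from any occurrence reaches the $\0$ preceding its edge label and identifies the edge.

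\textbf{Query.} For a pattern $Q$ I would run a backward search of $Q$ on $D'_F$, maintaining the range $[l_i..r_i]$ of $Q[i..|Q|]$ for $i = |Q|, \dots, 1$; using $B$ I record the smallest $i^{\star}$ for which $[l_{i^{\star}}..r_{i^{\star}}]$ contains a block-boundary suffix, if any. If no such $i^{\star}$ exists, then no suffix of $Q$ is matched from a block boundary, which is equivalent to $Q$ not spanning the concatenation of two node labels; then every occurrence of $Q$ in $D'_F$ lies inside a single edge label, and I report them all by enumerating the $\mathrm{occ}$ positions in $[l_1..r_1]$ and attributing each to its edge by an $O(L)$-step $\mathsf{LF}$-walk, giving $O((|Q| + L\cdot\mathrm{occ})\, t_\mathsf{LF})$ time (and ``no match'' when $[l_1..r_1]$ is empty). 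Otherwise, set $j = i^{\star} - 1$: the suffix $Q[j+1..|Q|]$ is the longest matching a subpath that spans at most two node labels and starts at a block boundary, which by \Cref{lemma:blocksync} pins down a unique block $V^{r}$; from the laminar family and the range $[l_{j+1}..r_{j+1}]$ I read off the last node $v_k$ of the chain (compactly, as the nested node-label ranges in $V^{r}$ compatible with $Q[j+1..|Q|]$) together with the in-neighbor $w_k$ required by the statement, and then I extend one block to the left by backward-searching $Q[j]$ and continuing, always with the label of the currently fixed node concatenated with the unmatched prefix of $Q$. \Cref{lem:semi-repeat-free-prefix-property} --- used exactly as \Cref{lemma:uniquev} is used in \Cref{sect:index} --- guarantees that at each step the node is forced, so the chain $v_1, \dots, v_k$ (each encoded by a lexicographic range, absorbing the within-block prefix ambiguity) is uniquely determined and $Q$ is a substring of $\ell(v_1)\cdots\ell(v_k)$; since each character of $Q$ is consumed $O(1)$ times over all sub-searches, this costs $O(|Q|\, t_\mathsf{LF})$.

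\textbf{Main obstacle.} The delicate case is when $Q$ spans three or more node labels, so $Q$ itself does not occur contiguously in $D'_F$ and the backward search stalls: as in the ``match spans at least four nodes'' analysis of \Cref{sect:index} one must recover the unique full node labels interior to the occurrence and stitch the decomposition together from a constant number of extra backward searches per block, and it is precisely here that semi-repeat-freeness (\Cref{lem:semi-repeat-free-prefix-property}, \Cref{lemma:uniquev}) keeps every stitching step unambiguous; a secondary, purely bookkeeping difficulty is to make block-boundary tests and laminar-ancestor lookups run in $O(1)$ time within the $O(L|E|\, s_\mathsf{LF})$-bit budget, for which the bitvector $B$ and the balanced-parentheses encoding of the laminar family are tailored.
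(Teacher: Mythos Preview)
Your approach is workable and heads in the right direction, but it diverges from the paper's mechanism in a way worth noting. You port the multi-search ``stitching'' paradigm of \Cref{sect:index} onto a single \bwt index, storing the full laminar family of node-label ranges via balanced parentheses and restarting a fresh backward search from the range of the identified node at each block boundary. The paper instead introduces a single, uniform \emph{range-expansion} step: it stores only two bit vectors $\mathcal{B},\mathcal{E}$ marking, for each block, the lexicographic range of the \emph{shortest-prefix} node label in that block (these ranges are pairwise disjoint, since none of these labels is a prefix of another). During one backward search of $Q$, whenever $\mbwt_{D'_F}[l]=\0$ at the current range $[l..r]$, a single $\rrank$/$\sselect$ on $\mathcal{B},\mathcal{E}$ replaces $[l..r]$ by the enclosing shortest-prefix range $[l'..r']$, and the search simply continues with the next character of $Q$. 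No sub-searches, no re-reading of already-matched node labels: every character of $Q$ is consumed exactly once, and \Cref{lem:semi-repeat-free-text-representation-lexicographic-range-nestedness} plus \Cref{lem:semi-repeat-free-prefix-property} guarantee that each expansion is unambiguous. Your laminar family encodes strictly more than the paper needs; the paper's observation that the shortest-prefix ranges are disjoint is precisely what collapses your balanced-parentheses lookup to a constant-time $\rrank$/$\sselect$. Your ``main obstacle'' paragraph, which frames the $\ge 3$-node case as requiring extra backward searches per block, shows you have not yet seen that the expansion trick dissolves this obstacle entirely. The short-pattern branch (LF-walk of length at most $L$ per occurrence) matches the paper's treatment.
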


\begin{proof}
	First, we construct a BWT index with $D'_F$ as input. This can be done in $O(|D'_F|) = O(L(|E| + |V|)) = O(L|E|)$ time.
	If $Q$ is a substring of any $\ell(u)\ell(v)$ where $(u, v) \in E$, utilising regular backward search in $O(|Q|t_\mathsf{LF})$ time suffices.
	
	To make it possible to handle patterns that span more nodes, as part of preprocessing we also fill two bit vectors, $\mathcal{B}$ and $\mathcal{E}$, of $|D_F|$ elements each. We set $\mathcal{B}[l] = 1$ and $\mathcal{E}[r] = 1$ if and only if $[l..r]$ is the lexicographic range of some $\ell(v)$, $v \in V$, such that $v$ is the shortest prefix of any node label among all node labels of $G$. Formally, the set of such nodes is $B := \{\argmin_v |\ell(v)| \mid \ell(v) \prefix \ell(w) \textrm{ and } v, w \in V\}$. Since none of the labels of the nodes in $B$ is a substring of another, the corresponding lexicographic ranges are disjoint. Additionally $\mathcal{B}$ is prepared for $\mathrm{rank}$ queries and both $\mathcal{B}$ and $\mathcal{E}$ are prepared for $\mathrm{select}$ queries. The preprocessing can be done in $O(L|V|t_\mathsf{LF})$ time.
	
The bit vectors $\mathcal{B}$ and $\mathcal{E}$ are utilized as a part of searching as follows. Suppose $Q$ has been processed such that the resulting lexicographic range is $[l..r]$ and $\bwt_{D'_F}[l] = \0$.
 We then \emph{expand the lexicographic range} by determining a pair of indices $l'$ and $r'$ such that $\mathcal{B}[l'] = \mathcal{E}[r'] = 1$, $l' \le l$ and $r \le r'$, and $l'$ and $r'$ correspond to the same lexicographic range. This can be done by first calculating $r'' := \mathrm{rank}(\mathcal{B}, l)$, $l' = \mathrm{select}(\mathcal{B}, r'')$ and $r' = \mathrm{select}(\mathcal{E}, r'')$, and then checking if in fact $l' \le l$ and $r \le r'$. Suppose $v$ and $w$ are the nodes of $G$ where $[l..r]$ is the lexicographic range of $\ell(v)$ and $[l'..r']$ is that of $\ell(w)$. By Lemma~\ref{lem:semi-repeat-free-text-representation-lexicographic-range-nestedness} there must be a node $w' \in V$ such that $(w', v) \in E$ and $\ell(w) \prefix \ell(w')$.

    When the backward search is continued and it is determined from $\bwt_{D'_F}$ that the preceding character is again $\0$, by Lemma~\ref{lem:semi-repeat-free-prefix-property} the current node can be determined unambiguously from the lexicographic range. The search can then be resumed from another node in the same block by expanding the lexicographic range again. If the pattern is short in the sense that the lexicographic range is never expanded, the conventional method of branching and continuing the search with the preceding characters as determined from $\bwt_{D'_F}$ may be applied until a $\0$ has been found in every branch, reaching the stated time complexity of $O((\lvert Q \rvert + L \cdot \mathrm{occ}) \cdot t_{\mathsf{LF}})$.
\end{proof}

In order to filter out both the false positives that do not occur in the \efg as well as those not in the text from which the \efg was constructed, we first prove a property of the \efg and then show how to make use of some additional data structures.

\begin{lemma}
    \label{lem:semi-repeat-free-edge-label-lex-range-singleton}
    Suppose $G = (V, E, \ell)$ is a semi-repeat-free \efg and $(v, w) \in E$. The (co-)lexicographic range of $\ell(v)\ell(w)\mathbf{0}$ in $\sa_{D'_F}$ is a singleton.
\end{lemma}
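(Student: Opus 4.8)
The plan is to reduce the statement to the claim that $\ell(v)\ell(w)\0$ occurs \emph{exactly once} as a substring of $D'_F$; since $\sa_{D'_F}$ orders the suffixes of $D'_F$, the range corresponding to $\ell(v)\ell(w)\0$ is then a singleton, which is the assertion (for either the lexicographic or the co-lexicographic convention). The substring occurs at least once because the edge $(v,w)$ contributes the factor $\ell(v)\ell(w)\0$ to $D'_F$. Before handling uniqueness I would record one elementary consequence of \Cref{def:semi-repeat-free}: if a node label $\ell(u)$ occurs as a substring of $\ell(P)$ for some path $P = u_1\cdots u_t$ of $G$, then the occurrence must start at a node boundary of $P$, i.e.\ at a position of the form $1+\lvert\ell(u_1)\cdots\ell(u_i)\rvert$ for some $0 \le i < t$, since an occurrence starting strictly inside some $\ell(u_i)$ cannot be re-rooted to any path and hence would be an occurrence that is not a prefix of a path. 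In particular, if $\ell(u)$ is a substring of a single node label $\ell(z)$ then $\ell(u)\prefix\ell(z)$. I will call this the \emph{alignment fact}.

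The structure of $D'_F$ then forces any occurrence of $\ell(v)\ell(w)\0$ to have its gap-free part $\ell(v)\ell(w)$ lying inside one maximal $\Sigma$-run of $D'_F$ and, since it is followed by $\0$, to be a \emph{suffix} of one of the factors $\ell(x)$ with $x\in V^{-}$ or $\ell(a)\ell(b)$ with $(a,b)\in E$; each such factor is the label of a path of $G$ (respectively $P=x$ or $P=a\to b$). If $\ell(v)\ell(w)$ is a suffix of $\ell(x)$, then $\ell(w)$ is a substring of $\ell(x)$ starting at position $\lvert\ell(x)\rvert-\lvert\ell(w)\rvert+1\ge\lvert\ell(v)\rvert+1\ge2$, contradicting the alignment fact. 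If $\ell(v)\ell(w)$ is a suffix of $\ell(a)\ell(b)$, let $o$ be its starting position inside $\ell(a)\ell(b)$; the node boundaries of $P=a\to b$ are the positions $1$ and $\lvert\ell(a)\rvert+1$. When $o>\lvert\ell(a)\rvert$ all of $\ell(v)\ell(w)$ sits inside $\ell(b)$, so $\ell(w)$ is a non-prefix substring of $\ell(b)$, contradicting the alignment fact. When $2\le o\le\lvert\ell(a)\rvert$ the occurrence of $\ell(v)$ itself begins at the non-boundary position $o$ of $\ell(P)$, a contradiction. The remaining case is $o=1$, i.e.\ $\ell(v)\ell(w)=\ell(a)\ell(b)$: if $\lvert\ell(v)\rvert\ne\lvert\ell(a)\rvert$ then $\ell(w)$ begins at position $\lvert\ell(v)\rvert+1$, which is neither $1$ nor $\lvert\ell(a)\rvert+1$ (it lies in $\{2,\dots,\lvert\ell(a)\rvert\}$ if $\lvert\ell(v)\rvert<\lvert\ell(a)\rvert$, and exceeds $\lvert\ell(a)\rvert+1$ if $\lvert\ell(v)\rvert>\lvert\ell(a)\rvert$), again contradicting semi-repeat-freeness; hence $\lvert\ell(v)\rvert=\lvert\ell(a)\rvert$, so $\ell(v)=\ell(a)$ and $\ell(w)=\ell(b)$, and since $\ell(v)=\ell(a)$ occurs as a prefix of $P=a\to b$ the alignment fact puts $a$ in $v$'s block, whence $v=a$ (labels inside a block are distinct) and then $w=b$ (both lie in the block following $v$'s). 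This makes the occurrence the canonical one.

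Assembling the cases shows $\ell(v)\ell(w)\0$ occurs exactly once in $D'_F$, proving the lemma. I expect the main obstacle to be exactly the $o=1$ sub-case and, relatedly, the $o>\lvert\ell(a)\rvert$ sub-case: the point is that when $\ell(v)$ happens to align with the internal node boundary of $a\to b$ --- so that its own occurrence is harmless --- one must \emph{pivot to} $\ell(w)$, which is then necessarily misaligned, and verifying that $\ell(v)$ and $\ell(w)$ can never be simultaneously boundary-aligned unless $v=a$ and $w=b$ is the real content; it collapses to the elementary length inequalities above once the alignment fact (which genuinely uses semi-repeat-freeness, not merely repeat-freeness) is established.
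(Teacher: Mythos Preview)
Your argument is correct. The reduction to ``$\ell(v)\ell(w)\0$ occurs exactly once in $D'_F$'' is the right move, your alignment fact is an accurate reading of \Cref{def:semi-repeat-free}, and the case analysis on the maximal $\Sigma$-run containing the putative second occurrence is complete; the pivot from $\ell(v)$ to $\ell(w)$ in the $o=1$ and $o>\lvert\ell(a)\rvert$ sub-cases is handled cleanly.

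The paper's own proof is a one-liner: it simply says that, since $\mathbf{0}\notin\Sigma$ and $G$ is semi-repeat-free, $\ell(v)\ell(w)\mathbf{0}$ cannot occur elsewhere in $D'_F$. The work hidden behind that sentence is precisely \Cref{lemma:uniquev}, proved earlier, which states that $\ell(v)\ell(w)$ with $(v,w)\in E$ can only appear in $G$ as a prefix of paths starting with $v$; together with the $\mathbf{0}$ forcing $\ell(v)\ell(w)$ to be a suffix of a factor of $D'_F$, this pins down the factor as the one contributed by the edge $(v,w)$. Your proof is thus a self-contained expansion that re-derives the content of \Cref{lemma:uniquev} through the alignment fact and the explicit case split, rather than invoking it. What you gain is independence from that earlier lemma and a proof that makes every use of semi-repeat-freeness visible; what the paper's route buys is brevity, since the heavy lifting was already done in \Cref{lemma:uniquev}.
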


\begin{proof}
    Since $G$ is semi-repeat-free and $\mathbf{0}$ is not in the node label alphabet, $\ell(v)\ell(w)\mathbf{0}$ cannot occur elsewhere in $D'_F$, in particular as a suffix or infix of another node label concatenation.
\end{proof}

\begin{lemma}
    \label{lem:semi-repeat-free-edge-numbering}
    \cite[Lemma~4.8]{Norri2022FounderSegmentations}
    Suppose $G = (V, E, \ell)$ is a non-trivial semi-repeat-free founder graph. The edges $E$ can be assigned distinct numbers with the mappings $\alpha \colon E \to \{1, \ldots, |E|\}$ and $\tilde{\alpha} \colon E \to \{1, \ldots, M_E\}$ where $M_E = 2 + H' + |E|$ and $H'$ is the height of the first block of $G$.
    Given the co-lexicographic rank of any $\ell(v)\ell(w)\0$ in $\sa_{D'_F\$}$ where $(v, w) \in E$, the value of $\tilde{\alpha}((v, w))$ can be determined in constant time. Similarly, given the lexicographic rank of the aforementioned node label concatenation, the value of $\alpha((v, w))$ can be retrieved in constant time by utilizing a data structure of $O(|D'_F|)$ bits that can be prepared in $O(L|E|t_\mathsf{LF})$ time where $L = \max_{(v, w) \in E} |\ell(v)| + |\ell(w)|$.
\end{lemma}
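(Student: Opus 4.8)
The plan is to obtain both numberings directly from \Cref{lem:semi-repeat-free-edge-label-lex-range-singleton}. For every edge $(v,w)\in E$ the string $\ell(v)\ell(w)\0$ is, by that lemma, a unique substring of $D'_F$; hence the suffix of $D'_F\$$ prefixed by $\ell(v)\ell(w)\0$ is unique and occupies a \emph{singleton} position in $\sa_{D'_F\$}$, and (applying the same argument to the reversed text) the co-lexicographic rank of $\ell(v)\ell(w)\0$ is equally well defined. Since distinct edges yield distinct strings $\ell(v)\ell(w)\0$, both the map sending $(v,w)$ to the lexicographic rank of $\ell(v)\ell(w)\0$ and the map sending it to the co-lexicographic rank are injective on $E$. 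It therefore remains to (i) compress the first map onto $\{1,\dots,|E|\}$ within the claimed resources, yielding $\alpha$, and (ii) check that the image of the second is contained in $\{1,\dots,M_E\}$ with $M_E = 2 + H' + |E|$, yielding $\tilde\alpha$.

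For $\alpha$, I would build the \bwt index of $D'_F\$$ exactly as in \Cref{lem:expanded-backward-search-tentative} (constructible in $O(|D'_F|)=O(L|E|)$ time by \Cref{lem:bwt-index-bckm20}) together with a bitvector $B[1..|D'_F\$|]$ initialized to $0$. For each edge $(v,w)\in E$ I run a backward search for the pattern $\ell(v)\ell(w)\0$; this costs $|\ell(v)|+|\ell(w)|+1 = O(L)$ backward steps, i.e.\ $O(L\,t_\mathsf{LF})$ time, and by \Cref{lem:semi-repeat-free-edge-label-lex-range-singleton} it returns a singleton range $[l..l]$, so I set $B[l]:=1$. (The text positions at which the edge-label blocks of $D'_F$ start are known from its construction, so these searches can be enumerated with no extra bookkeeping.) Exactly $|E|$ bits of $B$ get set, the total time is $O(L|E|\,t_\mathsf{LF})$, and adding $\rrank_1$-support to $B$ costs $o(|D'_F|)$ further bits, so $B$ fits in $O(|D'_F|)$ bits. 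To retrieve $\alpha$, given the lexicographic rank $l$ of $\ell(v)\ell(w)\0$ I return $\alpha((v,w)) := \rrank_1(B,l)$ in $O(1)$ time; because the set positions of $B$ are precisely the $\sa_{D'_F\$}$-positions of the edge-label concatenations and these are pairwise distinct, $l \mapsto \rrank_1(B,l)$ restricts to a bijection between them and $\{1,\dots,|E|\}$, so $\alpha$ is a well-defined bijection.

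For $\tilde\alpha$, no auxiliary structure is required: I let $\tilde\alpha((v,w))$ be the co-lexicographic rank of $\ell(v)\ell(w)\0$, which (being already part of the sorted order on which the search operates) is available in $O(1)$ time. Injectivity is immediate from the singleton property above, so the only substantive point is the bound $\tilde\alpha((v,w)) \le M_E$. This follows by counting the co-lexicographically-ordered items that are \emph{not} edge-label concatenations yet may lie between two of them: a constant number of sentinel suffixes — the terminator $\$$ and the suffix associated with the $\0$ preceding the first block of $D'_F$, accounting for the ``$2$'' — together with $H'$ items originating from the first block, whose node labels have no in-edge and therefore occur in $D'_F$ only as prefixes of the concatenations of the edges leaving $V^1$, so in the (reversed) co-lexicographic order each node of $V^1$ contributes one further ranked item; all other boundary suffixes, in particular those coming from the terminal block $V^-$, are absorbed into the ranks of the edge concatenations and contribute nothing. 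Hence $\tilde\alpha$ maps $E$ injectively into $\{1,\dots,2+H'+|E|\} = \{1,\dots,M_E\}$.

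I expect step (ii) — making the ``$2+H'$'' count exact — to be the main obstacle: one must carry out a careful case analysis of which sentinel and boundary suffixes of the (reversed) text can be interleaved among the edge-label concatenations in co-lexicographic order, repeatedly invoking the semi-repeat-free property (via \Cref{lem:semi-repeat-free-prefix-property} and \Cref{lem:semi-repeat-free-edge-label-lex-range-singleton}) to exclude any further collisions. Everything else is routine: the per-edge backward searches and their cost against \Cref{lem:bwt-index-bckm20}, the bitvector-$\rrank_1$ machinery, and the space accounting. Non-triviality of $G$ guarantees $E \neq \emptyset$ and $H' \ge 1$, so $M_E$ is well defined and strictly larger than $|E|$, as the statement tacitly uses.
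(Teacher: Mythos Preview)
Your construction of $\alpha$ via a rank-enabled bitvector marking the singleton lexicographic ranks of the strings $\ell(v)\ell(w)\0$ is exactly the paper's construction (the paper calls the bitvector $\mathcal{D}$), and your choice of $\tilde\alpha$ as the co-lexicographic rank itself is also precisely what the paper does. So the overall approach matches.

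The genuine gap is in your justification of the bound $M_E = 2 + H' + |E|$. Your counting argument has the roles of the first and last blocks reversed. Recall
\[
D'_F = \0 \cdot \prod_{v \in V^{-}} \ell(v)\0 \cdot \prod_{(v, w) \in E} \ell(v)\ell(w)\0,
\]
where $V^-$ is the \emph{last} block (nodes with no out-edges). The prefixes of $D'_F\$$ that end in $\0$ are: the initial $\0$ (one item), the $|V^-|$ items $\ldots\ell(v)\0$ with $v\in V^-$, and the $|E|$ items $\ldots\ell(v)\ell(w)\0$. Together with the whole string (whose reverse begins with $\$$) these are the only prefixes whose reversals begin with $\$$ or $\0$, and since $\$,\0$ are the two smallest symbols they occupy the first $2+|V^-|+|E|$ co-lexicographic ranks. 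So the ``extra'' $H'$ items are the standalone last-block labels, not anything arising from the first block. Nodes of $V^1$ appear in $D'_F$ only inside edge concatenations $\ell(v)\ell(w)\0$ and therefore contribute no separate $\0$-terminated prefixes; your claim that each $v\in V^1$ ``contributes one further ranked item'' while the $V^-$ items are ``absorbed'' is the wrong way round. The paper's own proof names only the two sentinel ranks explicitly and leaves the remaining count implicit; the wording ``first block'' in the statement appears to be a slip for ``last block'' (note that non-triviality still gives $|V^-|\le |E|$, so $M_E=O(|E|)$ either way). Once you correct the source of the $H'$ term, your argument goes through and coincides with the paper's.
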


\begin{proof}
    Since $\$$ and $\0$ are the lexicographically smallest characters in the alphabet of $\bwt_{D'_F}$, an upper bound of the co-lexicographic ranks in question in $\sa_{D'_F\$}$ is $2 + H' + |E| = M_E$, the first two smaller ranks being those of $D'_F\$$ and the initial $\0$. Since the segmentation is non-trivial, there must be at least as many edges in $E$ as there are nodes in the first block, and thus $M_E = O(|E|)$.

    For calculating the values of $\alpha$ we prepare bit vector $\mathcal{D}$ of $D'_F$ elements. We set $\mathcal{D}[l] = 1$ iff. $l$ is the lexicographic rank (see Lemma~\ref{lem:semi-repeat-free-edge-label-lex-range-singleton}) of some $\ell(v)\ell(w)\0$ in $\sa_{D'_F\$}$ where $(v, w) \in E$. After preparing $\mathcal{D}$ for constant time rank queries, the value of $\alpha((v, w))$ can be retrieved with $\rank(\mathcal{D}, l)$.
\end{proof}

\begin{lemma}
	\label{lem:semi-repeat-free-with-paths-count}
	\cite[Lemma~4.10]{Norri2022FounderSegmentations}\footnote{While the original version \cite[Lemma~4.10]{Norri2022FounderSegmentations} utilized maintaining also the co-lexicographic range as part of the backward search, it was observed that (quite trivially) by processing the pattern $Q$ twice, this is not actually necessary. Consequently, the multiplier in the proven time complexity in Theorem~\ref{thm:semi-repeat-free-with-paths-count} was reduced from $\log \sigma$ to $\log \log \sigma$.}
	Suppose $G = (V, E, \ell)$ is a non-trivial semi-repeat-free \efg of $b$ blocks and $\mathcal{P}$ is a set of paths. The index data structure of Lemma~\ref{lem:expanded-backward-search-tentative} can be augmented for listing paths $P$ containing any given pattern. To this end, preprocessing $G$ can be done in $O(L|E|t_\mathsf{LF} + |V||\mathcal{P}|)$ time to generate an index data structure of $O(|V|(|\mathcal{P}| + \log b) + |E|(Ls_\mathsf{LF} + \log H))$ bits, where $b$ is the number of blocks in $G$, $H$ is the maximum block height, and $L = \max_{(v, w) \in E}| \ell(v)|+ |\ell(w)|$.
	Paths containing $Q$ can then be listed in $O(|Q|t_\mathsf{LF} + (|Q| + H^2)\ceil{\frac{|\mathcal{P}|}{\mu}})$ time if $Q$ spans the concatenation of at least two node labels in $G$, and otherwise 
    $O((|Q| + L\cdot \mathrm{occ})\cdot t_\mathsf{LF} + \mathrm{occ}\ceil{\frac{|\mathcal{P}|}{\mu}})$. Here, $\mu$ is the machine word size in bits and $\mathrm{occ}$ is the number of occurrences of $Q$ in $D'_F$.
\end{lemma}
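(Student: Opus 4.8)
The plan is to keep the expanded backward search of \Cref{lem:expanded-backward-search-tentative} as the engine that localizes an occurrence of $Q$ inside $G$, and to equip the index with just enough extra data to translate that localization into a subset of $\mathcal{P}$ via word-parallel bitvector operations. The guiding principle is that a path $P\in\mathcal{P}$ visits exactly one node per block it traverses, and consecutively, so $P$ uses edge $(v,w)$ if and only if $P$ visits both $v$ and $w$; hence the set of paths of $\mathcal{P}$ passing through a fixed chain of nodes is the bitwise-AND of per-node incidence vectors, and the set of paths realizing an occurrence of $Q$ is a union of such intersections over the (few) node choices compatible with that occurrence.

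For the preprocessing, on top of the BWT index of $D'_F$ and the bit vectors $\mathcal{B},\mathcal{E}$ of \Cref{lem:expanded-backward-search-tentative}, I would store: (i) for every $v\in V$ a bitvector $\mathcal{X}_v$ of $|\mathcal{P}|$ bits with $\mathcal{X}_v[p]=1$ iff the $p$-th path of $\mathcal{P}$ visits $v$, which is $|V||\mathcal{P}|$ bits and is filled in $O(|V||\mathcal{P}|)$ time by scanning the paths; (ii) for each node the index of its block, in $O(|V|\log b)$ bits, so that block membership and adjacency of two found nodes are tested in constant time; and (iii) the edge numbering of \Cref{lem:semi-repeat-free-edge-numbering} together with, per edge, the rank of its head within its block (at most $H$, so $O(\log H)$ bits), which maps the lexicographic range of any $\ell(v)\ell(w)\0$ — a singleton by \Cref{lem:semi-repeat-free-edge-label-lex-range-singleton} — back to the edge $(v,w)$ and its endpoints in constant time, reusing the $O(L|E|s_{\mathsf{LF}})$-bit, $O(L|E|t_{\mathsf{LF}})$-time structures already present. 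Summing over (i)--(iii) and the base index yields the claimed $O(|V|(|\mathcal{P}|+\log b)+|E|(Ls_{\mathsf{LF}}+\log H))$ bits and $O(L|E|t_{\mathsf{LF}}+|V||\mathcal{P}|)$ preprocessing time.

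To answer a query $Q$, run the (expanded) backward search. If $Q$ fits inside a single edge-label concatenation $\ell(v)\ell(w)$, standard backward search finds its $\mathrm{occ}$ occurrences in $D'_F$ in $O((|Q|+L\cdot\mathrm{occ})t_{\mathsf{LF}})$ time; each occurrence corresponds via (iii) to an edge $(v,w)$, and the answer is $\bigvee_{(v,w)}(\mathcal{X}_v\wedge\mathcal{X}_w)$, costing an extra $O(\mathrm{occ}\,\ceil{|\mathcal{P}|/\mu})$. Otherwise $Q$ threads at least two edges, and the expanded search returns an encoding of the sequence of blocks touched by the occurrence together with, for each block, the node whose label $Q$ covers; by \Cref{lem:semi-repeat-free-prefix-property} and \Cref{lem:semi-repeat-free-text-representation-lexicographic-range-nestedness} the interior of this thread is essentially forced, while the leftmost and rightmost touched blocks each admit up to $H$ prefix-compatible candidate nodes. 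I would compute once, in $O(|Q|\,\ceil{|\mathcal{P}|/\mu})$ time, the AND of the incidence vectors of the forced interior nodes, and then for the $O(H^{2})$ compatible pairs of a leftmost-block candidate $s$ and a rightmost-block candidate $e$ — discarding pairs not edge-consistent with the interior thread, checked in constant time via (ii)--(iii) — accumulate $\mathcal{X}_s\wedge(\text{interior})\wedge\mathcal{X}_e$ into the result by bitwise OR, in $O(H^{2}\,\ceil{|\mathcal{P}|/\mu})$ time; reporting the set bits of the final vector finishes within the stated bound. Correctness is that every bit set in the result corresponds to a path that by construction visits the fixed chain of nodes and hence spells $Q$, and conversely every path spelling $Q$ induces one of the enumerated node choices, since the occurrence of $Q$ in $G$ is constrained by \Cref{lem:semi-repeat-free-prefix-property} to exactly this limited ambiguity.

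The main obstacle will be the last point: showing that the ambiguity of an occurrence of $Q$ threading several node labels is confined to the two extreme blocks, each contributing $O(H)$ mutually prefix-comparable candidates, while every intermediate block contributes a unique node — so that the enumeration is $O(|Q|+H^{2})$ rather than exponential in the number of blocks. This is precisely where the semi-repeat-free structure of \Cref{lem:semi-repeat-free-prefix-property} must be pushed beyond what was needed in \Cref{sect:index}, since there one only decides whether $Q$ occurs, whereas for listing one must enumerate every path witnessing it. One also has to verify that the word-parallel AND/OR accounting hides no per-position dependence on $H$, which is why the interior intersection is computed once and shared across all extreme-pair choices.
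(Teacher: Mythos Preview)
Your approach is essentially the paper's: per-node incidence bitvectors $U_{\varrho(v)}$ (your $\mathcal{X}_v$), the edge numbering of \Cref{lem:semi-repeat-free-edge-numbering} to recover endpoints from BWT ranks, block numbers per node, and word-parallel AND/OR to intersect the forced interior with the ambiguous ends. The space and time accounting matches.

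The gap is in your characterization of the ambiguity. You conjecture it is confined to the \emph{two} extreme touched blocks, each with $O(H)$ candidates, the interior being forced. This is not quite right in a semi-repeat-free \efg: a node whose label is \emph{fully} covered by $Q$ is still not uniquely determined---only its block is---unless the label of its successor is also fully covered (that is what \Cref{lem:semi-repeat-free-prefix-property} buys). Concretely, writing $Q = P\cdot\ell(v_{j_1})\cdots\ell(v_{j_k})\cdot S$ as the paper does, the nodes $v_{j_1},\dots,v_{j_k}$ are forced, but $S$ itself begins with a full node label in block $j_k{+}1$ that is \emph{not} forced: any $v$ in that block with $\ell(v)$ a prefix of $S$ is a candidate, and for each such $v$ there are up to $H$ candidates $w$ in block $j_k{+}2$. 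Hence the right end contributes a set $\mathcal{T}$ of up to $H^2$ \emph{edges}, not $H$ nodes. Your scheme, fixing a single node in block $j_k{+}1$ from the search encoding, would miss paths that use a different prefix-comparable node there. The paper avoids the resulting $H^3$ blow-up by distributivity: it computes $U_{\mathcal{H}}=\bigvee_u U_{\varrho(u)}$ over the $O(H)$ left candidates and $U_{\mathcal{T}}=\bigvee_{(v,w)}(U_{\varrho(v)}\wedge U_{\varrho(w)})$ over the $O(H^2)$ right edges separately, then takes one AND with the interior, staying within $O(|Q|+H^2)$ bitvector operations. You should also note that the paper treats separately the case where the lexicographic range is expanded exactly once (so $Q=P\cdot S$ with no forced interior at all); there one must partition the left and right edge sets by their shared middle node before combining, which your sketch does not address.
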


\begin{proof}
	We would eventually like to have a compressed representation of the \efg. To that end, we define an order for the nodes of $G$ that preserves the block order.
    Such an order that also causes the nodes to be arranged in the lexicographic order of the node labels is $<_\varrho$ defined as follows.
    Suppose $V_i$ and $V_j$ are blocks in $G$ and $v \in V_i$ and $w \in V_j$.
	
	\[
		v <_\varrho w \Leftrightarrow i < j \textrm{ or } (i = j \textrm{ and } \ell(v) < \ell(w))
	\]
	
	We denote the permutation of the nodes to this order by $\varrho \colon V \to \lbrace 1, \dots, \lvert V \rvert \rbrace$. If needed, counting sort \cite{Seward1954Information} may be applied to sort the nodes by block number in $O(|V|)$ time.
    To sort the nodes in the lexicographic order of the their labels, $O(L|V|t_\mathsf{LF})$ time can be spent to construct a BWT index from the concatenation of the node labels separated by special characters and searching for each node label.

	For the compressed representation, we store various properties of the \efg.
	In integer vector $N$, we store the block numbers of the nodes for which a one was stored in $\mathcal{B}$, in the same order, using $O(L|V|t_\mathsf{LF})$ time and $O(|V| \log b)$ bits of space.
	For determining the cumulative sum of the block heights in constant time, we store the height of the blocks as unary in bit vector $B$ and prepare it for $\mathrm{rank}$ and $\mathrm{select}_0$  
	queries. The preparation can be done in $O(|V|)$ time and requires the same amount of space in bits.
	To be able to determine the numbers of the nodes connected by an edge, we also fill four integer vectors, $A$, $\tilde{A}$, $\mathcal{L}$, and $\mathcal{R}$. In $A$ ($\tilde{A}$, resp.), for each $(v, w) \in E$, we store the rank of $v$ ($w$) within its block in the order defined by $<_\varrho$ and using $\alpha((v, w))$ ($\tilde{\alpha}(v, w)$, resp.) as the key. This can be done in $O(L|E|t_\mathsf{LF})$ time and using $O(|E|\log h)$ bits of space. Finally, for each $(v, w) \in E$, we set $\mathcal{L}[\tilde{\alpha}((v, w))] = \mathcal{R}[\alpha((v, w))] = \varrho(w) - \varrho(v)$. This requires $O(|E| \log H)$ bits of space and can be done in $O(L|E|t_\mathsf{LF})$ time, as described in Lemma~\ref{lem:semi-repeat-free-edge-numbering}. To calculate the values of $\tilde{\alpha}$, we build a BWT index of also ${D'_F}^{-1}$.
	
	To be able to determine on which predefined paths a node occurs, we make an inverted index by storing a bit vector for each node. Suppose $f \colon \mathcal{P} \to \lbrace 1, \ldots, \lvert \mathcal{P} \rvert \rbrace$ is a bijection; then we set $U_{\varrho(v)}[f(P)] = 1$ where $U_{\varrho(v)}$ is such a bit vector and $P \in \mathcal{P}$.
	
	As the final preparation step, each $|\ell(v)|$ where $v \in V$ is such a node for which a one was stored in $\mathcal{B}$, is stored as unary in bit vector $X$ in $\varrho$ order. In other words, we set $X[1] = 1$ followed by a run of zeros and a one for each such node. Consequently, after preparing $X$ for constant-time $\rank_0$ and $\select$ queries, 
	the length of the node labels can be determined in constant time. This preparation step can be done in $O(|V|L) = O(|E|L)$ time and requires $O(|V|L)$ bits of space.
	
	We are now ready to determine the subset of shortest paths in $G$ to which the traversed nodes belong as part of the expanded backward search. First, suppose $Q$ spans the label of one node at most. In this case, the usual approach of continuing the search may be applied
    as shown in Lemma~\ref{lem:expanded-backward-search-tentative}. Given some branch, suppose $l$ is its lexicographic rank before locating the first $\0$ character. The block number $b'$ can be determined with $N[\rank(\mathcal{B}, l)]$, and the value of $\varrho(v)$ by first determining $a := \rank(\mathcal{D}, l)$ and then with $\rank(B, \select_0(B, b' + 1)) + A[a]$, each in constant time. The value of $\varrho(w)$ can be calculated by adding $\mathcal{R}[a]$ to $\varrho(v)$, also in constant time. Finally, the set of paths where $Q$ occurs can be be determined with the bitwise operation $U_{\varrho(v)} \wedge U_{\varrho(w)}$.
	
	Now suppose $Q$ spans more than the label of one node. Then there is a decomposition $Q = P \cdot \ell({v_j}_1) \cdot \ell({v_j}_2) \dotsm \ell({v_j}_k) \cdot S$ where ${v_j}_1 \cdot {v_j}_2 \dotsm {v_j}_k$ is a path of at least one node in $G$ and $P$ and $S$ (for prefix and suffix) are strings at least one of which is non-empty, as seen in \Cref{fig:pattern-decomposition}. Note that $S$ contains a full node label as its prefix, whereas $P$ does not contain any full label. When the lexicographic range has been expanded at least twice, as shown in Lemma~\ref{lem:expanded-backward-search-tentative}, the number of the left node of the edge can be determined as follows. Suppose $l$ is the lexicographic rank of the edge $(v, w) \in E$ before the expansion and $[l'..r']$ is the lexicographic range of some node label $\ell(v')$ after the expansion. The block number $b'$ of both $v$ and $v'$ can be determined with $N[\rank(\mathcal{B}, l')]$, and the value of $\varrho(v)$ like in the previous case, both in constant time.

 \begin{figure}
     \centering
     \includegraphics{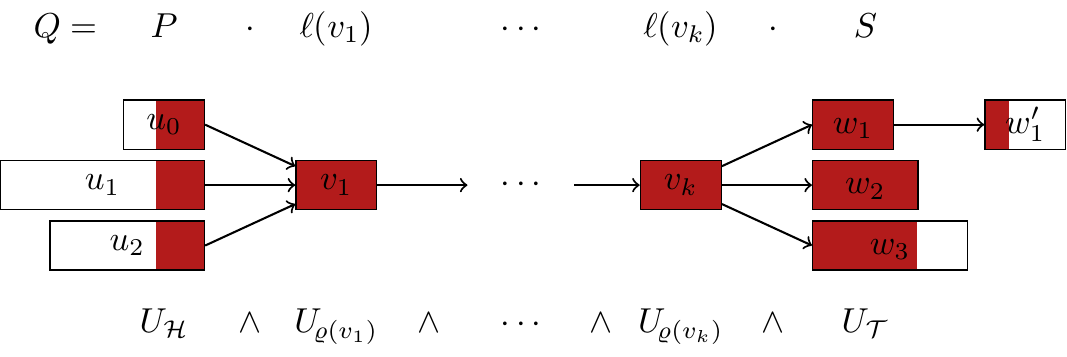}
     \caption{Sub-graph of a semi-repeat-free \efg $G$ showing the different possible occurrences of $Q$ in $G$ decomposed as $P \cdot \ell(u_1) \cdots \ell(u_k) \cdot S$ (marked in blue).
     Below, the bitvectors corresponding to the indexed paths where the corresponding substring of $Q$ appears. Their bitwise AND returns all paths in $\mathcal{P}$ containing $Q$.}
     \label{fig:pattern-decomposition}
 \end{figure}
	
	Considering $P$ and $S$, we would like to determine the nodes in the sets $\mathcal{H} = \{u \mid (u, {v_j}_1) \in E \textrm{ and } P \textrm{ is a suffix of } \ell(u)\ell({v_j}_1)\}$ and $\mathcal{T} = \{(v, w) \mid (v, w) \in E \textrm{ and } S \prefix \ell(v)\ell(w)\}$.

    We denote $H' = P \cdot \ell({v_j}_1)$.
Suppose $[l..r]$ is the co-lexicographic range of $H' \cdot \0$. The values $\tilde{\alpha}(\{(u, {v_j}_1) \mid u \in \mathcal{H}\})$ are now $l, l + 1, \ldots, r$, which can be used as keys for $\mathcal{L}$ to calculate $\varrho(\mathcal{H})$ from $\varrho({v_j}_1)$. There are at most $H$ nodes in $\mathcal{H}$. For $S$, suppose $(u, w) \in \mathcal{T}$ and $r'$ is its lexicographic rank before expansion. The value of $\varrho(u)$ can be determined like before from the expanded lexicographic range and $\varrho(w) = \varrho(u) + \mathcal{R}[r']$. It holds that $|\mathcal{T}| \le H \cdot \min\{H, |Q|\}$.
 The node numbers that correspond to the edges in $\mathcal{T}$ can thus be determined in $O(h \cdot \min\{h, |Q|\})$ time. The predefined paths on which the found paths occur can be determined using bitwise operations with $U_{\mathcal{H}} \wedge U_{\varrho({v_j}_1)} \wedge U_{\varrho({v_j}_2)} \wedge \ldots \wedge U_{\varrho({v_j}_k)} \wedge U_{\mathcal{T}}$ where
	
	\begin{align*}
		U_{\mathcal{H}} &:= \bigvee_{(u, {v_j}_1) \in \mathcal{H}}(U_{\varrho(u)} \wedge U_{\varrho({v_j}_1)}) \textrm{{\quad}and }\\
		U_{\mathcal{T}} &:= \bigvee_{(v, w) \in \mathcal{T}}(U_{\varrho(v)} \wedge U_{\varrho(w)}).
	\end{align*}

    In the edge case where $S \prefix \ell(v)$ for some $v$ that is located in the last block of $G$, instead of using $\mathcal{T}$, we proceed as follows. Suppose $\mathcal{T'} = \{u \in V \mid S \prefix \ell(u) \}$. Also suppose the lexicographic ranges before and after the expansion are $[\tilde{l}..\tilde{r}]$ and $[\tilde{l'}..\tilde{r'}]$ respectively. From $\tilde{l}$ we can determine the number $\varrho(u')$ of the node $u' \in B$ such that $\ell(u') \prefix S$. The numbers of the nodes in $\mathcal{T'}$ are now $\varrho(u') + \tilde{l'} - \tilde{l}$, \ldots, $\varrho(u') + \tilde{r'} - \tilde{l}$, since the nodes are assigned numbers in the lexicographic order of their labels in addition to the block numbers.
    	
	It remains to be shown how to handle the case where the lexicographic range has been expanded exactly once. In this case, the traversed paths do not necessarily converge to one or more nodes, i.e.\ $Q = P \cdot S$ for some non-empty strings $P$ and $S$. However, since the lexicographic range was expanded, $S$ can be decomposed such that $S = \ell(v') \cdot S'$ where $S'$ is some (possibly empty) string. By Lemma~\ref{lem:semi-repeat-free-text-representation-lexicographic-range-nestedness}, for all $(v, w) \in E$ where $S \prefix \ell(v)\ell(w)$, $v'$ must be in the same block as $v$.
	
	Suppose $[l..r]$ and $[l_{v'}..r_{v'}]$ are the lexicographic ranges before and after the expansion, respectively. We can determine $|\ell(v')|$ from $X$ and $|P|$ by counting the searched characters. We can then find all the edges in the set $\mathcal{H'} = \{(u, v) \mid (u, v) \in E \textrm{, } \ell(v') \prefix \ell(v) \textrm{ and some non-empty prefix of } Q \textrm{ is a suffix of } \ell(u)\ell(v)\}$ by forward searching $Q$ one character at a time, testing for $\0$ on each step after $|P| + |\ell(v')|$ characters have been processed. The block number of $v'$ can be determined like before and, for any edge $(u, v) \in \mathcal{H'}$, we can use $\tilde{A}$ and $\mathcal{L}$ to determine both $\varrho(u)$ and $\varrho(v)$. From $[l..r]$ we can determine the values of $\alpha((v, w))$ for all $(v, w) \in \mathcal{T}$, which in turn can be used for calculating the values of $\varrho(w)$ from the values retrieved from $\mathcal{R}$.
	
	The matching paths can be identified as follows. Given a node ${v_j}_0$, we find the subsets $\mathcal{H'}({v_j}_0) = \{(u, {v_j}_0) \mid (u, {v_j}_0) \in \mathcal{H'}\}$ and $\mathcal{T}({v_j}_0) = \{({v_j}_0, w) \mid ({v_j}_0, w) \in \mathcal{T}\}$. We then calculate
	
	\begin{align*}
		U_{\mathcal{H'}({v_j}_0)} &:= \bigvee_{(u, {v_j}_0) \in \mathcal{H'}({v_j}_0)}(U_{\varrho(u)} \wedge U_{\varrho({v_j}_0)}) \textrm{{\quad}and }\\
		U_{\mathcal{T}({v_j}_0)} &:= \bigvee_{({v_j}_0, w) \in \mathcal{T}({v_j}_0)}(U_{\varrho({v_j}_0)} \wedge U_{\varrho(w)}),
	\end{align*}
	and finally $U_{\mathcal{H}'({v_j}_0)} \wedge U_{\mathcal{T}({v_j}_0)}$. Since there are at most $2H^2$ edges in $\mathcal{H'}$ and $\mathcal{T}$ in total, the bitwise operations may be done in $O(H^2 \ceil{\frac{|\mathcal{P}]}{\mu}})$ time by partitioning the edges by the common nodes.
\end{proof}

By combining the result of Lemma~\ref{lem:semi-repeat-free-with-paths-count} with the BWT index of Lemma~\ref{lem:bwt-index-bckm20}, we have the following result.

\begin{theorem}
	\label{thm:semi-repeat-free-with-paths-count}
	\cite[Theorem~4.11]{Norri2022FounderSegmentations}
	Suppose $G = (V, E, \ell)$ is a non-trivial semi-repeat-free \efg of $b$ blocks and $\mathcal{P}$ is a set of paths. Graph $G$ can be preprocessed for
    solving the path listing problem (Problem~\ref{problem:pathlisting})
    in $O(L|E|\log \log \sigma + |V||\mathcal{P}|)$ time to build a data structure of $O(|V|(|\mathcal{P}| + \log b) + |E|(L \log \sigma + \log h))$ bits, assuming that $\sigma \le L|E|$ and $\log \sigma \le \mu$.
    The path listing problem can then be solved, given pattern $Q$,
    in $O(|Q| \log \log \sigma + (|Q| + H^2)\ceil{\frac{|\mathcal{P}|}{\mu}})$ time if $Q$ spans the concatenation of at least two node labels in $G$, and otherwise $O((|Q| + L\cdot \mathrm{occ})\log \log \sigma + \mathrm{occ}\ceil{\frac{|\mathcal{P}|}{\mu}}))$ time. Here, $L = \max_{(v, w) \in E}|\ell(v)|+ |\ell(w)|$, $H$ is the maximum block height, $\mu$ is the machine word size in bits and $\mathrm{occ}$ is the number of occurrences of $Q$ in $D'_F$.
\end{theorem}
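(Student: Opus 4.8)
The plan is to derive the statement as a direct instantiation of the generic path-listing index of \Cref{lem:semi-repeat-free-with-paths-count}, whose preprocessing time, space, and query bounds are all expressed in terms of the abstract BWT-index parameters $t_\mathsf{LF}$ and $s_\mathsf{LF}$, by plugging in the concrete BWT index of \Cref{lem:bwt-index-bckm20}.

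First I would fix $k = 1$ in \Cref{lem:bwt-index-bckm20}, obtaining $t_\mathsf{LF} = \log\log\sigma$ and $s_\mathsf{LF} = \log\sigma$. Before substituting, I would verify that the preconditions of \Cref{lem:bwt-index-bckm20} hold for every text on which the construction of \Cref{lem:semi-repeat-free-with-paths-count} builds a BWT index, namely $D'_F$, its reverse ${D'_F}^{-1}$ (needed for the co-lexicographic ranks underlying $\tilde\alpha$), and the concatenation of the node labels separated by distinct markers (used to sort the nodes by label). Each such text has length $O(|D'_F|) = O(L(|V| + |E|)) = O(L|E|)$: indeed, in a non-trivial semi-repeat-free \efg every node outside the first block has an in-edge and every node outside the last block has an out-edge, so $|V| = O(|E|)$ and $|D'_F| = \Theta(L|E|)$. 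Hence the hypothesis $\sigma \le L|E|$ gives $\sigma \le |T|$ for each of these texts, while $\log\sigma \le \mu$ is assumed directly; moreover each such index is built in $O(|T|) = O(L|E|)$ time by \Cref{lem:bwt-index-bckm20}, which is dominated by the $O(L|E|\log\log\sigma)$ term already present in \Cref{lem:semi-repeat-free-with-paths-count}.

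Then I would carry the substitution $t_\mathsf{LF} \to \log\log\sigma$ and $s_\mathsf{LF} \to \log\sigma$ through the four bounds of \Cref{lem:semi-repeat-free-with-paths-count}: the preprocessing time $O(L|E|t_\mathsf{LF} + |V||\mathcal{P}|)$ becomes $O(L|E|\log\log\sigma + |V||\mathcal{P}|)$; the index size $O(|V|(|\mathcal{P}| + \log b) + |E|(Ls_\mathsf{LF} + \log H))$ becomes $O(|V|(|\mathcal{P}| + \log b) + |E|(L\log\sigma + \log H))$; the query time for a pattern spanning at least two node labels, $O(|Q|t_\mathsf{LF} + (|Q| + H^2)\ceil{|\mathcal{P}|/\mu})$, becomes $O(|Q|\log\log\sigma + (|Q| + H^2)\ceil{|\mathcal{P}|/\mu})$; and the remaining-case bound $O((|Q| + L\cdot\mathrm{occ})t_\mathsf{LF} + \mathrm{occ}\ceil{|\mathcal{P}|/\mu})$ becomes $O((|Q| + L\cdot\mathrm{occ})\log\log\sigma + \mathrm{occ}\ceil{|\mathcal{P}|/\mu})$, which is exactly what is claimed.

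There is no genuinely hard step: the argument is a mechanical composition of two black boxes, \Cref{lem:semi-repeat-free-with-paths-count} and \Cref{lem:bwt-index-bckm20}. The only point that needs a moment's care is the alphabet-size bookkeeping — making sure that the single hypothesis $\sigma \le L|E|$ stated in the theorem indeed covers the underlying text $D'_F$ as well as all the auxiliary texts (its reverse, and the marker-separated node-label concatenation) used inside \Cref{lem:semi-repeat-free-with-paths-count}, which follows from $|D'_F| = \Theta(L|E|)$ together with $|V| = O(|E|)$ for a non-trivial \efg.
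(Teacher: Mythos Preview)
Your proposal is correct and matches the paper's own argument, which is literally the one sentence ``By combining the result of \Cref{lem:semi-repeat-free-with-paths-count} with the BWT index of \Cref{lem:bwt-index-bckm20}, we have the following result.'' Your extra care in checking that the hypothesis $\sigma \le L|E|$ suffices for all auxiliary texts is a nice addition the paper leaves implicit.
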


\section{Discussion}\label{sect:discussion}
We improved and extended the theory on Elastic Founder Graphs, overcoming the difficulty of building and indexing a graph representing a multiple sequence alignment \emph{with gaps}:
we improved from $O(NH^2)$ to $O(NH)$ the space occupied by the index enabling linear-time pattern matching on \efgs, where $N$ is the total label length and $H$ is the height of the graph (\Cref{theo:index});
by formulating and solving the exclusive ancestor set problem on trees (\Cref{prob:exclusiveancestorset} and \Cref{lem:exclusiveancestorset}),
we improved to linear-time the $O(mn \log m)$ and $O(mn \log m + n \log \log n)$ construction algorithms segmenting \msas while maximizing the number of blocks or minimizing the maximum segment length (\Cref{corol:maxblocks,corol:minmaxlength});
we developed an $O(mn)$-time segmentation algorithm minimizing the maximum block height in the gapless setting (\Cref{theo:gaplessheight}), and extended to the general case the solution to work in parameterized linear-time for constant-sized alphabets, but ultimately in quadratic-time (\Cref{theo:generalheight}); we studied the refined measure of \emph{prefix-aware height} and obtained an $O(mn)$ construction algorithm minimizing the maximum for this height (\Cref{theo:pah}); finally, we showed an extension of \efgs indices to solve path listing queries on a set of indexed \efg paths, and thus on the original \msa sequences (\Cref{problem:pathlisting} and \Cref{thm:semi-repeat-free-with-paths-count}).

The segmentation algorithms we developed (\Cref{alg:minmaxlength,alg:mainalg}) accept in input a range-based description of the valid segments and the corresponding score for each chosen segment.
We believe that they can be applied to any setting concerned with the optimal constrained segmentation of aligned sequences, and they can be the basis for a further generalization to different score criteria.

\efgs naturally compress the chosen semi-repeat-free segments, if they contain aligned substrings in the \msa spelling the same string.
The optimization metrics we studied---the number of blocks, the segment length, and the block height---target a specific feature of the \efg and can be considered a heuristic solution to a compressed graph index: indeed, the space upper bound $O(NH)$ of \Cref{theo:index} is equivalent to $O(L\lvert E \rvert)$, where $L = \max_{v \in V} \lvert\ell(v)\rvert$ is the maximum label length.
Still, the exact size of the index, strings $D_R$ and $D_F$ of \Cref{sect:index}, can be considered a compression metric on its own, and we propose the following segmentation problem.
\begin{problem}[Total edge length segmentation problem]
    Find a semi-repeat-free segmentation $S$ of \msamn minimizing the total edge label length $\sum_{(u,v) \in E} \lvert \ell(u) \rvert + \lvert \ell(v) \rvert$ of the resulting founder graph $G(S) = (V,E,\ell)$. 
\end{problem}

There are also limitations affecting the \efg framework:
\begin{itemize}
    \item The semi-repeat-free property for \msa segments does not hold if a segment contains a string $T$ that is spelled in some different part of the \msa.
    Indeed, consider string $T[1..t]$ occurring in the \msa at columns $o_1$ and $o_2$: any segment $[x..y]$ such that $o_1 \le x < y \le o_1 + t$ or $o_2 \le x < y \le o_2 + t$ cannot be semi-repeat-free and this will limit the recombination and the compression of any indexable segmentation.
    The indexing results of \Cref{sect:index} demonstrate that the edge labels $\ell(u) \ell(v)$ effectively simulate the navigation of the \efg, so a future line of research could study how methods and data structures for indexing repetitive text collections---strings $\ell(u) \ell(v)$ in this case---can be used to further compress \efgs;
    \item The segmentation of an \msa does not allow for blocks containing empty strings (\Cref{ass:empty}), limiting the power of semi-repeat-free segments in long gap runs of the \msa.
    An extension of our results in this sense---allowing empty strings and stripping them in the \efg, while still maintaining the paths corresponding to the original sequences---would naturally result in a \emph{founder directed acyclic graph}, instead of an elastic block graph, and is the subject of our ongoing research.
\end{itemize}

Our preliminary implementation of the discussed ideas can be found at \url{https://github.com/tsnorri/founder-graphs-semi-repeat-free}. We are working on a comparison with implementations of other algorithms suitable for the purpose.

\bibliographystyle{plainurl}
\bibliography{biblio}{}
\end{document}